\pdfoutput=1
\documentclass[a4paper,onecolumn,unpublished,nopdfoutputerror]{quantumarticle}

\usepackage{microtype}
\usepackage{mathtools,amssymb,amsthm,bm}
\usepackage{booktabs}
\usepackage{graphicx}
\usepackage{tikz}
\usepackage{algorithm}
\usepackage{algpseudocode}
\usepackage{array}
\usepackage{enumitem}
\usepackage{hyperref}
\usetikzlibrary{arrows.meta,positioning,calc,patterns}

\setlist[itemize]{topsep=3pt,itemsep=2pt,leftmargin=1.35em}
\setlist[enumerate]{topsep=3pt,itemsep=2pt,leftmargin=1.45em}
\hypersetup{%
  pdftitle={Decision Kernels for Quantum Error Mitigation: Why Accuracy Gains Need Not Improve Downstream Decisions},
  pdfauthor={Vicenzo Scavino},
  pdfcreator={LaTeX with quantumarticle class}}

\theoremstyle{plain}
\newtheorem{theorem}{Theorem}[section]
\newtheorem{proposition}[theorem]{Proposition}
\newtheorem{lemma}[theorem]{Lemma}
\newtheorem{corollary}[theorem]{Corollary}
\theoremstyle{definition}
\newtheorem{definition}[theorem]{Definition}
\newtheorem{assumption}[theorem]{Assumption}
\newtheorem{remark}[theorem]{Remark}

\newcommand{\E}{\mathbb{E}}
\newcommand{\Prob}{\mathbb{P}}
\newcommand{\Var}{\operatorname{Var}}
\newcommand{\Cov}{\operatorname{Cov}}
\newcommand{\tr}{\operatorname{tr}}
\newcommand{\rank}{\operatorname{rank}}
\newcommand{\diag}{\operatorname{diag}}

\newcommand{\calX}{\mathcal{X}}
\newcommand{\calM}{\mathcal{M}}

\newcommand{\calC}{\mathcal{C}}
\newcommand{\calA}{\mathcal{A}}
\newcommand{\calH}{\mathcal{H}}
\newcommand{\calQ}{\mathcal{Q}}
\newcommand{\calR}{\mathcal{R}}
\newcommand{\calS}{\mathcal{S}}
\newcommand{\calF}{\mathcal{F}}
\newcommand{\calN}{\mathcal{N}}
\newcommand{\MSE}{\mathrm{MSE}}

\newcommand{\Raw}{\mathrm{Raw}}
\newcommand{\ZNE}{\mathrm{ZNE}}
\newcommand{\CDR}{\mathrm{CDR}}
\newcommand{\PEC}{\mathrm{PEC}}

\newcommand{\argmin}{\operatorname*{arg\,min}}

\newcommand{\DRI}{\mathrm{DRI}}
\newcommand{\GRI}{\mathrm{GRI}}
\newcommand{\GVS}{\mathrm{GVS}}
\newcommand{\linspan}{\operatorname{span}}

\newcommand{\cl}{\operatorname{cl}}
\newcommand{\interior}{\operatorname{int}}

\newcommand{\Id}{I}
\newcommand{\norm}[1]{\left\lVert #1\right\rVert}
\newcommand{\R}{\mathbb{R}}
\newcommand{\1}{\mathbf{1}}

\newcommand{\op}{\mathrm{op}}

\newcommand{\ind}{\mathbf{1}}
\newcommand{\KL}{\operatorname{KL}}

\newcommand{\TV}{\operatorname{TV}}

\title{Decision Kernels for Quantum Error Mitigation: Why Accuracy Gains Need Not Improve Downstream Decisions}
\author{Vicenzo Scavino}
\orcid{0009-0000-2472-9785}
\affiliation{Independent Researcher}
\email{vicenzoscavino52@gmail.com}
\date{July 2, 2026}

\begin{document}

\begin{abstract}
Quantum error mitigation (QEM) is usually benchmarked by expectation-value accuracy, but many near-term workflows use those values only to make downstream choices: argmin selection, ranking, top-$k$ filtering, optimizer-step acceptance, or phase labeling.  This creates a structural mismatch: accuracy is measured in the ambient landscape space $\R^n$, whereas shift-invariant decisions depend only on gaps.

This paper develops a quotient-space theory of finite-shot QEM for downstream decisions.  Every finite shift-invariant decision factors through $\R^n/\linspan\{\1\}$, or equivalently through a contrast map $L$.  The minimal decision-complete object is therefore the residual gap law $\mathcal L(LE_m)$; in Gaussian finite-shot regimes it is summarized by effective margins and the decision kernel $\Sigma_m=LK_mL^\top$.  The QEM-specific point is that $\Sigma_m$ is not free: it is the pullback of shared physical device noise through the mitigation map.  We prove quotient factorization, gap-law minimality, a marginal no-go theorem, a QEM pullback theorem, Gaussian decision-risk formulas, and a fixed-allocation shot-level converse.

Finite-shot Qiskit Aer simulations demonstrate the predicted divergence in the evaluated regimes.  Positive-affine Clifford-data regression (CDR) is decision-flat while improving mean-squared error (MSE), and probabilistic error cancellation (PEC) can improve accuracy while worsening decision risk through sampling overhead.  Decision-aware selection modestly reduces static held-out failure relative to accuracy-based selection, often by retaining Raw, but the dynamic success target is not reached.  Pre-registered stress tests under a calibrated device-noise model and on a hardware micro-cell probe robustness beyond these regimes: the device-noise test shows a significant but sub-practical-threshold benefit from adopting readout-error mitigation, and the hardware test supports the covariance prediction while its stricter decision-direction criterion is not met, so no device-level claim is made.  The operational implication in the evaluated regimes is to select QEM methods through residual gap geometry, not from expectation-value accuracy alone.
\end{abstract}
\maketitle

\section{Introduction}

Quantum error mitigation was introduced to estimate expectation values on noisy devices before full fault-tolerant error correction.  Zero-noise extrapolation (ZNE) and probabilistic error cancellation (PEC) were introduced as resource-light methods for short-depth circuits \cite{Temme2017,LiBenjamin2017}; practical variants and error-model limitations were developed in Ref.~\cite{Endo2018}.  Clifford-data regression (CDR) and later data-driven frameworks such as variable-noise CDR and UNITED learn mitigation maps from auxiliary circuits and variable-noise data \cite{Czarnik2021,Strikis2021,Lowe2021,Bultrini2023}.  Early superconducting-processor demonstrations showed that mitigation can extend variational calculations \cite{Kandala2019}, while Mitiq provides a software framework for applying mitigation protocols \cite{LaRose2022}.  Reviews now organize QEM around accuracy, overhead, model assumptions, and demonstrations \cite{Cai2023}.  Recent work also analyzes systematic error from model violation and uncertainty-aware mitigation design \cite{Govia2025,Prodius2026}.

Most QEM benchmarks remain expectation-value benchmarks \cite{Cai2023,Kandala2019}.  This is natural when the final output is a single observable.  Many near-term workflows, however, do not stop at an observable.  They use many expectation values to make a downstream decision: choose an optimizer step, select a best parameter, rank candidate circuits, retain a top-$k$ set, or decide whether an improvement threshold has been crossed.  In these settings, an estimator can improve pointwise expectation-value MSE while worsening the final decision.

The reason is structural.  MSE is a statistic in the ambient landscape space $\R^n$:
\begin{equation}
    \MSE_m(B)=\frac1n\sum_x \left(a_m(x)^2+K_m^{(B)}(x,x)\right),
\end{equation}
where $a_m$ is residual bias and $K_m^{(B)}$ is the finite-shot residual covariance kernel at shot budget $B$.  A ranking or argmin decision is invariant under adding a constant to every landscape value.  It is controlled by differences, not absolute coordinates.  Therefore the decision lives in the quotient space
\begin{equation}
    \calQ=\R^n/\linspan\{\1\},
\end{equation}
not in $\R^n$.  In Gaussian or second-order finite-shot regimes, the covariance object induced on this quotient is the decision kernel
\begin{equation}
    \Sigma_m=L K_m L^\top,
\end{equation}
where $L$ maps a landscape to its relevant gap vector.

This paper turns that observation into a theory of decision-aware QEM.  Its central hypothesis is:
\begin{quote}
\emph{Estimator accuracy and decision reliability are controlled by different objects.  Under ranking-preserving common-mode or positive-affine noise, accuracy-oriented QEM can reduce expectation-value error while leaving downstream decisions unchanged or worse.  Decision reliability is instead governed by the residual gap law and, in Gaussian finite-shot regimes, by its decision kernel.}
\end{quote}
The quotient-space and gap-law results below prove the decision representation in this statement; its method-level consequences are then evaluated numerically under declared simulation models.

\paragraph{Contribution and roadmap.}
The central structural claim is simple: expectation-value accuracy is not decision-complete.  For shift-invariant downstream tasks the relevant object is not the residual field in $\R^n$ but its image in gap space; equivalently, QEM methods should be compared through the residual gap law $\mathcal L(LE_m)$, and in Gaussian finite-shot regimes through the effective margin--kernel pair $(\mu_m,\Sigma_m)=(\Delta+La_m,\,LK_mL^\top)$.  The contribution is not the immediate observation that gap differences matter; it is that the QEM decision object is the residual gap law induced by the physical pullback of the mitigation map, together with the finite-shot method selection this geometry implies.  Two coupled cores organize the contribution: the decision-completeness of the residual gap law exposes pointwise MSE as decision-incomplete, while the QEM pullback restriction confines the decision kernel to the physical family generated by device noise and the mitigation map rather than the free positive-semidefinite cone.  To our knowledge, expectation-value QEM benchmarks and generic ranking-and-selection theory do not jointly identify both this minimal decision object and the physical family to which it is confined.
\begin{enumerate}[label=\textbf{S\arabic*.}]
    \item \textbf{Decision space.}  Shift-invariant decisions do not live in $\R^n$; they factor through $\R^n/\linspan\{\1\}$, equivalently through gap maps $L$ (Theorem~\ref{thm:quotient_factorization}).
    \item \textbf{Decision object.}  The minimal decision-complete object is the residual gap law $\mathcal L(LE_m)$; pointwise marginals, variances, confidence intervals, and MSE do not suffice, which is the basis of the marginal no-go theorem (Theorems~\ref{thm:minimal_gap_law} and~\ref{thm:nogo_marginal}).
    \item \textbf{QEM geometry.}  The kernel $\Sigma_m=LK_mL^\top$ is not an arbitrary covariance: it is the gap-space pullback of shared physical noise through the mitigation map, making the theory QEM-specific (Theorems~\ref{thm:diagonal_offdiagonal} and~\ref{thm:qem_pullback}).
    \item \textbf{Operational consequence.}  ZNE, CDR, PEC, and Raw should be compared in gap space; the experiments test whether accuracy improvements translate into decision improvements, and in the evaluated regimes they often do not (Section~\ref{sec:numerical_proof_of_mechanism}).
\end{enumerate}

The theory yields five testable consequences (P1--P5), stated at the start of Section~\ref{sec:numerical_proof_of_mechanism}, which organize the numerical experiments.

This is not a pro-mitigation theory: it predicts when mitigation should be declined, and the experiments repeatedly recommend retaining Raw.

\paragraph{Why this matters for quantum workflows.}
In variational and optimization workflows the deliverable is often a choice, not a value estimate.  In QAOA one selects parameters or candidate cuts; in VQE chemistry one compares geometries, electronic states, or reaction pathways; in phase classification one assigns a label; and in optimization one returns a top candidate.  A method that lowers pointwise MSE can still spend shots on auxiliary scales, inject correlations, or reduce bias in directions that do not affect the active gaps.  The decision question is therefore not whether every estimated value is closer to its ideal value, but whether the critical gaps are estimated with enough signed margin and covariance structure to preserve the downstream choice.

\paragraph{A minimal example.}
Consider a minimization problem with three candidates and true optimum $x_0$.  The ideal competitor gaps are
\[
    \Delta=(F(x_1)-F(x_0),\,F(x_2)-F(x_0))^\top=(1,1)^\top,
\]
so the decision fails when either estimated gap is negative.  Let $L=\bigl(\begin{smallmatrix}-1&1&0\\ -1&0&1\end{smallmatrix}\bigr)$ and let both methods be unbiased in the ambient values.  Method A has residual covariance
\[
    K_A=0.40\,\1\1^\top+0.01 I_3,
\]
while method B has
\[
    K_B=0.08 I_3.
\]
With zero bias, the ambient per-candidate MSE is $\tr(K_m)/3$, so
\[
    \MSE_A=0.41,\qquad \MSE_B=0.08.
\]
Thus B is more accurate by the usual value-space MSE benchmark.  But the common-mode component of $K_A$ cancels in gaps, while the independent component of $K_B$ survives.  The induced gap kernels are
\[
    \Sigma_A=LK_A L^\top=
    \begin{pmatrix}
        0.02 & 0.01\\
        0.01 & 0.02
    \end{pmatrix},
    \qquad
    \Sigma_B=LK_B L^\top=
    \begin{pmatrix}
        0.16 & 0.08\\
        0.08 & 0.16
    \end{pmatrix}.
\]
For a Gaussian gap estimate $G_m\sim\mathcal N(\Delta,\Sigma_m)$ with correlation $1/2$ in both cases,
\[
    p_m=\Prob[\min_i G_{m,i}<0]
        =2\Phi\!\left(-\frac{1}{\sqrt{\Sigma_m(1,1)}}\right)
        -\Phi_2\!\left(-\frac{1}{\sqrt{\Sigma_m(1,1)}},
                       -\frac{1}{\sqrt{\Sigma_m(1,1)}};\frac12\right).
\]
Direct evaluation gives
\[
    p_A=1.54\times 10^{-12},\qquad p_B=1.17\times 10^{-2}.
\]
The lower-MSE method is therefore worse for the downstream argmin by many orders of magnitude.  This is not an anomaly; it is exactly what the decision kernel measures, and the rest of the paper makes it systematic.

\begin{table}[!ht]
    \centering
    \small
    \caption{Compact contrast between expectation-value benchmarking and the decision-kernel benchmark.  The broader positioning table in the related-work section places the same distinction relative to QEM, ranking-and-selection, and finite-shot benchmarking literatures.}
    \label{tab:benchmark_contrast}
    \setlength{\tabcolsep}{4pt}
    \renewcommand{\arraystretch}{1.12}
    \begin{tabular}{>{\raggedright\arraybackslash}p{0.42\linewidth}
                    >{\raggedright\arraybackslash}p{0.48\linewidth}}
    \toprule
    Classical expectation-value benchmark & Decision-kernel benchmark in this paper \\
    \midrule
    Individual estimated values are the objects of interest.
      & Relevant gaps are the objects of interest. \\
    MSE, RMSE, or marginal error determines method quality.
      & The residual gap law $\mathcal L(LE_m)$ determines decision quality. \\
    Correlations matter only through their contribution to value uncertainty.
      & Gap covariance and correlations enter the decision kernel $\Sigma_m=LK_mL^\top$. \\
    Mitigation is favored when it improves accuracy.
      & Raw can be preferred when mitigation spends shots or amplifies critical-gap variance. \\
    Downstream failure is not measured directly.
      & The benchmark is the probability of the wrong decision. \\
    \bottomrule
    \end{tabular}
\end{table}

\begin{figure}[!ht]
    \centering
    \resizebox{\linewidth}{!}{%
    \begin{tikzpicture}[
        >=Latex,
        node distance=0.55cm,
        stage/.style={draw, rounded corners=2pt, align=center, minimum height=1.05cm, inner xsep=7pt, font=\small},
        flow/.style={->, thick}
    ]
        \node[stage] (ambient) {ambient values\\$\R^n$};
        \node[stage, right=of ambient] (residual) {$E_m=a_m+\xi_m$};
        \node[stage, right=of residual] (gaps) {$LE_m$};
        \node[stage, right=of gaps] (law) {$\mathcal L(LE_m)$};
        \node[stage, right=of law] (kernel) {$(\mu_m,\Sigma_m)$};
        \node[stage, right=of kernel] (risk) {$R_m(D)$};
        \draw[flow] (ambient) -- node[midway, above=13pt, font=\scriptsize] {mitigation $m$} (residual);
        \draw[flow] (residual) -- node[midway, above=13pt, font=\scriptsize] {$L$} (gaps);
        \draw[flow] (gaps) -- node[midway, above=13pt, font=\scriptsize] {$\mathcal L$} (law);
        \draw[flow] (law) -- node[midway, above=13pt, font=\scriptsize] {Gaussian} (kernel);
        \draw[flow] (kernel) -- node[midway, above=13pt, font=\scriptsize] {decision} (risk);
    \end{tikzpicture}%
    }
    \caption{The paper follows this pipeline from left to right.  MSE stops at the ambient residual field; decision-aware benchmarking continues to the gap law and the induced decision risk.}
    \label{fig:pipeline}
\end{figure}

\paragraph{Result hierarchy and organization.}
The paper separates core structure from worked QEM instances and technical guarantees.  The seven main results realize the four-step spine S1--S4: quotient factorization gives the decision space (Theorem~\ref{thm:quotient_factorization}); gap-law sufficiency, its iff/Blackwell form, and the marginal no-go result identify the decision object and the failure of MSE/marginals (Theorems~\ref{thm:minimal_gap_law}, \ref{thm:gap_law_iff}, and~\ref{thm:nogo_marginal}); the diagonal/off-diagonal classification and QEM pullback theorem give the physical kernel geometry (Theorems~\ref{thm:diagonal_offdiagonal} and~\ref{thm:qem_pullback}); and the fixed-allocation shot-level converse is a supporting finite-shot limitation (Theorem~\ref{thm:shot_level_converse}).

Several secondary results support this core without competing with it: computable Gaussian garbling, a realizable linear-pullback no-go witness, the finite-shot gap CLT, the regular Gaussian large-deviation exponent, decision-optimal ZNE coefficients under certified margins, the local Fisher reduction of the shot-level converse, and plug-in matching of the local argmin exponent.  The ZNE/CDR/PEC signatures, PEC--CDR phase boundaries, tolerant argmin extension, estimation guarantees, Slepian comparison, sub-Gaussian fallback, affine controls, top-$k$ extension, and sharpness witnesses are treated as worked instances or technical scaffolding.

\section{Related work and positioning}

\paragraph{QEM.}
Temme, Bravyi, and Gambetta introduced ZNE and PEC for short-depth circuits, using Richardson extrapolation and quasiprobability resampling \cite{Temme2017}.  Endo, Benjamin, and Li developed practical error mitigation protocols under imperfect error-model knowledge \cite{Endo2018}.  CDR learns mitigation maps from near-Clifford training circuits \cite{Strikis2021}; vnCDR and UNITED interpret multiple QEM methods as data-driven post-processing schemes whose performance depends strongly on shot budget \cite{Lowe2021,Bultrini2023}.  Virtual distillation is another QEM primitive based on multicopy suppression of mixed-state errors \cite{Huggins2021}.  Readout-error mitigation \cite{Bravyi2021} and symmetry verification \cite{BonetMonroig2018} target specific error channels by classical post-processing or conserved-quantity checks.  Cai et al. review the broader QEM landscape, including overheads, assumptions, and demonstrations \cite{Cai2023}.

\begin{table}[!ht]
    \centering
    \small
    \caption{Positioning relative to nearby literatures.  The novelty is the physical pullback geometry of the decision kernel, which is absent from generic ranking-and-selection models.}
    \label{tab:related_work_positioning}
    \setlength{\tabcolsep}{3pt}
    \renewcommand{\arraystretch}{1.12}
    \begin{tabular}{>{\raggedright\arraybackslash}p{0.20\linewidth}
                    >{\raggedright\arraybackslash}p{0.22\linewidth}
                    >{\raggedright\arraybackslash}p{0.25\linewidth}
                    >{\raggedright\arraybackslash}p{0.20\linewidth}}
    \toprule
    Line of work & What it measures & What it does not capture & What this paper adds \\
    \midrule
    ZNE/PEC \cite{Temme2017,Endo2018,Kandala2019}
      & expectation-value accuracy
      & downstream decision risk
      & residual gap law \\
    CDR/vnCDR/\newline UNITED \cite{Czarnik2021,Lowe2021,Bultrini2023}
      & learned mitigation accuracy
      & the decision kernel
      & quotient gap geometry \\
    QEM reviews and limits \cite{Cai2023,Takagi2022,Quek2024}
      & overhead, sampling cost, and MSE
      & decision-failure probability
      & residual gap geometry \\
    Ranking-and-selection / best-arm \cite{Bechhofer1954,GarivierKaufmann2016}
      & decision risk for independent arms
      & QEM physical pullback
      & shared-noise pullback kernel \\
    Finite-shot QEM benchmarking \cite{Demarty2026}
      & single-interval sampling cost
      & coupled landscape gap law
      & pullback decision kernel \\
    \bottomrule
    \end{tabular}
\end{table}

\paragraph{Finite-shot QEM benchmarking and sampling limits.}
A close recent neighbor is the QEM-aware benchmarking strategy for quantum optimization of Demarty et al. \cite{Demarty2026}, which explicitly includes finite-shot statistics and QEM-induced sampling overhead.  Their operational decision is whether one estimated energy falls inside a classically certified interval.  The present work is different: the decision object is the joint residual gap law over a landscape, which controls argmin, ranking, top-$k$, and optimizer-step decisions with coupled covariance.  Bultrini et al. observed empirically that the best QEM method can change with shot budget, with simpler methods favorable at smaller budgets and UNITED becoming strongest at the largest budgets \cite{Bultrini2023}.  Our framework supplies a mechanism for such ranking flips: method ordering is controlled by effective margins and gap-kernel geometry, not by pointwise accuracy alone.  Fundamental QEM limits have also been studied through distinguishability, quantum-estimation, and sampling-overhead lower bounds \cite{Takagi2022,Takagi2023,Tsubouchi2022,Quek2024}.  Those results bound the cost of estimating or mitigating expectation values under broad protocol classes, including general post-processing assumptions.  To our knowledge, they do not directly analyze the probability of a downstream argmin, ranking, or top-$k$ decision being wrong.  The present work occupies that complementary decision quadrant, but at a deliberately narrower level: the shot-level converse below is fixed-allocation and classical-information based; its local Gaussian/LAN reduction gives a gap-space lower bound induced by a physical pullback, not a universal channel-level quantum lower bound.  A channel-level decision converse would replace the surrogate covariance $\Sigma_m$ by the Fisher/Chernoff information geometry of the primitive noisy quantum experiment.

\paragraph{Uncertainty and model violation.}
Govia et al. bound systematic error in QEM caused by error-model violation \cite{Govia2025}.  Prodius et al. study robust design under uncertainty in QEM \cite{Prodius2026}.  This paper is complementary: those works analyze uncertainty of mitigated observables, whereas the present theory pushes residual uncertainty through a downstream decision map and identifies the residual gap law, and in Gaussian regimes the gap-space covariance $LKL^\top$, as the relevant decision object.

\paragraph{Decision theory, ranking-and-selection, and Blackwell comparison.}
The downstream tasks considered here overlap with best-arm identification, ranking-and-selection, indifference-zone selection, and simple-regret optimization \cite{Bechhofer1954,GuptaPanchapakesan1979,GarivierKaufmann2016}.  Correlated and covariance-adaptive best-arm identification show that arm correlations can reduce sample complexity compared with independent-arm models \cite{GuptaJoshiYagan2021,Saad2023}.  Fixed-confidence linear best-arm identification gives a cleaner setting for instance-specific lower bounds and tracking-type achievability \cite{JedraProutiere2020}.  By contrast, fixed-budget instance-specific optimality is subtle: for several bandit identification tasks there is no single algorithm that attains the best possible rate on all instances \cite{Degenne2023}.  This is why the present paper claims only a local two-point fixed-budget converse and does not advertise a universal fixed-budget complexity.  The paper deliberately borrows this decision-theoretic language but does not model QEM as independent arms.  In QEM, residuals at different landscape points are coupled through the same device channel and through the same mitigation map.  Therefore the covariance $K_m$ is a pullback of shared physical noise, not a free element of the full positive-semidefinite cone.  This restricted geometry is what makes the framework quantum-mitigation-specific rather than only a port of ranking-and-selection.  Blackwell comparison of experiments gives the correct language for exact decision sufficiency: a representation is lossless for all gap losses precisely when it retains the residual gap experiment \cite{Blackwell1953}.  We use this language only where it is computable: in the Gaussian case, Loewner covariance ordering gives an explicit garbling relation.

\paragraph{Why this is not generic ranking-and-selection.}
Existing QEM benchmarks ask whether the estimated value improves. This paper asks whether the downstream decision improves, and identifies the minimal QEM-induced object needed to answer that question.  The residual field here is produced by quantum error mitigation: ZNE, CDR, PEC, readout correction, and Raw transform bias and covariance in different, physically constrained ways.  The relevant covariance is not chosen freely by a statistician; it is the pullback of shared device noise through a mitigation map and then through the gap operator.  This is why MSE rankings and decision-risk rankings can invert, and why method selection must be performed in gap space rather than by treating candidate energies as independent arms.

\paragraph{Mathematical tools.}
The theory uses finite-dimensional Gaussian comparison, Gaussian maxima, concentration, sample covariance, and large-deviation tools.  Slepian comparison and Sidak-type inequalities justify monotonicity under structured covariance or correlation orderings for Gaussian crossing probabilities \cite{Slepian1962,Sidak1967,LedouxTalagrand1991}.  Chernozhukov, Chetverikov, and Kato provide comparison and anti-concentration bounds for maxima of Gaussian random vectors \cite{Chernozhukov2013}.  Concentration inequalities for sub-Gaussian variables and sample covariance matrices are standard in high-dimensional probability \cite{Boucheron2013,Vershynin2018,KoltchinskiiLounici2017}.  Large-deviation rate functions for Gaussian measures give the decision exponent \cite{DemboZeitouni1998}.  The Gaussian experiment can be read either as a finite-shot normal approximation to primitive measurement counts or as the local asymptotic normal experiment generated by the pullback covariance.  A fully non-asymptotic bridge from multinomial shot data to the Gaussian gap experiment would require explicit multivariate Berry--Esseen or LAN constants; we mark that bridge as a technical strengthening rather than assume it for the main statements.  The novelty here is not these probability tools themselves, but their use to identify a QEM-specific quotient-space object: the residual gap law induced by a physical mitigation pullback.

\section{From value errors to residual gaps}

We first identify the space in which downstream decisions actually live; the purpose is to replace ambient value errors by residual gap errors.

Let $n\ge 2$ and let
\begin{equation}
    \calX=\{x_0,x_1,\ldots,x_{n-1}\}
\end{equation}
be a finite candidate landscape.  In a variational algorithm, $x_i$ can denote a grid point, a parameter vector, a circuit, or a candidate state.  Let $F:\calX\to\R$ be the ideal expectation-value landscape.  We write the main text for minimization; maximization follows by replacing $F$ with $-F$.

\begin{assumption}[Unique reference optimum]
The ideal minimizer is unique and indexed by $x_0$:
\begin{equation}
    F(x_0)<F(x_i),\qquad i=1,\ldots,n-1.
\end{equation}
The ideal margins are
\begin{equation}
    \Delta_i=F(x_i)-F(x_0)>0.
\end{equation}
\end{assumption}

Assume all residual fields considered below are square-integrable.  At finite shot budget $B$, a mitigation method $m$ produces the estimator
\begin{equation}
    \widehat F_m^{(B)}(x_i)=F(x_i)+E_m^{(B)}(x_i),
\end{equation}
where the residual field decomposes as
\begin{equation}
    E_m^{(B)}=a_m+\xi_m^{(B)},
    \qquad
    \E[\xi_m^{(B)}]=0,
    \qquad
    K_m^{(B)}(i,j)=\Cov\!\left(\xi_m^{(B)}(x_i),\xi_m^{(B)}(x_j)\right).
\end{equation}
Here $a_m$ denotes the deterministic, budget-independent residual bias of method $m$, and $K_m^{(B)}$ is the actual finite-budget covariance.  This convention is exact for linear fixed-map estimators and is the asymptotic-bias convention used below.  If nonlinear finite-budget post-processing induces an additional $B$-dependent bias, the same statements are read with $a_m$ and $\mu_m$ replaced by $a_m^{(B)}$ and $\mu_m^{(B)}$.
The budget-rescaled covariance is
\begin{equation}
    K_m^{\mathrm{unit}}(B):=B K_m^{(B)}.
\end{equation}
When exact $1/B$ covariance scaling holds, or when an asymptotic/LAN limit exists, we write $K_m^{\mathrm{unit}}$ for the common or limiting per-unit-budget covariance.  The standard landscape MSE at budget $B$ is
\begin{equation}
\label{eq:mse_v6}
    \MSE_m(B)=\frac1n\left(\norm{a_m}_2^2+\tr K_m^{(B)}\right)
    =\frac1n\left(\norm{a_m}_2^2+\frac{\tr K_m^{\mathrm{unit}}(B)}{B}\right).
\end{equation}
When the budget is fixed and no confusion is possible, we suppress the superscript $(B)$ in informal prose.  In all budget-explicit Gaussian formulas below, however, the notation distinguishes actual finite-budget covariances from per-unit-budget kernels.

Define the reference gap operator $L\in\R^{(n-1)\times n}$ by
\begin{equation}
    (Lv)_i=v_i-v_0,\qquad i=1,\ldots,n-1.
\end{equation}
Then the estimated margins are
\begin{equation}
    \widehat\Delta_m^{(B)}=L\widehat F_m^{(B)}
    =\Delta+La_m+L\xi_m^{(B)}.
\end{equation}
We define
\begin{equation}
    \mu_m=\Delta+La_m,
    \qquad
    \Sigma_m^{(B)}=L K_m^{(B)}L^\top,
    \qquad
    \Sigma_m^{\mathrm{unit}}(B)=L K_m^{\mathrm{unit}}(B)L^\top.
\end{equation}
When exact $1/B$ covariance scaling is invoked, or after passing to the asymptotic per-unit limit, this becomes
\begin{equation}
    K_m^{(B)}=\frac{K_m^{\mathrm{unit}}}{B},
    \qquad
    \Sigma_m^{(B)}=\frac{\Sigma_m^{\mathrm{unit}}}{B},
    \qquad
    \Sigma_m^{\mathrm{unit}}=L K_m^{\mathrm{unit}}L^\top.
\end{equation}
To keep later formulas readable, we abbreviate $K_m^{\mathrm{unit}}$ and $\Sigma_m^{\mathrm{unit}}$ as $K_m$ and $\Sigma_m$ whenever $B$ is displayed and exact or asymptotic $1/B$ scaling is being used; finite-budget covariances are then written explicitly as $K_m^{(B)}$ or $\Sigma_m^{(B)}$.  Thus in Gaussian formulas of the form
\begin{equation}
    L\xi_m^{(B)}=B^{-1/2}Z_m,
    \qquad
    Z_m\sim\calN(0,\Sigma_m),
\end{equation}
$\Sigma_m$ denotes the per-unit-budget decision kernel, not the actual finite-budget covariance.

Counting ties against the reference optimum, the unique-argmin decision succeeds exactly when
\begin{equation}
    \widehat\Delta_{m,i}^{(B)}>0\quad\text{for all }i=1,\ldots,n-1.
\end{equation}
Thus the decision risk is
\begin{equation}
\label{eq:argmin_risk}
    R_m(B)=\Prob\left[\exists i:\ \mu_{m,i}+(L\xi_m^{(B)})_i\le 0\right].
\end{equation}
If a tie-breaking rule favors $x_0$, the boundary events $\widehat\Delta_{m,i}^{(B)}=0$ must instead be treated according to that rule.

\begin{definition}[Decision quotient]
The decision quotient of the landscape space is
\begin{equation}
    \calQ=\R^n/\linspan\{\1\},
\end{equation}
where two landscapes are equivalent if they differ by a constant vector.  We equip $\calQ$ with the quotient Borel sigma-algebra induced by the projection $\pi:\R^n\to\calQ$.  The reference contrast map $L$ induces an isomorphism between $\calQ$ and $\R^{n-1}$.
\end{definition}

\begin{theorem}[Quotient factorization and decision completeness]
\label{thm:quotient_factorization}
Let $\delta:\R^n\to\calA$ be a measurable finite decision rule with values in a finite set $\calA$.  Assume that $\delta$ is invariant under global energy shifts:
\begin{equation}
    \delta(v+c\1)=\delta(v),\qquad v\in\R^n,\quad c\in\R.
\end{equation}
Then there exists a unique measurable decision map $\widetilde\delta:\calQ\to\calA$ such that
\begin{equation}
    \delta=\widetilde\delta\circ\pi,
\end{equation}
where $\pi:\R^n\to\calQ$ is the quotient projection.  If $L:\R^n\to\R^{n-1}$ has $\ker L=\linspan\{\1\}$, then there also exists a unique measurable map $\bar\delta:L(\R^n)\to\calA$ satisfying
\begin{equation}
    \delta=\bar\delta\circ L.
\end{equation}
Consequently, for fixed ideal landscape $F$, the law of $LE_m$ for any full-row-rank reference contrast map $L$ with $\ker L=\linspan\{\1\}$, together with the fixed ideal gaps $LF$, is decision-complete for all such finite shift-invariant decisions, whereas the ambient law of $E_m$ contains decision-irrelevant gauge information.  The factorization holds verbatim for any measurable target space; finiteness of $\calA$ is used only for the decision events considered later.
\end{theorem}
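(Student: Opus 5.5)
The plan is to define the factored maps directly and then check well-definedness, uniqueness, and measurability in turn. For the quotient statement, set $\widetilde\delta([v]):=\delta(v)$ for any representative $v$ of the class $[v]=\pi(v)$. If $\pi(v)=\pi(w)$ then $w=v+c\1$ for some $c\in\R$, so shift invariance gives $\delta(w)=\delta(v)$. Hence $\widetilde\delta$ is well defined and $\delta=\widetilde\delta\circ\pi$ by construction. Uniqueness follows because $\pi$ is surjective: two maps agreeing after composition with a surjection agree everywhere. Measurability follows from the definition of the quotient Borel $\sigma$-algebra as the final $\sigma$-algebra of $\pi$. A set $S\subseteq\calQ$ is measurable iff $\pi^{-1}(S)$ is Borel, and for any $A\subseteq\calA$ we have $\pi^{-1}(\widetilde\delta^{-1}(A))=\delta^{-1}(A)$, which is Borel. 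None of these steps uses finiteness of $\calA$, so the factorization holds for any measurable target space, as claimed.

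For the contrast-map statement, use $\ker L=\linspan\{\1\}$ to get that $L$ induces an injective linear map $\widehat L:\calQ\to L(\R^n)$ with $L=\widehat L\circ\pi$. By rank–nullity, $\dim L(\R^n)=n-1$, so with the stated full row rank $L(\R^n)=\R^{n-1}$ and $\widehat L$ is a linear bijection. Define $\bar\delta:=\widetilde\delta\circ\widehat L^{-1}$; this gives $\delta=\bar\delta\circ L$, and uniqueness again comes from surjectivity of $L$ onto its image. The only point needing care, and the one I expect to be the main obstacle, is measurability of $\bar\delta$, i.e.\ that $\widehat L^{-1}$ is measurable from $L(\R^n)$ with its Borel structure into $\calQ$ with the quotient structure. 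I would avoid topology on $\calQ$ and argue concretely. Choose a right inverse $s:\R^{n-1}\to\R^n$ of $L$; for the reference $L$ one can take $s(g)=(0,g_1,\dots,g_{n-1})$, and in general $s=L^\top(LL^\top)^{-1}$. Then $s$ is linear, hence continuous and Borel, and $L(s(g))=g$. For any $v\in\R^n$, the vector $s(Lv)-v$ lies in $\ker L=\linspan\{\1\}$, so shift invariance gives $\delta(s(Lv))=\delta(v)$. Therefore $\bar\delta=\delta\circ s$, which is a composition of Borel maps and so is measurable. This also gives $\widetilde\delta=\delta\circ s\circ\widehat L$ and shows directly that $\widehat L$ is a measurable isomorphism.

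For the consequence, write $\widehat F_m=F+E_m$, so that $\delta(\widehat F_m)=\bar\delta(LF+LE_m)$. For fixed $F$, the law of the decision $\delta(\widehat F_m)$ is therefore the pushforward of $\mathcal L(LE_m)$ under the measurable map $g\mapsto\bar\delta(LF+g)$. It is thus determined by $\mathcal L(LE_m)$ together with $LF$, which is decision-completeness. The ambient law of $E_m$ can be altered along $\1$, for example $E_m\mapsto E_m+Z\1$ for any random scalar $Z$, without changing $LE_m$ and hence without changing any decision law. This is the sense in which the ambient law carries decision-irrelevant gauge information.

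If $L'$ is another full-row-rank contrast with the same kernel, then $L'=ML$ for an invertible $M$, since both are surjections with the same kernel. Hence $\mathcal L(L'E_m)$ and $\mathcal L(LE_m)$ determine each other, and the choice of contrast map does not matter.
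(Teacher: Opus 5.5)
Your proposal is correct and follows the paper's proof essentially step for step. Well-definedness comes from shift invariance; measurability of $\widetilde\delta$ comes from $\pi^{-1}(\widetilde\delta^{-1}(A))=\delta^{-1}(A)$ and the quotient $\sigma$-algebra; and measurability of $\bar\delta$ comes from a linear right inverse of $L$ (your $s$, the paper's $W$), with your identity $\bar\delta=\delta\circ s$ being exactly what underlies the paper's $\bar\delta^{-1}\{a\}=W^{-1}(\delta^{-1}\{a\})$, and the consequence again obtained by pushing $\mathcal L(LE_m)$ forward through $g\mapsto\bar\delta(LF+g)$.
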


\begin{proof}
If $v$ and $w$ belong to the same quotient class, then $w=v+c\1$ for some $c$, so shift invariance gives $\delta(w)=\delta(v)$.  Hence $\widetilde\delta([v])=\delta(v)$ is well defined and unique.  Measurability of $\widetilde\delta$ follows from the quotient sigma-algebra: for each $a\in\calA$, $\pi^{-1}(\widetilde\delta^{-1}\{a\})=\delta^{-1}\{a\}$ is measurable.  If $Lv=Lw$, then $v-w\in\ker L=\linspan\{\1\}$, so the same argument gives a well-defined map on $L(\R^n)$.  Since $L$ has rank $n-1$, choose a linear right inverse $W:L(\R^n)\to\R^n$ with $LW=\Id$ on $L(\R^n)$.  Then $\bar\delta^{-1}\{a\}=W^{-1}(\delta^{-1}\{a\})$, so $\bar\delta$ is measurable.  Since every success probability is obtained by applying this fact to the random vector $F+E_m$, for fixed $F$ the pushforward law of $LE_m$, together with the deterministic vector $LF$, determines the decision risk.
\end{proof}

\begin{figure}[t]
    \centering
    \begin{tikzpicture}[
        >=Latex,
        orbit/.style={gray!55, line width=0.5pt},
        main/.style={blue!65!black, line width=1.0pt},
        map/.style={-{Latex[length=2mm]}, line width=0.8pt},
        box/.style={draw, rounded corners=1.5pt, align=center, inner sep=4pt}
    ]
    \begin{scope}[xshift=-3.2cm, scale=0.9]
        \draw[->, gray!70] (-1.55,0) -- (1.75,0) node[right] {$v_1$};
        \draw[->, gray!70] (0,-1.35) -- (0,1.65) node[above] {$v_2$};
        \draw[orbit] (-1.45,-1.05) -- (1.45,1.85);
        \draw[orbit] (-1.45,-1.55) -- (1.45,1.35);
        \draw[orbit] (-1.45,-0.55) -- (1.45,2.35);
        \draw[main] (-1.35,-1.35) -- (1.35,1.35)
            node[above right] {$\linspan\{\1\}$};
        \fill[blue!65!black] (-0.45,-0.45) circle (1.4pt);
        \fill[blue!65!black] (0.55,0.05) circle (1.4pt);
        \fill[blue!65!black] (0.15,0.65) circle (1.4pt);
        \draw[map] (1.65,0.15) -- (2.35,0.15) node[midway,above] {$\pi$};
        \draw[main] (2.45,-0.75) -- (2.45,0.95) node[above] {$\calQ$};
        \fill[blue!65!black] (2.45,-0.45) circle (1.5pt);
        \fill[blue!65!black] (2.45,0.15) circle (1.5pt);
        \fill[blue!65!black] (2.45,0.65) circle (1.5pt);
        \node[align=center, font=\small] at (0,-1.85) {energy-shift orbits\\$v+c\1$};
    \end{scope}

    \begin{scope}[xshift=2.4cm, yshift=0.1cm, node distance=1.15cm and 1.95cm]
        \node[box] (R) {$\R^n$};
        \node[box, right=of R] (A) {$\calA$};
        \node[box, below=of R] (Q) {$\calQ=\R^n/\linspan\{\1\}$};
        \node[box, below=of A] (G) {$L(\R^n)$};
        \draw[map] (R) -- node[above] {$\delta$} (A);
        \draw[map] (R) -- node[left] {$\pi$} (Q);
        \draw[map] (Q) -- node[below] {$\widetilde\delta$} (A);
        \draw[map] (R) -- node[sloped, above] {$L$} (G);
        \draw[map] (G) -- node[right] {$\bar\delta$} (A);
        \node[align=center, font=\small] at ($(Q)!0.5!(G)+(0,-0.75)$)
            {$\delta=\widetilde\delta\circ\pi=\bar\delta\circ L$};
    \end{scope}
    \end{tikzpicture}
    \caption{Schematic quotient factorization.  A shift-invariant finite decision is constant on energy-shift orbits $v+c\1$, so it factors through the quotient $\R^n/\linspan\{\1\}$, equivalently through any full-rank gap operator $L$ with kernel $\linspan\{\1\}$.}
    \label{fig:quotient_schematic}
\end{figure}

\begin{proposition}[Common-mode gauge invariance]
Let $c$ be any square-integrable real random variable, possibly correlated with other quantities, and define
\begin{equation}
    E'_m=E_m+c\1.
\end{equation}
Then
\begin{equation}
    L E'_m=L E_m,
\end{equation}
so every ranking, argmin, top-$k$, and pairwise-gap decision has exactly the same risk under $E'_m$ and $E_m$.  However, the ambient MSE is gauge-dependent and can be made arbitrarily large without changing any gap decision.
\end{proposition}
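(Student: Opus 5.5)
The plan has three parts. First we check the identity $LE'_m=LE_m$. Second, we deduce risk invariance from Theorem~\ref{thm:quotient_factorization}. Third, we exhibit a choice of $c$ that makes the ambient MSE blow up while the gaps stay fixed.

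The first step is pointwise linear algebra. For every outcome $\omega$ we have $LE'_m(\omega)=LE_m(\omega)+c(\omega)L\1$. By definition $(L\1)_i=1-1=0$, so $L\1=0$. The identity therefore holds surely, not just in law, and it holds regardless of how $c$ is correlated with $E_m$ or anything else. The same is true for any contrast map with $\ker L=\linspan\{\1\}$.

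The second step uses Theorem~\ref{thm:quotient_factorization}. Every ranking, argmin, top-$k$ or pairwise-gap rule $\delta$ is shift-invariant, so it factors as $\delta=\bar\delta\circ L$. Hence $\delta(F+E'_m)=\bar\delta(LF+LE'_m)=\bar\delta(LF+LE_m)=\delta(F+E_m)$ holds surely. It follows that every decision event, and so every risk such as \eqref{eq:argmin_risk}, is literally identical under the two residual fields. One point needs a brief remark. Tie conventions that favour $x_0$ are functions of $\widehat\Delta$ alone, so they are also of the form $\bar\delta\circ L$ and nothing changes.

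The third step handles MSE, and it is where the only real care is needed, because $c$ may be correlated with $E_m$. We would compute
\[
\frac1n\E\norm{E'_m}_2^2=\frac1n\E\norm{E_m}_2^2+\frac2n\E\bigl[c\,\1^\top E_m\bigr]+\E[c^2].
\]
To show that this is gauge-dependent and unbounded, we do not need to control the cross term in general. It suffices to exhibit a family. We would take $c=t\zeta$ with $\zeta$ standard normal and independent of $E_m$, and $t\in\R$. Then the cross term vanishes, and the MSE equals $\MSE_m+t^2$, which tends to $\infty$ as $t\to\infty$; the risks are unchanged for all $t$ by the second step.

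As an alternative, a deterministic $c=t$ gives the MSE $\frac1n\norm{a_m+t\1}_2^2+\frac1n\tr K_m$. This is a quadratic in $t$ with leading coefficient $1$, so it is also unbounded. This variant also shows that the bias part of MSE is gauge-dependent.

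We would close by noting what the computation shows about decomposition. Under $E'_m$ the covariance becomes $K_m+\Var(c)\1\1^\top+(\Cov(\xi_m,c)\1^\top+\1\Cov(c,\xi_m)^\top)$, and every added term is annihilated by $L(\cdot)L^\top$. This ties the result to the minimal example in the introduction.
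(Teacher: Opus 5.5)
Your proposal is correct and follows essentially the same route as the paper. In both, $L\1=0$ gives $LE'_m=LE_m$; every listed decision depends only on gaps; and the expansion $\E\norm{E_m+c\1}^2=\E\norm{E_m}^2+n\E[c^2]+2\E[c\1^\top E_m]$, with $c$ independent of $E_m$ and of large variance, makes the MSE unbounded. Your explicit appeal to Theorem~\ref{thm:quotient_factorization}, the deterministic-shift variant, and the covariance decomposition are refinements of that same argument rather than a different approach.
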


\begin{proof}
Since $L\1=0$, $L(E_m+c\1)=LE_m$.  All listed decisions depend only on pairwise differences.  On the other hand, the unnormalized squared residual satisfies
\begin{equation}
    \E\norm{E_m+c\1}^2
    =\E\norm{E_m}^2+n\E[c^2]+2\E[c\1^\top E_m].
\end{equation}
Equivalently, in the same mean squared residual-norm convention as Eq.~\eqref{eq:mse_v6}, the left-hand side is $n\,\MSE(E_m+c\1)$.  In particular, by choosing $c$ independent of $E_m$ with sufficiently large variance, the ambient MSE can be made arbitrarily large while the residual gap law is unchanged.
\end{proof}

This proposition is the simplest way to see the mismatch.  MSE is not invariant under the decision gauge; the residual gap law, and hence the gap kernel in second-order regimes, is.
\section{Gap-law sufficiency: the residual gap law is the decision object}

Having moved from values to gaps, we ask which probabilistic object is sufficient for every downstream decision.  The answer is the full residual gap law.

A finite decision rule is called a \emph{gap decision} if its success or failure can be expressed as membership of a finite-dimensional residual gap vector in a measurable set.  This includes argmin, top-$k$, ranking constraints, improvement-threshold rules, and phase-label rules defined by finitely many linear inequalities.

\begin{definition}[Finite gap decision]
Let $L_D:\R^n\to\R^d$ be a linear contrast map satisfying $L_D\1=0$.  Let $b_D\in\R^d$ be the ideal deterministic decision margin, excluding the residual field, and let $\calS\subseteq\R^d$ be a Borel success set.  The decision succeeds under residual field $E_m$ when
\begin{equation}
    b_D+L_D E_m\in\calS.
\end{equation}
The risk is
\begin{equation}
    R_m(D)=\Prob[b_D+L_D E_m\notin\calS].
\end{equation}
The deterministic threshold or margin convention is not unique: fixed offsets may be absorbed either into $b_D$ or into the success set $\calS$.  In this paper $b_D$ denotes the ideal margin, while mitigation bias is carried by the residual field and, in Gaussian notation, by the effective margin below.
\end{definition}

For the basic argmin task, $L_D=L$, $b_D=\Delta$, and $\calS=(0,\infty)^{n-1}$.  Since $E_m=a_m+\xi_m$, the estimated margin can equivalently be written as
\begin{equation}
    \Delta+LE_m
    =
    \Delta+La_m+L\xi_m
    =
    \mu_m+L\xi_m,
    \qquad
    \mu_m=\Delta+La_m.
\end{equation}
Thus the deterministic bias enters the effective Gaussian margin $\mu_m$, not the margin $b_D$ in the total-residual formulation.  This convention avoids double counting the bias.

\begin{theorem}[Gap-law sufficiency and natural minimality]
\label{thm:minimal_gap_law}
Let $U_m=L_D E_m$ be the residual gap vector for a finite gap decision.
\begin{enumerate}[label=(\alph*)]
    \item \textbf{Sufficiency.}  For any fixed finite gap decision $D$, the risk $R_m(D)$ is determined by the law $\mathcal L(U_m)$.
    \item \textbf{Borel minimality.}  If two methods $A$ and $B$ satisfy
    \begin{equation}
        \mathcal L(L_D E_A)=\mathcal L(L_D E_B),
    \end{equation}
    then they have identical risk for every decision using the same contrast map $L_D$.  Conversely, if the two laws differ, then there exists a finite measurable gap decision using $L_D$ for which their risks differ.
    \item \textbf{Orthant-threshold minimality.}  For the argmin contrast $L$, equality of all shifted orthant success probabilities
    \begin{equation}
        \Prob[u+LE_A\in(0,\infty)^{n-1}]
        =
        \Prob[u+LE_B\in(0,\infty)^{n-1}],
        \qquad u\in\R^{n-1},
    \end{equation}
    is equivalent to
    \begin{equation}
        \mathcal L(LE_A)=\mathcal L(LE_B).
    \end{equation}
    Here $u$ ranges over a mathematical threshold family, not only over the fixed physical margin of a single landscape.
    \item \textbf{Polyhedral natural-decision minimality.}  Let $\calH$ be the class of all finite decisions generated by finitely many contrast halfspaces
    \begin{equation}
        h^\top L_D E_m\le t,
        \qquad h\in\R^d,\quad t\in\R.
    \end{equation}
    If two residual gap laws give the same success probabilities for every decision in $\calH$, then the residual gap laws are identical.  Hence the all-halfspace polyhedral threshold family, which contains the usual ranking, top-$k$, optimizer-acceptance, and phase-label examples, is rich enough to determine the gap law.
\end{enumerate}
\end{theorem}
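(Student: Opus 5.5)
The four parts are handled in order, and each reduces to a standard uniqueness fact for Borel probability measures on $\R^d$.

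\textbf{(a) Sufficiency.} The risk is written as
\[
R_m(D)=\Prob[U_m\notin \calS-b_D]=\mathcal L(U_m)\bigl(\R^d\setminus(\calS-b_D)\bigr).
\]
Translates of Borel sets are Borel, so this is the $\mathcal L(U_m)$-measure of a fixed Borel set. It is therefore a functional of the law alone, as in the final step of Theorem~\ref{thm:quotient_factorization}.

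\textbf{(b) Borel minimality.} The forward direction is immediate from (a). For the converse, suppose $\mathcal L(L_DE_A)\neq\mathcal L(L_DE_B)$. By definition of equality of measures, some Borel set $C\subseteq\R^d$ satisfies $\Prob[L_DE_A\in C]\neq\Prob[L_DE_B\in C]$. The decision is then taken with $b_D=0$ and $\calS=C$; it has the binary action set $\calA=\{\text{succeed},\text{fail}\}$ and is measurable. Its risks are $1-\Prob[\cdot\in C]$, which differ between $A$ and $B$. If the paper's convention requires $b_D$ to be the ideal margin of a fixed landscape, the offset is absorbed into $\calS$, as the Definition explicitly allows.

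\textbf{(c) Orthant-threshold minimality.} The key identity is
\[
\Prob[u+LE\in(0,\infty)^{n-1}]=\Prob[-LE<u]=\Prob[V< u],\qquad V:=-LE,
\]
where the inequality is componentwise. So the shifted orthant probabilities are exactly the strict multivariate distribution function $G_V(u)=\Prob[V<u]$ of $V$, with $u$ ranging over all of $\R^{n-1}$. The plan is to recover the ordinary distribution function $\Prob[V\le u]$ by monotone limits: take $u^{(k)}\downarrow u$ componentwise, with each coordinate decreasing strictly to $u_i$. Then the events $\{V<u^{(k)}\}$ decrease to $\{V\le u\}$, and continuity from above gives $\Prob[V\le u]=\lim_k G_V(u^{(k)})$. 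Equal $G$'s therefore give equal distribution functions. The closed orthants $\{v\le u\}$ form a $\pi$-system generating the Borel $\sigma$-algebra, so Dynkin's $\pi$--$\lambda$ theorem gives $\mathcal L(V_A)=\mathcal L(V_B)$. Pushing forward by the negation map then yields $\mathcal L(LE_A)=\mathcal L(LE_B)$. The reverse implication is (a). I expect this step to be the only delicate point, because of the strict inequality and the tie convention; the monotone-limit argument is what handles it.

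\textbf{(d) Polyhedral minimality.} Single halfspaces already suffice: each event $\{h^\top U\le t\}$ is a decision in $\calH$. Equal success probabilities therefore mean that $h^\top U_A$ and $h^\top U_B$ have the same distribution for every $h\in\R^d$. Their characteristic functions agree, so evaluating at $s=1$ gives $\E e^{\,i h^\top U_A}=\E e^{\,i h^\top U_B}$ for all $h$. By the Cram\'er--Wold device (uniqueness of multivariate characteristic functions), $\mathcal L(U_A)=\mathcal L(U_B)$. Alternatively, taking intersections of the coordinate halfspaces with $h=e_i$ reproduces the orthant argument of (c), so the polyhedral class contains the $\pi$-system used there. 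Either route works; I would present the Cram\'er--Wold route, since it shows that even one-halfspace decisions are enough.
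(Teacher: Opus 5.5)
Your proposal is correct and follows the paper's proof essentially step for step. Parts (a) and (b) go through the Borel set $\calS-b_D$ and a separating Borel success set; (c) recovers the distribution (equivalently, survival) function from the shifted open orthants and then applies a $\pi$--$\lambda$/monotone-class argument; (d) uses equality of all one-dimensional projections $h^\top U$ together with Cram\'er--Wold. Your monotone-limit step in (c) is valid but can be skipped. The open lower orthants $\{v<u\}$ are themselves closed under finite intersection and generate the Borel $\sigma$-algebra, so Dynkin's theorem applies to them directly.
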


\begin{proof}
For (a),
\begin{equation}
    R_m(D)=\Prob[U_m\notin\calS-b_D],
\end{equation}
which is a probability assigned by $\mathcal L(U_m)$ to a Borel set.  For (b), equality of laws implies equality of probabilities for all Borel sets.  Conversely, if two laws differ, the measure-determining property of Borel sets gives a Borel set $A$ such that the probabilities of $A$ differ; choose $\calS=A^c$ and $b_D=0$.  For (c), the shifted open upper orthants $(-u,\infty)^{n-1}$ determine the multivariate survival function.  By standard right-continuity and monotone-class arguments, they determine the full law, including possible boundary atoms.  For (d), equality of probabilities for all contrast halfspaces gives equality of all one-dimensional projections $h^\top U_m$; by the Cram\'er--Wold theorem, the multivariate laws coincide.  Finite intersections and unions of such halfspaces cover the listed natural decisions.
\end{proof}

\begin{remark}[Meaning of minimality]
The minimality is not only a pathological Borel statement.  Orthant-threshold and all-halfspace polyhedral contrast decisions formalize the ordinary geometries used in variational search, ranking, top-$k$ filtering, threshold improvement, and phase labeling.  The theorem therefore says that any gap-minimal lossless decision-aware QEM benchmark must retain the residual gap law.  Gaussian, sub-Gaussian, and plug-in approximations are practical reductions of this law.
\end{remark}
\begin{definition}[Exact decision-consistent representation]
Fix a contrast map $L_D$, a class $\mathfrak E$ of residual fields under consideration, and a class $\mathfrak D(L_D)$ of finite gap decisions generated by Borel success sets in gap space.  A representation $T(E_m)$ of a residual field is \emph{exactly decision-consistent} on $(\mathfrak E,\mathfrak D(L_D))$ if, for all $E_A,E_B\in\mathfrak E$,
\begin{equation}
    T(E_A)=T(E_B)
    \quad\Longrightarrow\quad
    R_A(D)=R_B(D)\quad\text{for every }D\in\mathfrak D(L_D).
\end{equation}
It is \emph{gap-minimal lossless} if, in addition, it does not distinguish two residual fields in $\mathfrak E$ with the same residual gap law.  Thus ``lossless'' is used here in the minimal decision sense: no decision information is lost and no gap-irrelevant information is retained.
\end{definition}

\begin{theorem}[Iff and Blackwell form of gap sufficiency]
\label{thm:blackwell_gap}
\label{thm:gap_law_iff}
For the class of all Borel finite gap decisions using $L_D$, and over a fixed class $\mathfrak E$ of residual fields under consideration, a representation $T$ is exactly decision-consistent on $(\mathfrak E,\mathfrak D(L_D))$ if and only if the residual gap law is recoverable from it on the range of $T$: there exists a map
\begin{equation}
    \Psi:\operatorname{range}(T|_{\mathfrak E})\to \mathcal P(\R^d)
\end{equation}
such that, for every $E_m\in\mathfrak E$,
\begin{equation}
    \Psi(T(E_m))=\mathcal L(L_D E_m).
\end{equation}
Consequently, up to one-to-one relabeling, the residual gap law is the coarsest gap-minimal lossless representation for all such decisions within $\mathfrak E$.

For natural polyhedral decision classes, the same conclusion holds whenever the class contains all contrast halfspace-threshold decisions
\begin{equation}
    h^\top L_D E_m\le t,
    \qquad h\in\R^d,\quad t\in\R,
\end{equation}
because these decisions determine the gap law by the Cram\'er--Wold theorem.

Equivalently, in a family of possible gap-space margins $\Theta\subseteq\R^d$, the experiment observed by downstream decision rules is
\begin{equation}
    \mathsf E_m=\{\mathcal L(\theta+L_D E_{m,\theta}):\theta\in\Theta\}.
\end{equation}
As a standard sufficient comparison under Borel observation spaces, method $A$ Blackwell-dominates method $B$ over $\Theta$ for all bounded gap losses if there exists a Markov kernel $G$ on gap space such that
\begin{equation}
    \mathcal L(\theta+L_D E_{B,\theta})
    =
    \mathcal L(\theta+L_D E_{A,\theta})G,
    \qquad \theta\in\Theta.
\end{equation}
Thus, for the unrestricted class of bounded Borel gap losses, any scalar or marginal benchmark that is not at least as informative as the gap experiment is necessarily incomplete for some downstream gap loss.
\end{theorem}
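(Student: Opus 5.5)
The argument has two directions plus two add-on statements, and it leans throughout on Theorem~\ref{thm:minimal_gap_law}.

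\emph{Sufficiency of recoverability.} Suppose a map $\Psi$ exists with $\Psi(T(E_m))=\mathcal L(L_DE_m)$ for all $E_m\in\mathfrak E$. If $T(E_A)=T(E_B)$, then applying $\Psi$ gives $\mathcal L(L_DE_A)=\mathcal L(L_DE_B)$. Part~(a) (or (b)) of Theorem~\ref{thm:minimal_gap_law} then gives $R_A(D)=R_B(D)$ for every Borel gap decision using $L_D$.

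\emph{Necessity of recoverability.} Assume $T$ is exactly decision-consistent, and define $\Psi(t)=\mathcal L(L_DE)$ for any $E\in\mathfrak E$ with $T(E)=t$. The only issue is well-definedness, which I will argue by contradiction. Suppose $T(E_A)=T(E_B)$ but the gap laws differ. The converse half of Theorem~\ref{thm:minimal_gap_law}(b) then supplies a Borel set $A$; taking $\calS=A^c$ and $b_D=0$ gives a decision in $\mathfrak D(L_D)$ whose risks differ. This contradicts decision-consistency.

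Defining $\Psi$ requires choosing a representative in each fiber of $T$. This is only an axiom-of-choice selection on $\operatorname{range}(T|_{\mathfrak E})$, and no measurability of $\Psi$ is needed, since the statement asks only for a map. Recoverability says exactly that $T$ is at least as fine as the gap-law map $E\mapsto\mathcal L(L_DE)$. Hence any decision-consistent representation refines the gap law. Gap-minimal losslessness additionally forbids $T$ from separating fields with equal gap law. So such a $T$ induces exactly the same partition of $\mathfrak E$ as the gap law, i.e.\ it coincides with the gap law up to a bijective relabeling, which gives the ``coarsest'' claim.

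\emph{Polyhedral classes.} The necessity argument used only the fact that unequal gap laws are separated by some decision in the class. If the class contains all halfspace thresholds $h^\top L_DE\le t$, equality of risks on these determines the law of every projection $h^\top L_DE$. Cram\'er--Wold, as in Theorem~\ref{thm:minimal_gap_law}(d), then yields equal laws, so the same contradiction goes through.

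\emph{Blackwell part.} This is the standard garbling argument. Take a bounded loss $\ell(\theta,\delta)$ and a (possibly randomized) rule $\delta_B$ acting on $B$'s gap observation. Define the rule for $A$ as the composition: draw $Y\sim G(x,\cdot)$ from $A$'s observation $x$, then apply $\delta_B(Y)$. By the kernel identity, for each $\theta$ the law of $Y$ equals $\mathcal L(\theta+L_DE_{B,\theta})$. So the composed rule has exactly $B$'s risk at every $\theta$, and $A$ can match any rule of $B$, which is the domination claim.

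The closing incompleteness sentence follows from the necessity direction. A scalar or marginal benchmark that does not determine the gap experiment identifies two fields with different gap laws, and Theorem~\ref{thm:minimal_gap_law}(b) then produces a Borel (hence bounded $0$--$1$) gap loss whose risks differ.

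I expect the main obstacle to be bookkeeping rather than mathematics. One point is the well-definedness of $\Psi$ on the range. The other is being careful that the decision class really contains the separating set with the $b_D$, $\calS$ conventions of the finite-gap-decision definition; absorbing the offset into $\calS$ handles this.
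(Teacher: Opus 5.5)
Your proposal is correct and follows essentially the same route as the paper: sufficiency from Theorem~\ref{thm:minimal_gap_law}, necessity from its Borel-minimality converse (so the gap law is constant on the fibers of $T$ and $\Psi$ is well defined), Cram\'er--Wold for the halfspace-threshold class, and the standard garbling criterion for the Blackwell part, which you spell out more explicitly than the paper does. One cosmetic point: no choice of fiber representatives is needed, since $\Psi(t)$ can be defined directly as the unique gap law shared by the whole fiber.
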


\begin{proof}
If $\mathcal L(L_D E_m)$ is recoverable from $T(E_m)$ on $\operatorname{range}(T|_{\mathfrak E})$, then Theorem~\ref{thm:minimal_gap_law} implies equality of all risks whenever $T(E_A)=T(E_B)$ for $E_A,E_B\in\mathfrak E$.  Conversely, suppose that $T$ is exactly decision-consistent on $\mathfrak E$ for all Borel finite gap decisions.  If $T(E_A)=T(E_B)$ with $E_A,E_B\in\mathfrak E$, then the risks agree for every Borel success set in gap space.  Hence $\mathcal L(L_D E_A)=\mathcal L(L_D E_B)$ by the Borel minimality part of Theorem~\ref{thm:minimal_gap_law}.  Therefore the gap law is constant on the fibers of $T$ inside $\mathfrak E$, so $\Psi(T(E_m))=\mathcal L(L_D E_m)$ is well defined on $\operatorname{range}(T|_{\mathfrak E})$.  This identifies $\Psi$ on the image of $T$; no additional measurability of $\Psi$ is needed for the identification claim, unless one separately specifies a standard-Borel representation space and asks for a measurable recovery map.  The halfspace-threshold statement follows from the Cram\'er--Wold theorem.  The Blackwell statement is the standard comparison-of-experiments criterion applied to the family of gap observations indexed by the margin parameter $\theta$.
\end{proof}

\begin{proposition}[Computable Gaussian garbling and uniform risk transfer]
\label{prop:gaussian_garbling}
Consider two Gaussian gap experiments with the same margin family,
\begin{equation}
    \mathsf E_A=\{\calN(\theta,\Sigma_A):\theta\in\Theta\},
    \qquad
    \mathsf E_B=\{\calN(\theta,\Sigma_B):\theta\in\Theta\},
\end{equation}
where $\Sigma_A$ and $\Sigma_B$ are positive semidefinite and may be singular.  If
\begin{equation}
    \Sigma_B-\Sigma_A\succeq0,
\end{equation}
then $A$ Blackwell-dominates $B$ among randomized decision rules.  More explicitly, let
\begin{equation}
    D=\Sigma_B-\Sigma_A\succeq0,
    \qquad
    G\sim\calN(0,D)
\end{equation}
be independent of $X_A\sim\calN(\theta,\Sigma_A)$.  The Markov kernel
\begin{equation}
    Q(x,dy)=\calN(x,D)(dy)
\end{equation}
satisfies
\begin{equation}
    \calN(\theta,\Sigma_A)Q=\calN(\theta,\Sigma_B),
    \qquad \theta\in\Theta.
\end{equation}
Consequently, for every bounded loss $\ell(a,\theta)$ and every decision rule $\delta_B$ using experiment $B$, one obtains a randomized rule using experiment $A$ by drawing an independent $G\sim\calN(0,D)$ and applying
\begin{equation}
    \delta_A(x,G)=\delta_B(x+G).
\end{equation}
This randomized rule has exactly the same risk for every $\theta$.  Therefore
\begin{equation}
    \inf_{\delta_A\ \mathrm{randomized}}\sup_{\theta\in\Theta}
    \E_\theta\ell(\delta_A(X_A),\theta)
    \le
    \inf_{\delta_B\ \mathrm{randomized}}\sup_{\theta\in\Theta}
    \E_\theta\ell(\delta_B(X_B),\theta),
\end{equation}
with the same statement for Bayes risks under any prior on $\Theta$.
\end{proposition}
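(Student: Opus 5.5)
The proof is a direct construction. Everything follows from showing that the explicit Gaussian kernel $Q$ turns experiment $A$ into experiment $B$, uniformly in $\theta$.

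\emph{Step 1: the garbling identity.} Since $D=\Sigma_B-\Sigma_A\succeq0$, the law $\calN(0,D)$ is well defined even when $D$ is singular. It is the pushforward of a standard normal under $D^{1/2}$, supported on $\operatorname{range}D$. Hence $Q(x,dy)=\calN(x,D)(dy)$ is a Markov kernel: $x\mapsto Q(x,C)$ is measurable for each Borel $C$, because $Q(x,C)=\Prob[x+G\in C]$ and $(x,g)\mapsto \ind_C(x+g)$ is jointly measurable, so Fubini applies. Take $X_A\sim\calN(\theta,\Sigma_A)$ independent of $G\sim\calN(0,D)$. Then $X_A+G$ has law $\calN(\theta,\Sigma_A)Q$. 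Its characteristic function is
\[
\exp\bigl(it^\top\theta-\tfrac12 t^\top\Sigma_A t\bigr)\exp\bigl(-\tfrac12 t^\top D t\bigr)=\exp\bigl(it^\top\theta-\tfrac12 t^\top\Sigma_B t\bigr),
\]
which is the characteristic function of $\calN(\theta,\Sigma_B)$. This holds for every $\theta\in\Theta$, and the kernel does not depend on $\theta$. That independence from $\theta$ is the essential point, and it is exactly the Blackwell garbling criterion of Theorem~\ref{thm:gap_law_iff}.

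\emph{Step 2: risk transfer.} Let $\delta_B$ be a randomized rule. Represent it as $\delta_B(y,V)$ with auxiliary randomization $V$, taken independent of the data and of $G$. Define $\delta_A(x,G,V)=\delta_B(x+G,V)$. By Step 1, the pair $(X_A+G,V)$ has the same joint law as $(X_B,V)$ under every $\theta$. Therefore
\[
\E_\theta\ell(\delta_A(X_A,G,V),\theta)=\E_\theta\ell(\delta_B(X_B,V),\theta)
\]
for each $\theta$. Boundedness of $\ell$ together with measurability of $\ell(\cdot,\theta)$ makes both expectations finite and equal, by Fubini.

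\emph{Step 3: the minimax and Bayes inequalities.} Because the risks agree pointwise in $\theta$, the supremum over $\Theta$ is the same for both rules, and so is the integral against any prior $\pi$. Each randomized rule for $B$ thus yields a randomized rule for $A$ with equal worst-case and Bayes risk. Taking the infimum over $\delta_B$ gives
\[
\inf_{\delta_A}\le\inf_{\delta_B},
\]
since the infimum on the left runs over a larger set.

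\emph{Main obstacle.} The only delicate point is bookkeeping for randomized rules and measurability in the singular case. I would handle it by using the representation $x+D^{1/2}Z$ with $Z\sim\calN(0,I)$. This makes the kernel an explicit measurable map and avoids any density argument. Singular $\Sigma_A$ or $\Sigma_B$ then cause no issue, because the whole argument runs through characteristic functions.
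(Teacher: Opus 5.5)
Your proposal is correct and takes the same route as the paper, which builds the argument into the statement itself: the $\theta$-independent convolution kernel $Q(x,dy)=\calN(x,D)(dy)$ maps $\calN(\theta,\Sigma_A)$ to $\calN(\theta,\Sigma_B)$, and the rule $\delta_B(x+G)$ then has the same risk pointwise in $\theta$. Your characteristic-function check, the $x+D^{1/2}Z$ representation for singular $D$, and the handling of auxiliary randomization fill in details the paper leaves implicit.
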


\begin{remark}[Loewner comparison is also necessary for full Gaussian shifts]
For full location families $\Theta=\R^d$, the preceding Loewner condition is not only sufficient but necessary for Blackwell dominance of Gaussian experiments: comparison of normal linear experiments implies that if $\{\calN(\theta,\Sigma_A):\theta\in\R^d\}$ Blackwell-dominates $\{\calN(\theta,\Sigma_B):\theta\in\R^d\}$, then $\Sigma_B-\Sigma_A\succeq0$ \cite{Torgersen1991}.  Thus, under the complete shift-family hypothesis, Loewner order is the exact comparison order; outside it, and for fixed decision geometries, one can still use decision-specific local comparisons.
\end{remark}

\begin{remark}[Loewner dominance versus Slepian dominance]
Proposition~\ref{prop:gaussian_garbling} is a strong, uniform statement: when the gap kernels are ordered in Loewner sense, one Gaussian gap experiment can be simulated from the other by adding independent noise, so all bounded gap-decision risks transfer.  The Slepian comparison used later is weaker but applies to a different structured situation: equal marginal variances, ordered correlations, and common-threshold union-of-crossings events.  Generic QEM methods will often be incomparable in Loewner order; then one must either use a decision-specific comparison or develop a quantitative Le Cam deficiency bound.  We keep the present paper to the exact, computable garbling case rather than introducing an abstract deficiency calculus without a closed-form QEM bound.  A computable first-order decision-specific comparison for Loewner-incomparable kernels is given in Proposition~\ref{prop:first_order_decision_kernel_comparison} below.
\end{remark}

\section{Marginal no-go: why MSE and marginals cannot benchmark decisions}

This section explains why common QEM benchmarks fail as decision benchmarks: they do not retain the joint gap law.

A pointwise benchmark is any score that depends on the residual field only through marginal information at each landscape point: pointwise bias, pointwise variance, pointwise marginal law, confidence intervals, or separable losses such as MSE, RMSE, or MAE.

\begin{definition}[Marginal benchmark]
A benchmark $\mathcal B$ is pointwise marginal if it has the form
\begin{equation}
    \mathcal B_m=\Omega_{\mathrm{bench}}\!\left(\{a_m(x_i)\}_{i=0}^{n-1},\{K_m(i,i)\}_{i=0}^{n-1},\{\mathcal L(E_m(x_i))\}_{i=0}^{n-1}\right),
\end{equation}
or if it depends on $E_m$ through a separable expectation
\begin{equation}
    \mathcal B_m=\frac1n\sum_{i=0}^{n-1}\E\,\ell_i(E_m(x_i))
\end{equation}
for some pointwise losses $\ell_i$.
\end{definition}

\begin{theorem}[Exponential no-go theorem for marginal QEM benchmarking]
\label{thm:nogo_marginal}
No deterministic selector whose input consists only of pointwise marginal benchmark values can uniformly identify the decision-optimal QEM method over the class of finite Gaussian residual fields.  More precisely:
\begin{enumerate}[label=(\alph*)]
    \item There exist two centered Gaussian residual fields $E_A,E_B$ with identical pointwise marginal distributions, identical pointwise variances, identical pointwise confidence intervals, and equal MSE, but with different argmin risks.  Moreover, any randomized selector whose input consists only of the marginal benchmark values fails with probability at least $1/2$ on one of the two labeled instances.
    \item There exist two centered Gaussian residual fields $E_A,E_B$ such that
    \begin{equation}
        \MSE_A<\MSE_B
        \qquad\text{but}\qquad
        R_A>R_B.
    \end{equation}
    Therefore any rule that always selects the lower-MSE method is not uniformly decision-consistent.
    \item Under the shot scaling $L\xi_m=B^{-1/2}Z_m$, the same construction can satisfy
    \begin{equation}
        \MSE_A(B)<\MSE_B(B)
        \qquad\text{but}\qquad
        I_A<I_B,
    \end{equation}
    where $I_m$ denotes the one-gap Gaussian tail exponent defined by
    \begin{equation}
        R_m(B)=\exp[-B I_m+o(B)].
    \end{equation}
    Hence
    \begin{equation}
        \lim_{B\to\infty}\frac1B\log\frac{R_A(B)}{R_B(B)}=I_B-I_A>0.
    \end{equation}
    Thus the MSE-selected method can be exponentially worse in downstream decision reliability.
\end{enumerate}
\end{theorem}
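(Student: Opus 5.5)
The plan is to build explicit small Gaussian examples for each part. The one mechanism throughout is the common-mode versus independent split already seen in the introduction: pointwise marginals see only $\operatorname{diag}K_m$ and $a_m$, while the argmin risk sees $\Sigma_m=LK_mL^\top$.

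\textbf{Part (a).} Take $n=3$ and $a_A=a_B=0$. Let $K_A=s^2\bigl(\rho\1\1^\top+(1-\rho)I_3\bigr)$ with $\rho\in(0,1)$, and let $K_B=s^2I_3$. Then every coordinate has the same law $\calN(0,s^2)$, so pointwise marginal laws, variances, confidence intervals, every separable loss, and hence the MSE all coincide. The gap kernels are $\Sigma_A=(1-\rho)s^2\bigl(\begin{smallmatrix}2&1\\1&2\end{smallmatrix}\bigr)$ and $\Sigma_B=s^2\bigl(\begin{smallmatrix}2&1\\1&2\end{smallmatrix}\bigr)$. Since $\Sigma_B-\Sigma_A=\rho\,\Sigma_B\succ0$, Proposition~\ref{prop:gaussian_garbling} gives $R_A\le R_B$. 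Strictness follows because both failure events are unions of two crossings with the same correlation $1/2$. The crossing thresholds are $\Delta_i/\sqrt{2(1-\rho)}s$ and $\Delta_i/\sqrt2 s$ respectively, and the failure probability is strictly decreasing in the threshold, since the Gaussian density is positive.

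For the selector bound, define two labeled instances. In the first, method A is decision-optimal and the pair presented is $(E_A,E_B)$. In the second, the labels are swapped. Both instances yield identical marginal benchmark inputs, so any randomized selector outputs the same distribution over labels in each. Its success probabilities therefore sum to $1$, and it fails with probability at least $1/2$ on one of the two instances.

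\textbf{Parts (b) and (c).} Use a one-parameter version of the introduction's example. Take $K_A=c\1\1^\top+\varepsilon I_3$ and $K_B=\sigma^2 I_3$ with $c+\varepsilon>\sigma^2>\varepsilon$. Then $\MSE_A=c+\varepsilon>\MSE_B$, while $\Sigma_A=\varepsilon\,\Sigma_0\prec\sigma^2\Sigma_0=\Sigma_B$, where $\Sigma_0=\bigl(\begin{smallmatrix}2&1\\1&2\end{smallmatrix}\bigr)$. Relabelling the two methods gives the stated orientation: the lower-MSE method has the larger argmin risk. Strictness again follows from monotonicity of the orthant probability under a scalar rescaling of the covariance.

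For (c), replace the covariances by per-unit kernels divided by $B$. The MSE inequality holds for every $B$, because both sides scale as $1/B$. For the exponent, use a single-gap decision, or equivalently take $n=2$ for the one-gap exponent. The risk is $\Phi(-\Delta\sqrt B/\sigma_m)$ with $\sigma_m^2=\Sigma_m$. The Mills-ratio asymptotics $\log\Phi(-t)=-t^2/2+O(\log t)$ give $I_m=\Delta^2/(2\Sigma_m)$. The lower-MSE method has the larger $\Sigma_m$, hence the smaller $I_m$, and the displayed limit of $\tfrac1B\log(R_A/R_B)$ follows by subtracting the two exponents.

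\textbf{Checks.} In the $n=3$ case, if one wants the exponent for the full argmin rather than a single gap, the union bound plus the single-gap lower bound shows the exponent is $\min_i\Delta_i^2/(2\Sigma_m(i,i))$. This is the same comparison, since both kernels are proportional to $\Sigma_0$. The step needing most care is the claim that pointwise confidence intervals coincide. I will make it hold by construction, by defining CIs as functionals of the marginal law. Strict risk inequalities then rest only on the positivity of Gaussian densities. No real obstacle is expected.
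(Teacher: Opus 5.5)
Your proof is correct. It uses the same mechanism as the paper: equal diagonals with different off-diagonal structure, the label-swap argument, common-mode inflation of MSE, and Gaussian tail exponents. The instantiation differs.

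\textbf{How the paper does it.} The paper works with $n=2$ throughout, taking $K_A=\sigma^2 I_2$ versus $K_B=\sigma^2\bigl(\begin{smallmatrix}1&\rho\\ \rho&1\end{smallmatrix}\bigr)$. Each argmin risk is then one explicit value $\Phi\bigl(-\Delta/(\sqrt{2(1-\rho)}\,\sigma)\bigr)$. Part (b) is obtained from (a) by adding an independent common-mode term $C\1$ to the decision-better method; $L$ cannot see it, and it raises that method's MSE by $\tau^2$. Part (c) reads off $I_A=\Delta^2/(4\sigma^2)$ and $I_B=\Delta^2/(4(1-\rho)\sigma^2)$ from the same pair, so one construction serves all three parts.

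\textbf{What your $n=3$ version costs and buys.} You face a two-competitor orthant event. That is why you need the proportionality $\Sigma_A=(1-\rho)\Sigma_B$ and the positive-density monotonicity argument. For the exponent in (c) you also need either the union bound or a switch back to $n=2$. In exchange, you exhibit the separation for a genuine multi-competitor argmin and recover the introduction's numerical example. Your (b) is the paper's common-mode inflation in another form. The field $c\1\1^\top+\varepsilon I_3$ equals your equal-MSE field from (a), $(\sigma^2-\varepsilon)\1\1^\top+\varepsilon I_3$, plus a common mode of variance $c+\varepsilon-\sigma^2>0$.

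\textbf{One step to remove.} Proposition~\ref{prop:gaussian_garbling} does not give $R_A\le R_B$ for the fixed plug-in argmin rule. Blackwell dominance compares the best randomized risks attainable from each experiment. It does not compare one fixed rule applied to both experiments. A Loewner ordering alone need not order failure probabilities for a shifted, non-symmetric orthant success set. Your scalar-rescaling argument proves the strict inequality on its own, so you can drop the garbling appeal.
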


\begin{proof}
For (a), consider two candidate points $\calX=\{x_0,x_1\}$ and let the ideal margin be $\Delta>0$.  Let both methods have zero bias and pointwise variance $\sigma^2$ at both points.  Method $A$ has independent residuals:
\begin{equation}
    K_A=\sigma^2\begin{pmatrix}1&0\\0&1\end{pmatrix}.
\end{equation}
Method $B$ has positively correlated residuals:
\begin{equation}
    K_B=\sigma^2\begin{pmatrix}1&\rho\\\rho&1\end{pmatrix},\qquad 0<\rho<1.
\end{equation}
The pointwise marginals are identical: both coordinates are $\calN(0,\sigma^2)$ under both methods.  Hence every pointwise marginal benchmark ties the two methods.  But the residual gap variances are
\begin{equation}
    \Var(E_A(x_1)-E_A(x_0))=2\sigma^2,
\end{equation}
whereas
\begin{equation}
    \Var(E_B(x_1)-E_B(x_0))=2(1-\rho)\sigma^2.
\end{equation}
Thus
\begin{equation}
    R_A=\Phi\left(-\frac{\Delta}{\sqrt{2}\sigma}\right)
    >
    \Phi\left(-\frac{\Delta}{\sqrt{2(1-\rho)}\sigma}\right)=R_B.
\end{equation}
Since the benchmark value is identical for both methods, no deterministic selection rule based only on the marginal benchmark values can identify the better method uniformly.  Indeed, swapping the covariance assignments between the method labels preserves all marginal benchmark values while reversing the decision-optimal label, so any deterministic benchmark-based selection rule fails on one of the two labeled instances.  The same symmetric two-labeling argument proves the randomized-selector claim in part (a): any randomized selector whose input is only the marginal benchmark values fails with probability at least $1/2$ on one of the two labeled instances.

For (b), modify method $B$ by adding an independent common-mode Gaussian variable $C\1$ with variance $\tau^2$.  This does not change $R_B$, because $L(C\1)=0$, but it increases the ambient MSE by $\tau^2$.  Since the construction in part (a) has equal MSE before common-mode inflation, any choice $\tau^2>0$ gives $\MSE_A<\MSE_B$ while retaining $R_A>R_B$.

For (c), use the same two-point construction with residual gap noise scaled as $B^{-1/2}Z_m$.  Equivalently, write the common-mode inflation as $B^{-1/2}C\1$, with $C\sim\calN(0,\tau^2)$ independent of the differential gap noise.  Then the ambient MSE still scales as $1/B$, but its per-unit constant can be made larger for method $B$ without changing the residual gap law.  The one-gap Gaussian tail exponents, in the sense of $R_m(B)=\exp[-B I_m+o(B)]$, are
\begin{equation}
    I_A=\frac{\Delta^2}{4\sigma^2},
    \qquad
    I_B=\frac{\Delta^2}{4(1-\rho)\sigma^2},
\end{equation}
so $I_B>I_A$.  The displayed logarithmic risk ratio follows from the Gaussian tail exponent.
\end{proof}

\paragraph{Exponent-language form.}
Part (c) of Theorem~\ref{thm:nogo_marginal} is the MSE--exponent separation used later in the Gaussian analysis: there can be methods $A,B$ with $\MSE_A<\MSE_B$ but $I_A<I_B$.  Thus the MSE-selected method can have exponentially worse large-shot decision reliability.  This is the same no-go phenomenon, not a second independent theorem.

\paragraph{Diagonal information is insufficient.}
Even if the full bias vector $a_m$ and the diagonal $\diag(K_m)$ are known exactly, the argmin risk is not identified.  The missing terms are the off-diagonal covariances entering
\begin{equation}
    \Sigma_m(i,j)=K_m(i,j)-K_m(i,0)-K_m(0,j)+K_m(0,0).
\end{equation}
This is not a separate contribution from Theorem~\ref{thm:nogo_marginal}; it is the same obstruction expressed directly in covariance coordinates.

\paragraph{Three sharpness witnesses.}
The no-go theorem has three nonredundant witnesses that should be checked in any decision-aware QEM benchmark.
\begin{enumerate}
    \item \textbf{Common-mode inflation:} adding $C\1$ can make MSE arbitrarily worse while leaving every gap decision unchanged.
    \item \textbf{Same pointwise marginals, different gap risk:} identical pointwise confidence intervals do not determine $\Var(E_i-E_j)$.
    \item \textbf{Same one-dimensional gap marginals, different joint risk:} in dimensions $d\ge2$, two Gaussian gap vectors can have the same gap means and variances but different gap correlations, changing orthant, ranking, top-$k$, finite-shot prefactors, and multi-constraint risk.  Thus diagonal gap variance is complete only for the leading exponent of a single active one-facet error; the full decision kernel $\Sigma=LKL^\top$ is the second-order Gaussian object for joint decision geometry.
\end{enumerate}

\begin{proposition}[Linear-pullback realizable no-go witness]
\label{prop:qem_realizable_nogo}
The MSE--decision separation of Theorem~\ref{thm:nogo_marginal} can occur inside a restricted linear pullback family, not only for freely chosen adversarial covariances.  Consider two landscape points and take centered primitive residuals, so both induced residual fields have zero bias and the MSE comparison below is purely a variance comparison.  Let the primitive covariance be
\begin{equation}
    C_{\mathrm{dev}}=\diag(\sigma^2,\sigma^2,q^2,\varepsilon^2)
\end{equation}
with independent primitive components.  Let method $A$ use the map
\begin{equation}
    M_A=\begin{pmatrix}1&0&0&0\\0&1&0&0\end{pmatrix},
\end{equation}
and method $B$ use
\begin{equation}
    M_B=\begin{pmatrix}0&0&1&0\\0&0&1&1\end{pmatrix}.
\end{equation}
Then
\begin{equation}
    K_A=\sigma^2 I_2,
    \qquad
    K_B=\begin{pmatrix}q^2&q^2\\q^2&q^2+\varepsilon^2\end{pmatrix}.
\end{equation}
The ambient variance contributions to MSE satisfy
\begin{equation}
    \frac12\tr K_A=\sigma^2,
    \qquad
    \frac12\tr K_B=q^2+\frac{\varepsilon^2}{2},
\end{equation}
whereas the gap variances are
\begin{equation}
    L K_A L^\top=2\sigma^2,
    \qquad
    L K_B L^\top=\varepsilon^2.
\end{equation}
Choosing $q^2>\sigma^2$ and $\varepsilon^2<2\sigma^2$ gives
\begin{equation}
    \MSE_A<\MSE_B
    \qquad\text{but}\qquad
    R_A>R_B
\end{equation}
for every positive ideal margin in the Gaussian two-point decision.  Thus the no-go is realizable by linear mitigation-style pullbacks using the same primitive covariance with method-dependent pullback maps.
\end{proposition}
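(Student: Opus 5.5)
The argument is a direct computation inside the linear pullback model: compute each induced ambient covariance as $K_m=M_mC_{\mathrm{dev}}M_m^\top$, read off MSE from the trace and the gap variance from the contrast $L=(-1,\,1)$, then convert the gap-variance comparison into an argmin-risk comparison with the one-gap Gaussian formula from the proof of Theorem~\ref{thm:nogo_marginal}.

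\textbf{Step 1: the induced covariances.} The primitive residual $\eta\sim\calN(0,C_{\mathrm{dev}})$ is centered, so $E_m=M_m\eta$ has zero bias. Its covariance is $K_m=M_mC_{\mathrm{dev}}M_m^\top$.

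For $A$, the map $M_A$ selects the first two independent coordinates, so $K_A=\sigma^2I_2$.

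For $B$, the rows of $M_B$ are $e_3^\top$ and $e_3^\top+e_4^\top$. Hence $K_B(1,1)=q^2$, $K_B(1,2)=q^2$, and $K_B(2,2)=q^2+\varepsilon^2$, which is the displayed matrix.

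Since both biases vanish, Eq.~\eqref{eq:mse_v6} gives $\MSE_m=\tfrac12\tr K_m$. This yields $\sigma^2$ for $A$ and $q^2+\varepsilon^2/2$ for $B$.

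\textbf{Step 2: the gap variances.} For two points, $LKL^\top=K(1,1)+K(2,2)-2K(1,2)$. This equals $2\sigma^2$ for $A$. For $B$ it equals $q^2+(q^2+\varepsilon^2)-2q^2=\varepsilon^2$. The common primitive component $q$ enters both landscape points identically, so it cancels in the gap, which is exactly the common-mode gauge invariance proposition in pullback form.

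\textbf{Step 3: MSE and risk inequalities.} If $q^2>\sigma^2$, then $q^2+\varepsilon^2/2>\sigma^2$, so $\MSE_A<\MSE_B$. For risk, the argmin fails exactly when $\Delta+(LE_m)\le0$, so $R_m=\Phi\bigl(-\Delta/\sqrt{LK_mL^\top}\bigr)$. Because $\Phi$ is strictly increasing and $\varepsilon^2<2\sigma^2$, we get $R_B<R_A$ for every $\Delta>0$.

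\textbf{Step 4: the degenerate case.} If $\varepsilon=0$, the gap under $B$ is deterministic and $R_B=0<R_A$. The strict inequality therefore still holds, with the convention $\Phi(-\infty)=0$.

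\textbf{Main obstacle.} The only delicate point is making sure the MSE comparison is purely a variance comparison. Zero bias is guaranteed because the primitive residuals are centered and the pullback is linear. Beyond that, the remaining work is only the routine matrix arithmetic in Steps 1 and 2.
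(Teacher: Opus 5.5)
Your proposal is correct and follows the paper's own argument: compute $K_m=M_mC_{\mathrm{dev}}M_m^\top$ directly, compare MSE through $\tfrac12\tr K_m$, and use the fact that the one-gap Gaussian risk $\Phi\bigl(-\Delta/\sqrt{LK_mL^\top}\bigr)$ increases with the gap variance. Your separate treatment of the degenerate case $\varepsilon=0$, where $R_B=0$, is a small extra step that the paper leaves implicit.
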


\begin{proof}
The covariance identities follow by direct multiplication $K_m=M_mC_{\mathrm{dev}}M_m^\top$.  The trace comparison gives the MSE inequality, while the one-dimensional Gaussian argmin risk is monotone increasing in the gap variance.  The map $M_B$ has a large common-mode component $q$ and a small differential component $\varepsilon$, which is the local covariance pattern needed for MSE to worsen while the decision gap improves.  This establishes a linear-pullback witness.  Specific QEM protocols such as CDR or ZNE may approximate this structure on critical bands, but the proposition itself only requires the stated pullback representation.
\end{proof}

\section{QEM pullback restriction: why decision kernels have physical pullback geometry}

The preceding no-go result is not just an abstract covariance pathology; in QEM the gap kernel is induced by a shared physical noise source and a mitigation map.  In generic best-arm identification, the arm noises are often modeled as independent or freely chosen covariance parameters.  In QEM, by contrast, the residual field is produced by a shared noisy quantum device and then passed through a known or learned mitigation map.  Therefore $K_m$ is a pullback of physical device covariance, and $\Sigma_m=L K_mL^\top$ lives in a restricted family.  This restricted pullback geometry, not only the quotient operation, is the quantum-mitigation content of the theory.

\subsection{From physical QEM pullbacks to decision kernels}

Stack all primitive sample means used by a mitigation method into a vector $Y(B)\in\R^N$.  The entries of $Y(B)$ may correspond to noise scales, folded circuits, Pauli terms, quasiprobability branches, training circuits, readout-mitigation components, or repeated seeds.  We separate the actual finite-budget covariance from the limiting per-unit-budget covariance:
\begin{equation}
    C_{\mathrm{dev}}(B)=\Cov(Y(B)),
    \qquad
    C_{\mathrm{dev}}=\lim_{B\to\infty} B C_{\mathrm{dev}}(B)
\end{equation}
whenever this shot-scaling limit exists under fixed allocation fractions.

For a fixed deterministic linear QEM estimator,
\begin{equation}
    \widehat F_m=M_mY(B),
\end{equation}
the actual finite-budget ambient covariance and decision covariance are exactly
\begin{equation}
    K_m^{(B)}=M_m C_{\mathrm{dev}}(B) M_m^\top,
    \qquad
    \Sigma_m^{(B)}=L K_m^{(B)}L^\top.
\end{equation}
For a locally differentiable QEM procedure $\widehat F_m=g_m(Y(B))$, let $\bar Y_B=\E Y(B)$ and let $J_m=Dg_m(\bar Y_B)$ denote the local Jacobian.  Assume that
\begin{equation}
    g_m(Y(B))=g_m(\bar Y_B)+J_m(Y(B)-\bar Y_B)+r_B,
    \qquad
    \E\norm{r_B}_2^2=o(B^{-1}),
\end{equation}
and that $Y(B)-\bar Y_B=O_p(B^{-1/2})$.  This is the covariance-level delta-method condition; it is stronger than a purely distributional first-order expansion.  For bounded primitive sample means, it is satisfied under standard local smoothness assumptions, for example when $g_m$ has a locally Lipschitz Jacobian on the relevant neighborhood.  Then the same pullback formula holds at leading order:
\begin{equation}
    K_m^{(B)}=J_m C_{\mathrm{dev}}(B)J_m^\top+o(B^{-1}),
    \qquad
    \Sigma_m^{(B)}=(LJ_m)C_{\mathrm{dev}}(B)(LJ_m)^\top+o(B^{-1}).
\end{equation}
If a mitigation map is fitted from data, these identities are interpreted conditionally on the training data, with the fitted map fixed.  Equivalently, one may enlarge $Y(B)$ to include both training and target primitives and apply the delta method to the full procedure.

To avoid overloading notation, below $M_m$ denotes either the fixed linear map in the exact case or the Jacobian of the locally linearized procedure when only the leading shot-scaling covariance is meant.  The corresponding per-unit-budget kernels used in Gaussian exponents are
\begin{equation}
\label{eq:qem_pullback_kernel}
    K_m=M_m C_{\mathrm{dev}}M_m^\top,
    \qquad
    \Sigma_m=L K_mL^\top
    =(LM_m)C_{\mathrm{dev}}(LM_m)^\top.
\end{equation}
This equation is the structural difference from generic ranking-and-selection models.  Once the primitive record, physical covariance model, and mitigation map class are fixed, the kernel is not selected freely from the positive-semidefinite cone; it lies in the restricted family generated by $C_{\mathrm{dev}}$ and $M_m$.

\begin{theorem}[QEM pullback restriction]
\label{thm:qem_pullback}
Fix the primitive record, the physical covariance model, and the mitigation map class.  For any fixed deterministic linear QEM method, the finite-budget identities
\begin{equation}
    K_m^{(B)}=M_m C_{\mathrm{dev}}(B)M_m^\top,
    \qquad
    \Sigma_m^{(B)}=(LM_m)C_{\mathrm{dev}}(B)(LM_m)^\top
\end{equation}
hold exactly.  For any first-order-linearized QEM method with Jacobian $M_m$ satisfying the $L^2$ remainder condition above, the same identities hold at leading order:
\begin{equation}
    K_m^{(B)}=M_m C_{\mathrm{dev}}(B)M_m^\top+o(B^{-1}),
    \qquad
    \Sigma_m^{(B)}=(LM_m)C_{\mathrm{dev}}(B)(LM_m)^\top+o(B^{-1}).
\end{equation}
In the shot-scaling regime, both cases have the limiting per-unit kernel
\begin{equation}
    K_m=M_m C_{\mathrm{dev}}M_m^\top,
    \qquad
    \Sigma_m=(LM_m)C_{\mathrm{dev}}(LM_m)^\top.
\end{equation}
If $M_m$ is learned or random, the statement is conditional on the fitted map being fixed; otherwise the training and target records must be included in the primitive vector and $M_m$ replaced by the total Jacobian.  Consequently,
\begin{equation}
    \rank(\Sigma_m)\le \rank(C_{\mathrm{dev}}).
\end{equation}
Moreover, for any contrast direction $v\in\R^{n-1}$,
\begin{equation}
    v^\top\Sigma_m v=0
    \quad\Longleftrightarrow\quad
    C_{\mathrm{dev}}^{1/2}M_m^\top L^\top v=0.
\end{equation}
Thus $v$ is a zero-variance direction in decision space exactly when the physical device covariance, after the mitigation pullback, induces no fluctuation along that contrast.  Two QEM methods differ not only by their bias $a_m$, but by how their mitigation maps pull the same physical noise into quotient space.
\end{theorem}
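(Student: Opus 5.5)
The plan is to prove the four claims in order: the exact linear identity, the linearized identity, the per-unit limit, and then the rank and null-direction consequences.

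\textbf{Exact linear case.} If $\widehat F_m=M_mY(B)$ with $M_m$ deterministic, bilinearity of covariance gives $\Cov(M_mY)=M_m\Cov(Y)M_m^\top$. This is the first identity. Applying the same rule to the linear map $L$ gives $\Sigma_m^{(B)}=LK_m^{(B)}L^\top=(LM_m)C_{\mathrm{dev}}(B)(LM_m)^\top$.

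\textbf{Linearized case.} Write $\widehat F_m-\E\widehat F_m=J_m(Y-\bar Y_B)+(r_B-\E r_B)$ and expand the covariance into four pieces. The main term is $J_mC_{\mathrm{dev}}(B)J_m^\top$, and the remainder covariance is bounded by $\E\norm{r_B}^2=o(B^{-1})$. The two cross terms are the step needing care. I would bound them by Cauchy--Schwarz: $\norm{\Cov(J_m(Y-\bar Y_B),r_B)}\le (\E\norm{J_m(Y-\bar Y_B)}^2)^{1/2}(\E\norm{r_B}^2)^{1/2}$. This requires $\E\norm{Y-\bar Y_B}^2=O(B^{-1})$, which holds because $\tr C_{\mathrm{dev}}(B)=O(B^{-1})$ under the shot-scaling assumption. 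The cross terms are then $O(B^{-1/2})\cdot o(B^{-1/2})=o(B^{-1})$. Conjugating by $L$ preserves the $o(B^{-1})$ error. I expect this cross-term bound to be the only genuinely delicate point.

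\textbf{Per-unit limit.} Multiply the identities by $B$ and use $BC_{\mathrm{dev}}(B)\to C_{\mathrm{dev}}$. This needs $J_m$ to be fixed, or at least convergent, as $B\to\infty$. In the conditional-on-fitted-map reading, $M_m$ is a fixed matrix and the exact identity applies directly. The enlarged-record reading simply reapplies the same delta-method argument to the total Jacobian.

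\textbf{Rank bound.} Since $\Sigma_m=AC_{\mathrm{dev}}A^\top$ with $A=LM_m$, we have $\rank(\Sigma_m)\le\rank(C_{\mathrm{dev}})$ by submultiplicativity of rank.

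\textbf{Null directions.} Write $v^\top\Sigma_m v=\norm{C_{\mathrm{dev}}^{1/2}M_m^\top L^\top v}_2^2$, using the PSD square root of $C_{\mathrm{dev}}$. This quantity vanishes if and only if the vector $C_{\mathrm{dev}}^{1/2}M_m^\top L^\top v$ is zero, which is the stated equivalence.
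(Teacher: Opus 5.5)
Your proposal is correct and follows the paper's proof essentially step for step. It uses covariance bilinearity for the exact case, the covariance-level delta method with Cauchy--Schwarz control of the remainder and cross terms, conjugation by $L$, and the identity $v^\top\Sigma_m v=\norm{C_{\mathrm{dev}}^{1/2}M_m^\top L^\top v}_2^2$ (equivalently, factorization through $C_{\mathrm{dev}}^{1/2}$) for the rank bound and the null-direction characterization. The paper's one-line argument also relies, without saying so, on the two conditions you make explicit: the second-moment bound $\E\norm{Y-\bar Y_B}_2^2=O(B^{-1})$ in the cross terms, which is slightly stronger than the stated $O_p(B^{-1/2})$, and convergence of $J_m$ for the per-unit limit.
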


\begin{proof}
The exact covariance identities follow from linearity of covariance for fixed deterministic linear maps.  The first-order identities follow from the covariance-level delta method applied to $g_m(Y(B))$: the linear term has covariance $M_mC_{\mathrm{dev}}(B)M_m^\top$, while the $L^2$ remainder and its cross-covariance with the linear term are $o(B^{-1})$ by Cauchy--Schwarz.  Applying the contrast map on both sides gives $\Sigma_m^{(B)}$ and $\Sigma_m$.  The rank bound follows from the factorization through $C_{\mathrm{dev}}^{1/2}$.  Finally,
\begin{equation}
    v^\top\Sigma_m v
    =\norm{C_{\mathrm{dev}}^{1/2}M_m^\top L^\top v}_2^2,
\end{equation}
which proves the null-direction characterization.
\end{proof}

\begin{figure}[t]
    \centering
    \begin{tikzpicture}[
        >=Latex,
        node distance=1.7cm and 2.05cm,
        map/.style={-{Latex[length=2mm]}, line width=0.85pt},
        box/.style={draw, rounded corners=1.5pt, align=center, inner sep=5pt},
        smallbox/.style={draw, rounded corners=1.5pt, align=center, inner sep=3pt, font=\small}
    ]
        \node[box] (C) {$C_{\rm dev}$\\physical primitive\\covariance};
        \node[box, right=of C] (K) {$K_m=M_m C_{\rm dev}M_m^\top$\\ambient method covariance};
        \node[box, right=of K] (S) {$\Sigma_m=L K_mL^\top$\\decision kernel};
        \draw[map] (C) -- node[above] {$M_m(\cdot)M_m^\top$} (K);
        \draw[map] (K) -- node[above] {$L(\cdot)L^\top$} (S);
        \draw[map, blue!65!black] (C.south) .. controls +(0,-1.55) and +(0,-1.55) .. (S.south);
        \node[font=\small, blue!65!black] at ($(C.south)!0.52!(S.south)+(0,-1.35)$)
            {direct pullback};
        \node[smallbox, below=1.85cm of K] (LM) {composite gap map\\$LM_m$};
        \draw[map, blue!65!black] (LM.east) -- (S.south west)
            node[midway, below right, font=\small] {$(LM_m)(\cdot)(LM_m)^\top$};
        \begin{scope}[xshift=0.6cm, yshift=-3.65cm]
            \draw[gray!70, line width=0.8pt] (0,0) ellipse (1.35cm and 0.55cm);
            \node[font=\small, gray!80] at (0,0.88) {free PSD cone};
            \draw[fill=blue!16, draw=blue!65!black, line width=0.8pt]
                (-0.85,-0.05) .. controls (-0.25,0.42) and (0.55,0.35) .. (0.9,0.04)
                .. controls (0.25,-0.32) and (-0.45,-0.34) .. cycle;
            \node[font=\small, blue!65!black] at (0,-0.82) {restricted pullback family};
        \end{scope}
    \end{tikzpicture}
    \caption{Schematic QEM pullback restriction.  The decision kernel is the pullback of physical primitive noise through the mitigation map and the gap geometry, as in Eq.~\eqref{eq:qem_pullback_kernel}.  Once $C_{\rm dev}$, $M_m$, and $L$ are fixed, $\Sigma_m$ lies in a restricted image family rather than being an arbitrary PSD matrix.}
    \label{fig:pullback_schematic}
\end{figure}

In coordinates, if $\overline Y_\alpha(x_i)$ are primitive estimators and $w_{m,\alpha}(x_i)$ are fixed linear mitigation weights, then the finite-budget covariance is
\begin{equation}
    K_m^{(B)}(i,j)=\sum_{\alpha,\beta}w_{m,\alpha}(x_i)w_{m,\beta}(x_j)C_{\alpha\beta}(B;i,j),
\end{equation}
while the per-unit covariance replaces $C_{\alpha\beta}(B;i,j)$ by the limiting per-unit covariance $C_{\alpha\beta}(i,j)=\lim_{B\to\infty}B C_{\alpha\beta}(B;i,j)$.  Bias reduction alone is not enough: in Gaussian or second-order regimes, decision reliability is governed by the pair $(La_m,LK_mL^\top)$ and by the restricted QEM geometry that produces it.

\subsection{Anchoring \texorpdfstring{$C_{\mathrm{dev}}$}{Cdev} in a physical noise model}
\label{subsec:anchoring_cdev}

The primitive covariance should not be treated as a black-box PSD matrix when a physical model is available.  In the notation below, the abstract primitive label $\alpha$ from the coordinate formula above is specialized to a measurement setting $s$, while the landscape point is denoted by $x$.  For a Pauli observable $O_x$ with outcomes in $\{-1,+1\}$, measured after a noisy channel $\calN_{p,s}$ at setting $s$ and with $B_{x,s}$ shots, the primitive sample mean satisfies
\begin{equation}
    \E[\overline Y_{x,s}]=f_{x,s}(p)=\tr\!\left[O_x\,\calN_{p,s}(\rho_x)\right],
    \qquad
    \Var(\overline Y_{x,s}\mid p)=\frac{1-f_{x,s}(p)^2}{B_{x,s}}.
\end{equation}
Assume fixed allocation fractions $B_{x,s}/B\to\lambda_{x,s}>0$.  Then the shot-scaling diagonal component of the per-unit primitive covariance is
\begin{equation}
    D_{\mathrm{shot},(x,s),(x,s)}
    =\frac{1-f_{x,s}(\bar p)^2}{\lambda_{x,s}}.
\end{equation}
If different primitive streams are conditionally independent given the physical noise parameters $p$, then this conditional shot covariance is diagonal.  Without conditional independence, for example with observables measured in the same shot, commuting groups, reused random seeds, or shared readout-calibration records, $D_{\mathrm{shot}}$ should be replaced by the full conditional per-unit covariance.  Shared calibration uncertainty, drift, or reused randomized compiling seeds create additional off-diagonal covariance through the delta method.  If
\begin{equation}
    p=\bar p+\eta_B,
    \qquad
    \E[\eta_B]=0,
\end{equation}
then there are two distinct regimes.  If the calibration uncertainty is itself statistical and shot-scaling,
\begin{equation}
    \Cov(\eta_B)=\frac{V_p}{B},
\end{equation}
and the conditional shot noise is centered with no first-order cross-covariance with $\eta_B$, then, to first order, the per-unit primitive covariance is
\begin{equation}
\label{eq:cdev_physical_delta}
    C_{\mathrm{dev}}
    \approx
    D_{\mathrm{shot}}+J_p V_pJ_p^\top,
    \qquad
    (J_p)_{(x,s),r}=\frac{\partial f_{x,s}}{\partial p_r}(\bar p).
\end{equation}
If shared calibration records create first-order cross-covariances with conditional shot noise, the corresponding cross terms should be included in $C_{\mathrm{dev}}$ rather than discarded.  If instead calibration drift or model uncertainty has non-vanishing covariance $\Cov(\eta_B)=O(1)$, it is not a shot-scaling component of $C_{\mathrm{dev}}$.  It must be treated as a random environment, systematic uncertainty, random bias, or source of an irreducible decision-risk floor.  The large-shot Gaussian exponent applies to the shot-scaling components of $C_{\mathrm{dev}}$; non-vanishing drift can prevent the risk from decaying like $\exp(-BI)$.

For a global depolarizing response, for example,
\begin{equation}
    f_{x,s}(p)=\alpha_s(p)F(x),
\end{equation}
with scale-dependent attenuation $\alpha_s(p)$, so shared uncertainty in $p$ induces correlations proportional to $F(x_i)F(x_j)$ after linearization.  The same physical channel that reduces distinguishability and increases QEM sampling overhead therefore also shapes the decision kernel through Eq.~\eqref{eq:qem_pullback_kernel}.

\begin{proposition}[Physical local covariance anchor]
\label{prop:physical_cdev_anchor}
In the shot-scaling calibration regime above, any linear QEM estimator $\widehat F_m=M_mY(B)$ has per-unit decision kernel
\begin{equation}
    \Sigma_m
    \approx
    (LM_m)D_{\mathrm{shot}}(LM_m)^\top
    +
    (LM_m)J_pV_pJ_p^\top(LM_m)^\top.
\end{equation}
The corresponding actual finite-budget covariance is $\Sigma_m^{(B)}=\Sigma_m/B$.  The first term is ordinary finite-shot noise propagated through mitigation weights.  The second term is shared physical uncertainty estimated on the same shot scale; it can be common-mode and decision-invisible on a critical band, or differential and decision-dominant, depending on the rows of $LM_mJ_p$.
\end{proposition}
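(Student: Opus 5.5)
The plan is to obtain the statement by substituting the physical first-order model for $C_{\mathrm{dev}}$, Eq.~\eqref{eq:cdev_physical_delta}, into the exact pullback identity of Theorem~\ref{thm:qem_pullback}.

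First, for a fixed deterministic linear estimator $\widehat F_m=M_mY(B)$, Theorem~\ref{thm:qem_pullback} gives exactly
\[
\Sigma_m^{(B)}=(LM_m)C_{\mathrm{dev}}(B)(LM_m)^\top .
\]
Multiplying by $B$ and passing to the shot-scaling limit gives the per-unit kernel $\Sigma_m=(LM_m)C_{\mathrm{dev}}(LM_m)^\top$.

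Second, we justify the two-term decomposition of $C_{\mathrm{dev}}$. We apply the law of total covariance to $Y(B)$, conditioning on the physical parameter $p=\bar p+\eta_B$:
\[
\Cov(Y(B))=\E\bigl[\Cov(Y(B)\mid p)\bigr]+\Cov\bigl(\E[Y(B)\mid p]\bigr).
\]

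\emph{Conditional term.} Under fixed allocation fractions the conditional shot covariance has diagonal entries $(1-f_{x,s}(p)^2)/B_{x,s}$, or its full conditional analogue when streams are not conditionally independent. Since $p\to\bar p$ in probability and $f$ is continuous, multiplying by $B$ gives $D_{\mathrm{shot}}$ in the limit.

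\emph{Mean term.} Here $\E[Y(B)\mid p]=f(p)$. A first-order delta-method expansion $f(p)=f(\bar p)+J_p\eta_B+O(\norm{\eta_B}^2)$, together with $\Cov(\eta_B)=V_p/B$, gives $B\,\Cov(f(p))\to J_pV_pJ_p^\top$. This requires the remainder to be $o(B^{-1/2})$ in $L^2$, which holds for smooth $f$ and bounded (Pauli-valued, hence bounded) primitives.

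The assumed absence of first-order cross-covariance between conditional shot noise and $\eta_B$ is what lets the two pieces add without a cross term. If that assumption fails, the cross term would simply be carried along, exactly as the text indicates. The finite-budget statement $\Sigma_m^{(B)}=\Sigma_m/B$ is then the leading-order content of this limit, which is the sense of the ``$\approx$'' in the statement.

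Third, we substitute. By linearity of the congruence $X\mapsto (LM_m)X(LM_m)^\top$, the decomposition $C_{\mathrm{dev}}\approx D_{\mathrm{shot}}+J_pV_pJ_p^\top$ splits $\Sigma_m$ into the two displayed terms. For the interpretive clause, write the second term as $(LM_mJ_p)V_p(LM_mJ_p)^\top$ and apply the null-direction characterization of Theorem~\ref{thm:qem_pullback} with $C_{\mathrm{dev}}$ replaced by $V_p$. A contrast $v$ receives no calibration-induced variance iff $V_p^{1/2}J_p^\top M_m^\top L^\top v=0$. In particular this holds when the rows of $M_mJ_p$ restricted to the critical band are constant across points (common-mode), since $L\1=0$, as in the global depolarizing example with $F$ nearly constant on the band.

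The main obstacle is the second step: making the delta-method covariance expansion rigorous at the $o(B^{-1})$ level rather than only distributionally, and handling the interplay between the conditional shot variance's dependence on $p$ and $\eta_B$. I would first try bounded-moment assumptions on $\eta_B$, namely $\E\norm{\eta_B}^4=O(B^{-2})$, plus a $C^2$ response $f$. This controls the quadratic remainder and the fluctuation of $1-f(p)^2$ about $1-f(\bar p)^2$. Everything beyond that is linear algebra already done in Theorem~\ref{thm:qem_pullback}.
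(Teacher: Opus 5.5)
Your proposal is correct and is essentially the paper's argument: insert the first-order decomposition \eqref{eq:cdev_physical_delta} into the pullback identity of Theorem~\ref{thm:qem_pullback}, split it by linearity of the congruence $X\mapsto(LM_m)X(LM_m)^\top$, use $L\1=0$ for the common-mode versus differential reading of the rows of $LM_mJ_p$, and divide by $B$ for the finite-budget kernel. The only addition is your law-of-total-covariance and delta-method justification of \eqref{eq:cdev_physical_delta}, which the paper takes as a standing first-order assumption; note that in your conditioning argument the cross term vanishes identically once $\E[Y(B)\mid p]=f(p)$, so the no-cross-covariance hypothesis enters through that modeling choice rather than as a separate step.
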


\begin{proof}
Insert Eq.~\eqref{eq:cdev_physical_delta} into the pullback identity of Theorem~\ref{thm:qem_pullback}.  The contrast map $L$ cancels exactly the components in the span of $\1$, so shared physical fluctuations matter only through their differential projection in quotient space.  The finite-budget covariance is obtained by dividing the per-unit kernel by $B$ in the fixed-allocation regime.
\end{proof}

\section{Gaussian risk, exponents, and diagonal/off-diagonal classification}

We now specialize the residual gap law to the Gaussian finite-shot regime, where decision risk follows from effective margins and the decision kernel.

Assume a shot-budget scaling in which the centered residual gap noise has the local Gaussian form
\begin{equation}
    L\xi_m=B^{-1/2}Z_m,
    \qquad
    Z_m\sim\calN(0,\Sigma_m),
\end{equation}
where $B$ is the total effective shot budget.  Since the total residual is $E_m=a_m+\xi_m$, the estimated margins are
\begin{equation}
    \widehat\Delta_m(B)=\Delta+LE_m
    =\mu_m+B^{-1/2}Z_m,
    \qquad
    \mu_m=\Delta+La_m.
\end{equation}
Thus deterministic bias is absorbed into the effective margin $\mu_m$, while finite-shot randomness enters through the gap covariance $\Sigma_m$.

The large-shot exponents below are exact for the Gaussian/LAN surrogate in which the scaled gap noise is Gaussian with covariance $\Sigma_m$.  For the original bounded shot model, matching rare-event exponents requires a separate large-deviation or moderate-deviation assumption; the CLT alone gives central, absolute-probability approximations and does not by itself preserve probabilities of order $\exp(-BI)$.

\begin{proposition}[Finite-shot gap CLT]
\label{prop:gap_clt}
The Gaussian gap model above is the local central-limit limit of the finite-shot sampling model.  Suppose the primitive record consists of independent bounded per-shot outcomes, split into a fixed finite number of primitive groups $j=1,\ldots,J$ with non-adaptive allocations $B_j$ satisfying
\begin{equation}
    \frac{B_j}{B}\longrightarrow \lambda_j\in(0,1],
    \qquad j=1,\ldots,J.
\end{equation}
Let $\overline Y$ collect the primitive sample means.  Allow the linear mitigation map $M_m(B)$ to depend on $B$, and assume
\begin{equation}
    \|M_m(B)\|_{\op}=O(1),
\end{equation}
together with the gap-covariance convergence
\begin{equation}
    (LM_m(B))\,B\Cov(\overline Y)\,(LM_m(B))^\top
    \longrightarrow \Sigma_m .
\end{equation}
Equivalently, if $M_m$ is fixed and $B\Cov(\overline Y)\to\widetilde C$, then
$\Sigma_m=(LM_m)\widetilde C(LM_m)^\top$.  The induced centered gap fluctuation
\begin{equation}
    L\xi_m=(LM_m(B))(\overline Y-\E\overline Y)
\end{equation}
satisfies
\begin{equation}
    \sqrt B\,L\xi_m\ \overset{d}{\longrightarrow}\ \calN(0,\Sigma_m).
\end{equation}
Moreover, under the same boundedness and covariance-convergence assumptions, if the limiting covariance is nondegenerate on the relevant gap span (or the statement is restricted to the effective Gaussian support) and the per-shot triangular-array coordinates have uniformly bounded third absolute moments, convex polyhedral event probabilities have Berry--Esseen error $O(B^{-1/2})$ for fixed dimension and a fixed finite number of polyhedral components.  Finite unions of such fixed components are controlled by a union bound.  Therefore
\begin{equation}
    \widehat\Delta_m(B)\approx_{\mathrm{CLT}}\calN(\mu_m,\Sigma_m/B)
\end{equation}
is not an independent modeling assumption, but the local Gaussian approximation to the shot model for argmin, ranking, and polyhedral top-$k$ decisions.
\end{proposition}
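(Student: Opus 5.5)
The plan is to write the centered gap fluctuation as a normalized sum of independent, bounded, mean-zero per-shot vectors and then invoke the multivariate Lindeberg--Feller CLT for triangular arrays. Each primitive sample mean is $\overline Y_j=B_j^{-1}\sum_{s=1}^{B_j}y_{j,s}$, with independent bounded outcomes. Hence
\begin{equation}
    \sqrt B\,L\xi_m=\sum_{j=1}^J\sum_{s=1}^{B_j}\frac{\sqrt B}{B_j}\,(LM_m(B))_{\cdot j}\,(y_{j,s}-\E y_{j,s}),
\end{equation}
where $(LM_m(B))_{\cdot j}$ denotes the block of columns acting on group $j$. This is a row-wise independent triangular array of mean-zero vectors in $\R^{n-1}$.

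The first step identifies the covariance. By independence, the covariance of the sum is exactly $(LM_m(B))\,B\Cov(\overline Y)\,(LM_m(B))^\top$, and this converges to $\Sigma_m$ by hypothesis. In the fixed-map case it converges to $(LM_m)\widetilde C(LM_m)^\top$ by continuity of the congruence.

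The second step verifies the Lindeberg condition. Each summand is deterministically bounded in norm by
\begin{equation}
    \frac{\sqrt B}{B_j}\,\|L\|_{\op}\,\|M_m(B)\|_{\op}\,c=O(B^{-1/2}),
\end{equation}
using $B_j/B\to\lambda_j>0$, $\|M_m(B)\|_{\op}=O(1)$, and the uniform bound $c$ on outcomes. For any $\varepsilon>0$, every summand eventually has norm below $\varepsilon$, so the Lindeberg sums vanish identically. To handle a possibly singular $\Sigma_m$, I apply Cram\'er--Wold: for each fixed $h$, the scalar array $h^\top(\cdot)$ satisfies Lindeberg and has variance converging to $h^\top\Sigma_m h$. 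When this variance is zero, convergence to the point mass follows from Chebyshev. This gives $\sqrt B\,L\xi_m\overset{d}{\to}\calN(0,\Sigma_m)$.

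The third step establishes the Berry--Esseen statement for polyhedral events. I restrict to the range of $\Sigma_m$, or assume nondegeneracy there, and standardize by $\Sigma_B^{-1/2}$, where $\Sigma_B$ is the finite-$B$ gap covariance; this matrix is eventually invertible on that span and uniformly well conditioned. I then apply the multivariate Berry--Esseen bound for convex sets for sums of independent non-identically distributed vectors (Bentkus). The error is bounded by $C d^{1/4}\sum\E\|\Sigma_B^{-1/2}X_{B,s}\|^3$, and this sum is $O(B\cdot B^{-3/2})=O(B^{-1/2})$ by the uniform third-moment bound and the $O(B^{-1/2})$ scaling of each summand. Convex polyhedra are convex sets. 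A finite union of fixed polyhedral components is handled by the union bound together with inclusion--exclusion over finitely many intersections, each of which is again convex, so the error stays $O(B^{-1/2})$ with a constant depending on the number of components.

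The last step is to combine the limit with the effective margin. Adding the deterministic shift $\mu_m$ gives $\widehat\Delta_m(B)=\mu_m+B^{-1/2}\sqrt B L\xi_m\approx\calN(\mu_m,\Sigma_m/B)$ in the stated sense.

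The main obstacle is this Berry--Esseen step. Replacing the finite-$B$ covariance $\Sigma_B$ by its limit $\Sigma_m$ adds a Gaussian-comparison error of order $\|\Sigma_B-\Sigma_m\|$. This is only $o(1)$ unless the convergence rate is $O(B^{-1/2})$. So I would state the $O(B^{-1/2})$ bound relative to $\calN(\mu_m,\Sigma_B/B)$, which is the actual finite-budget covariance the paper distinguishes, and note that passing to $\Sigma_m$ costs an additional Pinsker/KL term between the two Gaussians.
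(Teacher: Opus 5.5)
Your proposal is correct and takes essentially the paper's route: a triangular-array CLT for the bounded, allocation-balanced shot sums, followed by Bentkus's convex-set Berry--Esseen bound on the effective Gaussian support. It is tighter than the paper's sketch in three places: applying Lindeberg directly to the gap-space array handles a non-convergent $M_m(B)$; inclusion--exclusion, not a bare union bound, is what controls finite unions of convex components; and you correctly note that the $O(B^{-1/2})$ rate holds relative to $\calN(\mu_m,\Sigma_B/B)$, with passage to $\Sigma_m$ costing an extra $O(\|\Sigma_B-\Sigma_m\|)$ total-variation term, a caveat the paper's proof leaves implicit.
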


\begin{proof}
The primitive fluctuations $\sqrt B(\overline Y-\E\overline Y)$ obey a multivariate central limit theorem for independent bounded triangular arrays.  The allocation condition $B_j/B\to\lambda_j>0$ prevents any primitive coordinate used by the method from being placed on a slower shot scale.  The bounded operator-norm condition prevents the mitigation weights from growing with $B$ and changing the effective normalization.  If the maps vary with $B$, convergence of the induced gap covariance is the needed stabilization condition; when the maps are fixed, this reduces to covariance transformation by congruence.  The stated convex-set rate follows from the dimension-dependent Berry--Esseen bound for convex sets due to Bentkus \cite{Bentkus2003}, after projecting to the effective support when the limiting covariance is singular.  The bounded-outcome assumption supplies uniform third moments, and the fixed dimension, fixed-component, and support-conditioning assumptions keep the constants independent of $B$.
\end{proof}

\begin{proposition}[Exact Gaussian decision risk]
\label{prop:gaussian_risk}
Under the Gaussian gap model, and counting ties as failures, the argmin decision succeeds exactly when
\begin{equation}
    \widehat\Delta_{m,i}(B)>0,
    \qquad i=1,\ldots,d.
\end{equation}
If the one-dimensional gap marginals are nondegenerate, the strict and non-strict versions have the same probability, and the decision risk is
\begin{equation}
    R_m(B)=1-\Phi_{d}\left(\sqrt B\,\mu_m;\,\Sigma_m\right).
\end{equation}
Here $d=n-1$ for the basic argmin decision; for other contrast rules that have been encoded as coordinatewise positive gap inequalities, $d$ is the dimension of the corresponding gap-decision vector.  Here $\Phi_d(\cdot;\Sigma_m)$ is the centered multivariate Gaussian distribution function over the lower orthant.  For a general success set $\calS$, the orthant probability should be replaced by $\Prob[\mu_m+B^{-1/2}Z_m\in\calS]$.  Equivalently,
\begin{equation}
    R_m(B)=\Prob\left[\max_{1\le i\le d}\left(-\frac{Z_{m,i}}{\mu_{m,i}}\right)\ge \sqrt B\right]
\end{equation}
when all $\mu_{m,i}>0$ and all relevant gap variances are positive.
\end{proposition}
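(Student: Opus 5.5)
The plan is to write out the success event directly under the Gaussian gap model and then convert it to a statement about a centered Gaussian vector. By the setup preceding the statement, $\widehat\Delta_m(B)=\mu_m+B^{-1/2}Z_m$ with $Z_m\sim\calN(0,\Sigma_m)$. Counting ties as failures, the argmin succeeds exactly on the event $\{\mu_{m,i}+B^{-1/2}Z_{m,i}>0,\ i=1,\ldots,d\}$, which is the reading of Eq.~\eqref{eq:argmin_risk} in the Gaussian model. Multiplying each inequality by $\sqrt B>0$ rewrites it as $-Z_{m,i}<\sqrt B\,\mu_{m,i}$.

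The next step is the symmetry argument. The vector $-Z_m$ has the same law $\calN(0,\Sigma_m)$ as $Z_m$, since a centered Gaussian is determined by its covariance and $\Cov(-Z_m)=\Sigma_m$. So the success probability equals $\Prob[Z_{m,i}<\sqrt B\,\mu_{m,i}\ \forall i]$.

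To replace the strict inequalities by non-strict ones, I use the nondegenerate-marginal hypothesis: each $Z_{m,i}$ has positive variance. Then each boundary hyperplane event $\{Z_{m,i}=\sqrt B\,\mu_{m,i}\}$ has probability zero, and a finite union bound over $i$ shows the strict and closed lower orthants differ by a null set. This step needs care because $\Sigma_m$ may be singular, so no joint density exists. The argument therefore has to rely only on the one-dimensional marginals, and that is what the hypothesis supplies. The orthant probability is then $\Phi_d(\sqrt B\,\mu_m;\Sigma_m)$, and $R_m(B)=1-\Phi_d(\sqrt B\,\mu_m;\Sigma_m)$.

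For a general success set $\calS$, the same first step gives the risk as $\Prob[\mu_m+B^{-1/2}Z_m\notin\calS]$ with no further manipulation. For the max form, assume all $\mu_{m,i}>0$. Failure means that some $i$ has $\mu_{m,i}+B^{-1/2}Z_{m,i}\le0$, which is equivalent to $-Z_{m,i}/\mu_{m,i}\ge\sqrt B$ after dividing by the positive $\mu_{m,i}$ and multiplying by $\sqrt B$. Taking the union over $i$ gives exactly $\{\max_i(-Z_{m,i}/\mu_{m,i})\ge\sqrt B\}$. Here ties are already counted as failures, so no null-set argument is needed, although the positive-variance assumption would show the $\ge$ could equally be $>$.

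I expect no serious obstacle. The only delicate point is the singular-covariance case in the strict versus non-strict replacement, and the per-coordinate null-set argument above handles it.
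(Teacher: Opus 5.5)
Your proposal is correct and follows the same route as the paper: rewrite success as $Z_m>-\sqrt B\,\mu_m$ coordinatewise, use $-Z_m\overset{d}{=}Z_m$ to pass to the lower orthant, and take the complement. Your per-coordinate null-boundary argument, which remains valid for singular $\Sigma_m$, and your direct derivation of the max form spell out what the paper's proof covers in its one-line remark that the tie convention is immaterial under nondegenerate marginals.
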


\begin{proof}
The decision succeeds exactly when $\mu_m+B^{-1/2}Z_m>0$ coordinatewise, equivalently $Z_m>-\sqrt B\,\mu_m$.  By symmetry of a centered Gaussian vector, this probability equals the probability that $Z_m<\sqrt B\,\mu_m$ coordinatewise.  Taking the complement gives the formula.  The tie convention is immaterial under nondegenerate one-dimensional gap marginals; it matters only for degenerate boundary cases.
\end{proof}

\begin{definition}[Candidate Gaussian decision exponent]
Let $\calS\subseteq\R^d$ be a Borel success set and let $\mu$ be the effective margin.  Define the failure region in noise coordinates as
\begin{equation}
    \calF_\mu=\{y\in\R^d:\ \mu+y\notin\calS\}.
\end{equation}
For positive semidefinite $\Sigma$, let
\begin{equation}
    \calR_\Sigma=\operatorname{range}(\Sigma)
\end{equation}
be the Gaussian support subspace.  The candidate Gaussian decision exponent is
\begin{equation}
\label{eq:decision_exponent}
    I_{\mathrm{cand}}(\mu,\Sigma;\calS)=
    \inf_{y\in \calF_\mu\cap\calR_\Sigma}
    \frac12 y^\top \Sigma^\dagger y,
\end{equation}
with the convention $I_{\mathrm{cand}}=+\infty$ if the intersection is empty.  Here $\Sigma^\dagger$ is the Moore--Penrose pseudoinverse.  When $\Sigma$ is singular, all closures and interiors below are understood relative to $\calR_\Sigma$.
\end{definition}

\begin{theorem}[Regular Gaussian decision exponent]
\label{thm:decision_exponent}
Assume $Z\sim\calN(0,\Sigma)$ and the finite-shot residual is $B^{-1/2}Z$.  Let
\begin{equation}
    J(y)=
    \begin{cases}
    \frac12y^\top\Sigma^\dagger y, & y\in\calR_\Sigma,\\
    +\infty, & y\notin\calR_\Sigma.
    \end{cases}
\end{equation}
For the failure region $\calF_\mu$,
\begin{align}
    \inf_{y\in\cl_{\calR_\Sigma}(\calF_\mu\cap\calR_\Sigma)}J(y)
    &\le
    \liminf_{B\to\infty}-\frac1B\log\Prob[B^{-1/2}Z\in\calF_\mu]
    \\
    &\le
    \limsup_{B\to\infty}-\frac1B\log\Prob[B^{-1/2}Z\in\calF_\mu]
    \\
    &\le
    \inf_{y\in\interior_{\calR_\Sigma}(\calF_\mu\cap\calR_\Sigma)}J(y).
\end{align}
If $\calF_\mu$ is $J$-regular, meaning
\begin{equation}
    \inf_{y\in\cl_{\calR_\Sigma}(\calF_\mu\cap\calR_\Sigma)}J(y)
    =
    \inf_{y\in\interior_{\calR_\Sigma}(\calF_\mu\cap\calR_\Sigma)}J(y),
\end{equation}
then the exact large-shot exponent exists and equals the candidate value
\begin{equation}
    I(\mu,\Sigma;\calS)=I_{\mathrm{cand}}(\mu,\Sigma;\calS).
\end{equation}
For the basic argmin decision with $\calS=(0,\infty)^d$, $\mu_i>0$, and $\Sigma_{ii}>0$,
\begin{equation}
\label{eq:argmin_exponent}
    I_m=\min_{1\le i\le d}\frac{\mu_{m,i}^2}{2\Sigma_m(i,i)}.
\end{equation}
If $\Sigma_{ii}=0$, the $i$th gap noise is deterministic in the Gaussian support.  With $\mu_i>0$ it does not contribute to the Gaussian failure event; with $\mu_i\le0$ the decision has no positive rare-event exponent under the tie-as-failure convention.
\end{theorem}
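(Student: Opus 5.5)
The plan is to reduce everything to the classical Gaussian large-deviation principle (Schilder/Cramér for finite-dimensional Gaussians) on the support subspace $\calR_\Sigma$, and then to compute the argmin case by hand. First, since $Z\sim\calN(0,\Sigma)$ is supported on $\calR_\Sigma$, write $Z=\Sigma^{1/2}W$ with $W$ standard Gaussian on $\R^d$, so that $B^{-1/2}Z$ lives in $\calR_\Sigma$ almost surely and $\Prob[B^{-1/2}Z\in\calF_\mu]=\Prob[B^{-1/2}Z\in\calF_\mu\cap\calR_\Sigma]$. Identify $\calR_\Sigma$ with $\R^r$, $r=\rank\Sigma$, via an orthonormal basis; on this space $\Sigma$ restricts to a positive definite matrix whose inverse is the restriction of $\Sigma^\dagger$. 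The family $B^{-1/2}Z$ then satisfies the LDP with speed $B$ and good rate function $\frac12 y^\top\Sigma^\dagger y$ on $\calR_\Sigma$; one can cite Dembo--Zeitouni \cite{DemboZeitouni1998} (Cramér in $\R^r$, or directly since the log-moment generating function is $\frac12\lambda^\top\Sigma\lambda$ with Legendre transform $J$). Extending by $+\infty$ off $\calR_\Sigma$ gives $J$ on $\R^d$.

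Second, apply the LDP upper bound to the relative closure and the lower bound to the relative interior of $\calF_\mu\cap\calR_\Sigma$ inside $\calR_\Sigma$, which directly yields the displayed sandwich: $\liminf -\frac1B\log\Prob\ge\inf_{\cl}J$ and $\limsup-\frac1B\log\Prob\le\inf_{\interior}J$. Measurability is immediate since $\calS$ is Borel. Under $J$-regularity the two bounds coincide, and since $\interior\subseteq\calF_\mu\cap\calR_\Sigma\subseteq\cl$, the common value is sandwiched with $\inf_{\calF_\mu\cap\calR_\Sigma}J=I_{\mathrm{cand}}$, giving existence and equality of the exponent.

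Third, the argmin case: $\calF_\mu=\bigcup_i\{y:y_i\le-\mu_i\}$, a finite union of closed halfspaces, so $I_{\mathrm{cand}}=\min_i\inf\{J(y):y_i\le-\mu_i\}$. For one halfspace, minimizing $\frac12y^\top\Sigma^\dagger y$ over $y\in\calR_\Sigma$ with $e_i^\top y\le-\mu_i$ is a Lagrange/Cauchy--Schwarz computation: writing $y=\Sigma^{1/2}w$, we need $\langle\Sigma^{1/2}e_i,w\rangle\le-\mu_i$ and minimize $\frac12\|w\|^2$, giving $\mu_i^2/(2\|\Sigma^{1/2}e_i\|^2)=\mu_i^2/(2\Sigma_{ii})$ when $\Sigma_{ii}>0$. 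Regularity holds here: the relative interior of $\{y\in\calR_\Sigma:y_i\le-\mu_i\}$ contains $\{y_i<-\mu_i\}\cap\calR_\Sigma$, nonempty because $e_i\notin\calR_\Sigma^\perp$, and $J$ is continuous on $\calR_\Sigma$, so the infimum over the strict halfspace equals that over the closed one by approaching the minimizer $y^*=-\mu_i\Sigma e_i/\Sigma_{ii}$ along a scaling $(1+\epsilon)y^*$. Taking the minimum over $i$ gives \eqref{eq:argmin_exponent}.

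Finally, the degenerate remarks: if $\Sigma_{ii}=0$, then $e_i\perp\calR_\Sigma$ so $y_i=0$ on the support; with $\mu_i>0$ the $i$th halfspace misses $\calR_\Sigma$ (infimum $+\infty$), and with $\mu_i\le0$ the point $y=0$ lies in $\calF_\mu$, so failure has probability bounded below by a constant (indeed equal to one for that coordinate under tie-as-failure), hence exponent zero. I expect the main obstacle to be bookkeeping of relative topology on the singular support—ensuring closures/interiors are taken within $\calR_\Sigma$ and that the LDP transfers correctly—rather than any genuinely hard estimate.
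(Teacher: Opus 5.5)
Your proposal is correct and follows essentially the same route as the paper's proof. Both use the finite-dimensional Gaussian LDP on $\calR_\Sigma$ \cite{DemboZeitouni1998}, apply the standard upper and lower bounds to the relative closure and relative interior, and treat the argmin failure set as a finite union of coordinate halfspaces. In both, the boundary minimizer $-\mu_i\Sigma e_i/\Sigma_{ii}$ is approached from the open halfspace because $\mu_i>0$ and $\Sigma_{ii}>0$.

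Your write-up also supplies details the paper leaves implicit:
the squeeze $\inf_{\cl}J\le I_{\mathrm{cand}}\le\inf_{\interior}J$, which gives equality with the candidate value;
the whitened Cauchy--Schwarz computation of $\mu_i^2/(2\Sigma_{ii})$;
and the $\Sigma_{ii}=0$ cases via $\Sigma e_i=0$.
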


Proof in Appendix~\ref{app:gaussian_proofs}.

A verifiable $J$-regularity criterion covering the argmin and polyhedral cases is given in Appendix~\ref{app:gaussian_proofs}.

\begin{theorem}[Diagonal versus off-diagonal exponent classification]
\label{thm:diagonal_offdiagonal}
Let $Z\sim\calN(0,\Sigma)$ with $\Sigma\succ0$ in gap space.  Consider one polyhedral component of a finite-shot failure event whose active constraints can be written
\begin{equation}
    H_S Z\le -\sqrt B\,c_S,
\end{equation}
where $H_S\in\R^{r\times d}$ has full row rank and $c_S\in\R^r$ has positive entries.  Assume the remaining inequalities are inactive at the rate minimizer and that the KKT multipliers
\begin{equation}
    \alpha_S:=(H_S\Sigma H_S^\top)^{-1}c_S
\end{equation}
are strictly positive coordinatewise.  Then the contribution of this active face has exponent
\begin{equation}
\label{eq:general_active_face_exponent}
    I_S=\frac12c_S^\top(H_S\Sigma H_S^\top)^{-1}c_S.
\end{equation}
The overall exponent for a finite union of such regular faces is the minimum of $I_S$ over admissible active sets.  In the coordinate-selector case, where $H_S$ selects coordinates and $c_S=\mu_S$, this reduces to
\begin{equation}
\label{eq:subblock_exponent}
    I_S=\frac12\mu_S^\top\Sigma_{SS}^{-1}\mu_S.
\end{equation}
Therefore:
\begin{enumerate}[label=(\roman*)]
    \item a simple argmin union event has singleton active sets $S=\{i\}$ and hence the diagonal rate $\mu_i^2/(2\Sigma_{ii})$;
    \item a conjunctive decision requiring simultaneous inversions, a ranking-pattern error, or a top-$k$ inclusion/exclusion event can have multi-constraint active faces, in which case the leading exponent depends on $H_S\Sigma H_S^\top$ and therefore on off-diagonal gap correlations.
\end{enumerate}
If some coordinate of $\alpha_S$ is non-positive, the proposed face is not the true dominant active set; the minimizer is dominated by a lower-cardinality active set $S'\subsetneq S$, and both the exponent formula and the finite-shot prefactor must be recomputed there.
\end{theorem}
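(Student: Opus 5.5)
The plan is to reduce everything to the regular Gaussian decision exponent of Theorem~\ref{thm:decision_exponent}, applied to a single polyhedral failure component, and then to compute the resulting quadratic program in closed form via KKT. Since $\Sigma\succ0$, we have $\calR_\Sigma=\R^d$ and $J(y)=\frac12 y^\top\Sigma^{-1}y$. The failure component in scaled coordinates $y=B^{-1/2}Z$ is the closed polyhedron $P_S=\{y: H_Sy\le -c_S\}$ (plus the inactive inequalities). Because $H_S$ has full row rank and $c_S>0$, $P_S$ has nonempty interior. A convex polyhedron with nonempty interior equals the closure of its interior, so $P_S$ is $J$-regular by continuity of $J$. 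Theorem~\ref{thm:decision_exponent} therefore gives the exponent $\inf_{y\in P_S}\frac12 y^\top\Sigma^{-1}y$. For a finite union of components, $\Prob[\cup_k A_k]$ is sandwiched between $\max_k\Prob[A_k]$ and $K\max_k\Prob[A_k]$. The factor $K$ disappears at exponential scale, so the overall exponent is $\min_S I_S$.

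Next I would solve the convex QP $\min\frac12 y^\top\Sigma^{-1}y$ subject to $H_Sy\le -c_S$, treating all constraints in $S$ as active and the others as inactive by assumption. The Lagrangian stationarity condition gives $y^\ast=-\Sigma H_S^\top\alpha$. Enforcing equality $H_Sy^\ast=-c_S$ yields $\alpha=(H_S\Sigma H_S^\top)^{-1}c_S=\alpha_S$, which is well defined because $H_S\Sigma H_S^\top\succ0$ by the full row rank of $H_S$. Dual feasibility is exactly $\alpha_S>0$, and primal feasibility of the remaining constraints is the inactivity hypothesis. Convexity then makes $y^\ast$ the global minimizer. Substituting gives $J(y^\ast)=\frac12\alpha_S^\top H_S\Sigma H_S^\top\alpha_S=\frac12 c_S^\top(H_S\Sigma H_S^\top)^{-1}c_S$, which is Eq.~\eqref{eq:general_active_face_exponent}. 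For a coordinate selector, $H_S\Sigma H_S^\top=\Sigma_{SS}$, which gives Eq.~\eqref{eq:subblock_exponent}.

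The corollaries follow directly from these formulas. For (i), the argmin failure set $\cup_i\{y_i\le-\mu_i\}$ is a union of halfspaces with singleton active sets. Here $\Sigma_{\{i\}\{i\}}=\Sigma_{ii}$ and the multiplier is $\mu_i/\Sigma_{ii}>0$, which recovers Eq.~\eqref{eq:argmin_exponent}. For (ii), I would note that when $|S|\ge2$ the quantity $c_S^\top(H_S\Sigma H_S^\top)^{-1}c_S$ genuinely depends on the off-diagonal entries. A two-dimensional witness shows this: with $\Sigma_{SS}=\begin{pmatrix}1&\rho\\\rho&1\end{pmatrix}$ and $\mu_S=(1,1)$, the exponent is $1/(1+\rho)$, which varies with $\rho$ and has $\alpha_S=(1+\rho)^{-1}(1,1)>0$. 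Finally, if some coordinate of $\alpha_S$ is nonpositive, KKT fails for that active set. The true minimizer of the convex QP then has a strictly smaller active set, by complementary slackness and uniqueness of the strictly convex minimizer. So the exponent must be recomputed on some $S'\subsetneq S$.

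The main obstacle I expect is the regularity and support step, specifically making sure the failure component really has nonempty interior and that the inactive constraints do not cut off $y^\ast$. My first approach is to rely on $c_S>0$, so that a small inward perturbation of $y^\ast$ stays in the interior. I would read the "remaining inequalities inactive" hypothesis as strict feasibility at $y^\ast$. The prefactor statement sits beyond the exponent-level claim, so I would only remark on it rather than prove it.
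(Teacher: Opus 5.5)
Your proposal follows the same route as the paper's proof. You reduce to the Gaussian large-deviation bound of Theorem~\ref{thm:decision_exponent}, solve the active-set quadratic program by KKT to get $y^\star=-\Sigma H_S^\top\alpha_S$ and $I_S=\tfrac12 c_S^\top(H_S\Sigma H_S^\top)^{-1}c_S$, and take the minimum over finitely many faces; your nonempty-interior regularity check and the union sandwich fill in details the paper leaves implicit.

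One small correction concerns a coordinate of $\alpha_S$ that is exactly zero. In that case KKT does not fail, because dual feasibility only requires $\alpha_S\ge0$, so $y^\star$ is still the minimizer with all of $S$ binding. The correct reading is that the multiplier support $S'=\{k:\alpha_{S,k}>0\}\subsetneq S$ reproduces the same minimizer and the same exponent. Only a strictly negative coordinate forces a genuinely smaller binding set.
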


Proof in Appendix~\ref{app:gaussian_proofs}.

\begin{corollary}[Finite-shot prefactor dependence]
\label{cor:prefactor_dependence}
Under the coordinate-selector assumptions of Theorem~\ref{thm:diagonal_offdiagonal}, fix a regular active set $S$ and assume in addition that
\begin{equation}
    \alpha_S:=\Sigma_{SS}^{-1}\mu_S
\end{equation}
has strictly positive coordinates.  Equivalently, the dominating Gaussian large-deviation point lies on the face where exactly the inequalities in $S$ are binding.  Then there exist constants $0<A_S<C_S<\infty$ and $B_0$ such that for $B\ge B_0$,
\begin{equation}
    A_S B^{-|S|/2}\exp(-B I_S)
    \le
    \Prob[Z_i\le-\sqrt B\mu_i\ \text{for all } i\in S]
    \le
    C_S B^{-|S|/2}\exp(-B I_S).
\end{equation}
The strict positivity of $\alpha_S$ is an essential part of the statement, not a harmless technicality.  If some coordinate of $\alpha_S$ is non-positive, the same event can be dominated by a lower-cardinality active set $S'\subsetneq S$, so the power of $B$ need not be $-|S|/2$.  Hence correlations can matter quantitatively at finite shot budgets even when two methods share the same leading singleton exponent.
\end{corollary}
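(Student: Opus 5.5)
The plan is to reduce the probability to a standard Gaussian orthant computation via a linear change of variables and a Laplace-type expansion around the dominating point. Write $k=|S|$ and $W=-Z_S\sim\calN(0,\Sigma_{SS})$, with $\Sigma_{SS}\succ0$ because $\Sigma\succ0$. The event is $\{W\ge \sqrt B\mu_S\}$ coordinatewise, so
\begin{equation}
    P_B=\int_{w\ge\sqrt B\mu_S}(2\pi)^{-k/2}\det(\Sigma_{SS})^{-1/2}\exp\!\left(-\tfrac12 w^\top\Sigma_{SS}^{-1}w\right)dw .
\end{equation}
Substitute $w=\sqrt B\mu_S+u$ with $u\ge0$. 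Expanding the quadratic form gives
\begin{equation}
    \tfrac12 w^\top\Sigma_{SS}^{-1}w=BI_S+\sqrt B\,\alpha_S^\top u+\tfrac12u^\top\Sigma_{SS}^{-1}u,
\end{equation}
where $I_S=\frac12\mu_S^\top\Sigma_{SS}^{-1}\mu_S$ as in Eq.~\eqref{eq:subblock_exponent} and $\alpha_S=\Sigma_{SS}^{-1}\mu_S$ is exactly the multiplier vector of the statement. The exponential factor $e^{-BI_S}$ therefore comes out of the integral, leaving
\begin{equation}
    P_B=c\,e^{-BI_S}\int_{u\ge0}e^{-\sqrt B\alpha_S^\top u}e^{-\frac12u^\top\Sigma_{SS}^{-1}u}\,du,
\end{equation}
with $c$ the normalising constant.

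Next, rescale $u=v/\sqrt B$. This gives $du=B^{-k/2}dv$, and the remaining integrand becomes $e^{-\alpha_S^\top v}e^{-\frac1{2B}v^\top\Sigma_{SS}^{-1}v}$ over $v\ge0$. This is where strict positivity of $\alpha_S$ is used.

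For the \textbf{upper bound}, drop the second factor, which is at most $1$. The integral is then at most $\int_{v\ge0}e^{-\alpha_S^\top v}dv=\prod_{i\in S}\alpha_{S,i}^{-1}$, which is finite only because every $\alpha_{S,i}>0$.

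For the \textbf{lower bound}, restrict to the unit cube $[0,1]^k$. For $B\ge1$ the quadratic factor there is at least $e^{-\frac12\lambda_{\max}(\Sigma_{SS}^{-1})k}$. The linear factor is at least $e^{-\sum_i\alpha_{S,i}}$.

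Both bounds are $B$-independent positive constants. This yields $A_S,C_S$ with $A_S<C_S$ after shrinking $A_S$ if necessary, and $B_0=1$. In fact, monotone convergence gives the sharper asymptotic $B^{k/2}e^{BI_S}P_B\to c\prod_i\alpha_{S,i}^{-1}$.

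The main obstacle, conceptually, is the role of the sign of $\alpha_S$, which I would address in a short remark after the bounds. If some $\alpha_{S,i}\le0$, the $v_i$-integral is not controlled by the linear term. It is governed by the Gaussian factor on scale $\sqrt B$, so that coordinate contributes $O(1)$ rather than $B^{-1/2}$; when $\alpha_{S,i}<0$ the whole expansion should instead be recentered at the true minimiser on a lower face $S'$. This is exactly the caveat stated in the corollary. It is consistent with Theorem~\ref{thm:diagonal_offdiagonal}, where positivity of $\alpha_S$ is the KKT condition making $\mu_S$, the corner of the orthant, the rate minimiser of $\frac12w^\top\Sigma_{SS}^{-1}w$ over $\{w\ge\sqrt B\mu_S\}$. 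Since the corollary assumes $\alpha_S>0$, the proof itself only needs the two integral bounds above; no appeal to the large-deviation theorem is required beyond identifying $I_S$.
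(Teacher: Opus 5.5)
Your argument is correct and complete. Completing the square at the corner $\sqrt B\mu_S$ and rescaling $u=v/\sqrt B$ gives the exact representation $P_B=c\,B^{-|S|/2}e^{-BI_S}\int_{v\ge0}e^{-\alpha_S^\top v}e^{-v^\top\Sigma_{SS}^{-1}v/(2B)}\,dv$, where $c=(2\pi)^{-|S|/2}\det(\Sigma_{SS})^{-1/2}$. Your two bounds follow from it with $C_S=c\prod_{i\in S}\alpha_{S,i}^{-1}$ (valid for every $B>0$), $A_S=c\,e^{-\sum_i\alpha_{S,i}-|S|\lambda_{\max}(\Sigma_{SS}^{-1})/2}$, and $B_0=1$.

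The paper gives no separate proof of this corollary. Its only supporting argument is the KKT/large-deviation computation behind Theorem~\ref{thm:diagonal_offdiagonal} in Appendix~\ref{app:gaussian_proofs}. That computation identifies the exponent $I_S$ and shows that $\alpha_S>0$ is the multiplier-sign condition. Because it works at logarithmic scale, it cannot by itself produce the $B^{-|S|/2}$ factor. Your direct integral is what actually delivers the prefactor, and it gives more than the statement asks:
\begin{itemize}
\item the bounds are non-asymptotic;
\item monotone convergence gives the sharp limit $B^{|S|/2}e^{BI_S}P_B\to c\prod_i\alpha_{S,i}^{-1}$, the multivariate Mills-ratio asymptotic;
\item only $\Sigma_{SS}\succ0$ and $\alpha_S>0$ are used, with no LDP or $J$-regularity input.
\end{itemize}
What the LDP/KKT route supplies and yours does not is face selection for a full finite union of polyhedral faces, i.e.\ deciding which active set dominates. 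The corollary, however, concerns only the single orthant event, so nothing is missing. Your closing remark on $\alpha_{S,i}\le0$ correctly matches the caveat in the statement: when $\alpha_{S,i}=0$ that coordinate contributes order $1$ rather than $B^{-1/2}$, and when $\alpha_{S,i}<0$ one must recenter on a lower face.
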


\begin{remark}[What the exponent captures]
For a fixed finite argmin problem with a single active losing competitor, the leading large-shot exponent depends on marginal variances in \emph{gap space}, namely $\Sigma_m(i,i)$, not on every entry of the gap kernel.  This is not a weakness of the quotient-space formulation; it is the correct geometry of a one-facet crossing.  Off-diagonal entries of the ambient kernel $K_m$ still enter even this diagonal gap variance through
\begin{equation}
    \Sigma_m(i,i)=K_m(i,i)+K_m(0,0)-2K_m(i,0).
\end{equation}
The genuinely off-diagonal entries of $\Sigma_m$ enter the leading rate for simultaneous crossing events, ranking consistency, top-$k$ inclusion/exclusion, finite-shot prefactors, and uniform dominance comparisons.  Thus the full-kernel claim is a claim about complete second-order decision geometry, not about every simple argmin exponent.
\end{remark}

\begin{proposition}[First-order decision-specific kernel comparison]
\label{prop:first_order_decision_kernel_comparison}
Let $R(\mu,\Sigma)$ be the Gaussian argmin risk of Proposition~\ref{prop:gaussian_risk}, with positive margin coordinates $\mu$ and kernel $\Sigma\succ0$ on the active band.  The map $R$ is differentiable in $\Sigma$ in the positive-definite cone; equivalently, by Plackett's reduction formula for normal multivariate integrals, the covariance derivatives of the orthant probability exist and can be written as lower-dimensional normal derivatives \cite{Plackett1954}.  Let
\begin{equation}
    G(\mu,\Sigma)=\nabla_\Sigma R(\mu,\Sigma)
\end{equation}
denote the resulting symmetric gradient.  Then, for every symmetric perturbation $E$ with $\Sigma+E\succ0$,
\begin{equation}
    R(\mu,\Sigma+E)-R(\mu,\Sigma)
    =
    \langle G(\mu,\Sigma),E\rangle_F+o(\|E\|).
\end{equation}
For two methods with predicted risk-ranking margin
\begin{equation}
    \Delta_{AB}=R(\mu_A,\Sigma_A)-R(\mu_B,\Sigma_B)\ne0
\end{equation}
and symmetric kernel perturbations $(E_A,E_B)$, the first-order change in this margin is
\begin{equation}
    \langle G(\mu_A,\Sigma_A),E_A\rangle_F
    -
    \langle G(\mu_B,\Sigma_B),E_B\rangle_F.
\end{equation}
In particular, if for some $\rho\in(0,1)$,
\begin{equation}
    \left|
    \langle G(\mu_A,\Sigma_A),E_A\rangle_F
    -
    \langle G(\mu_B,\Sigma_B),E_B\rangle_F
    \right|
    \le
    \rho\,|\Delta_{AB}|,
\end{equation}
then the predicted risk ranking is stable to first order under these perturbations, even when $E_A$ and $E_B$ are Loewner-incomparable with zero.
\end{proposition}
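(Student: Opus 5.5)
Our plan is to reduce everything to smoothness of the centered Gaussian orthant probability in its covariance argument, and then handle the ranking statement with elementary first-order algebra. By Proposition~\ref{prop:gaussian_risk}, $R(\mu,\Sigma)=1-\Phi_d(\sqrt B\,\mu;\Sigma)$, so differentiability of $R$ in $\Sigma$ on the open positive-definite cone is the same as differentiability of
\[
P(\Sigma)=\int_{(-\infty,c]} \varphi_\Sigma(z)\,dz,\qquad c=\sqrt B\,\mu .
\]
On the open cone $\Sigma\succ0$, the density $\varphi_\Sigma(z)=(2\pi)^{-d/2}\det(\Sigma)^{-1/2}\exp(-\tfrac12 z^\top\Sigma^{-1}z)$ is $C^\infty$ in $\Sigma$. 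Its first derivatives in $\Sigma$ are polynomial in $z$ times $\varphi_\Sigma$. On a compact neighbourhood $\{\Sigma':\|\Sigma'-\Sigma\|\le\epsilon\}\subset\{\Sigma'\succ0\}$, these derivatives are dominated by $C(1+\|z\|^2)\exp(-\kappa\|z\|^2)$ with $\kappa>0$ uniform, because the eigenvalues of $\Sigma'$ are bounded above and below. Dominated convergence then lets us differentiate under the integral over the fixed orthant. This gives continuous (indeed $C^1$) Fréchet differentiability of $P$, hence of $R$.

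The symmetric gradient is $G=-\sqrt{}$-free: $G(\mu,\Sigma)=-\nabla_\Sigma\Phi_d$. Plackett's identity $\partial\varphi_\Sigma/\partial\Sigma_{ij}=\partial^2\varphi_\Sigma/\partial z_i\partial z_j$ for $i\ne j$, with a factor $\tfrac12$ on the diagonal, lets us integrate out $z_i,z_j$ and write $G$ explicitly as lower-dimensional normal terms. We only cite this for the explicit form; existence already follows from the dominated-convergence argument. Fréchet differentiability is by definition the expansion
\[
R(\mu,\Sigma+E)-R(\mu,\Sigma)=\langle G,E\rangle_F+o(\|E\|),
\]
where we pick $G$ symmetric by symmetrizing, since $E$ ranges over symmetric matrices only.

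For the ranking statement, write $\Delta_{AB}(E_A,E_B)=R(\mu_A,\Sigma_A+E_A)-R(\mu_B,\Sigma_B+E_B)$. Subtracting the two expansions gives
\[
\Delta_{AB}(E_A,E_B)=\Delta_{AB}+\bigl(\langle G_A,E_A\rangle_F-\langle G_B,E_B\rangle_F\bigr)+o(\|E_A\|+\|E_B\|).
\]
This is the claimed first-order change.

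Stability to first order means the linearized margin $\Delta_{AB}+\delta$ keeps the sign of $\Delta_{AB}$. Under the hypothesis $|\delta|\le\rho|\Delta_{AB}|$ with $\rho<1$, we get $\operatorname{sign}(\Delta_{AB}+\delta)=\operatorname{sign}\Delta_{AB}$ and $|\Delta_{AB}+\delta|\ge(1-\rho)|\Delta_{AB}|$. If the perturbations are scaled as $(tE_A,tE_B)$ with the hypothesis holding along the path, the $o(t)$ remainder cannot flip the sign for small $t$. No ordering of $E_A,E_B$ is used anywhere, so the Loewner-incomparable case is covered. We should state explicitly that ``stable to first order'' means exactly this linearized sign-preservation.

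The main obstacle is the uniform domination needed to differentiate under the integral, together with the dependence of $G$ on $B$ through $c=\sqrt B\mu$. For fixed $B$ the domination is routine, so we will fix $B$ and state the result per budget. We will not claim any uniformity in $B$.
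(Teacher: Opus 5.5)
Your proposal is correct and follows essentially the same route as the paper: differentiability of the Gaussian orthant probability on the open positive-definite cone by dominated convergence, with Plackett's identity used only for the explicit lower-dimensional form of $G$, followed by the first-order expansion, subtraction of the two expansions, and linearized sign preservation of $\Delta_{AB}$ under $|\langle G_A,E_A\rangle_F-\langle G_B,E_B\rangle_F|\le\rho|\Delta_{AB}|$. Your extra remarks do no harm; note only that the $t$-scaling comment follows from continuity of $R$ and $\Delta_{AB}\neq0$ alone, without the hypothesis, so the substantive content is the linearized statement, exactly as in the paper.
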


\begin{proof}
The event in Proposition~\ref{prop:gaussian_risk} is a finite union of orthant-type Gaussian crossing events after translating by $\mu$.  For $\Sigma\succ0$, the multivariate normal density is smooth in the covariance entries, and differentiation under the integral is justified locally by dominated convergence.  Equivalently, Plackett's identity gives the same derivatives explicitly, for instance $\partial\Phi_d/\partial\sigma_{ij}$ as a mixed boundary derivative of a lower-dimensional normal probability.  Thus $R(\mu,\Sigma)$ is differentiable as a function of $\Sigma$ inside the positive-definite cone, and Taylor expansion gives the first display with a remainder $o(\|E\|)$.

Applying this expansion separately to the two methods gives
\begin{align}
&\bigl(R(\mu_A,\Sigma_A+E_A)-R(\mu_B,\Sigma_B+E_B)\bigr)-\Delta_{AB} \notag\\
&\qquad =
\langle G(\mu_A,\Sigma_A),E_A\rangle_F
-
\langle G(\mu_B,\Sigma_B),E_B\rangle_F
+o(\|(E_A,E_B)\|).
\end{align}
The stated inequality and the triangle inequality show that the first-order correction has magnitude at most $\rho|\Delta_{AB}|$, so it cannot reverse the sign of $\Delta_{AB}$ at first order.  This is precisely the decision-specific comparison used operationally in the pullback validation below and in the pre-registered hardware stress test of Section~\ref{sec:numerical_proof_of_mechanism}, with safety factor $\rho=0.25$.
\end{proof}

\section{Worked QEM signatures: how ZNE, CDR, and PEC move the gap kernel}

The following calculations show how standard QEM methods alter the same decision object: the effective margin and the gap kernel.  These are not separate structural contributions; in Gaussian large-shot regimes, they are method-level instances showing how the same triple
\begin{equation}
    (La_m,\; LK_mL^\top,\; I_m)
\end{equation}
changes across concrete mitigation protocols.

\subsection{ZNE: Richardson cancellation and differential-noise amplification}

Let $\widehat F_\ell$ denote a noisy estimator at noise scale $\lambda_\ell$.  Digital gate folding provides a practical route to these scaled-noise estimators \cite{GiurgicaTiron2020}.  Richardson constraints are justified by a local smooth noise-response expansion
\begin{equation}
    F_\lambda(x)=F(x)+\sum_{r=1}^p \beta_r(x)\lambda^r+R_{p+1}(x,\lambda),
    \qquad R_{p+1}(x,\lambda)=O(\lambda^{p+1}),
\end{equation}
valid uniformly over the extrapolation window, or equivalently along a small common noise scale for the chosen dilation factors.  ZNE uses coefficients $c_\ell$ satisfying moment conditions, for example
\begin{equation}
    \sum_\ell c_\ell=1,
    \qquad
    \sum_\ell c_\ell \lambda_\ell^r=0,
    \quad r=1,\ldots,p.
\end{equation}
We assume $q\ge p+1$ and that the Richardson linear system is feasible; equivalently, the displayed moment constraints admit at least one coefficient vector $c$ over the chosen noise scales.  The ZNE estimator is
\begin{equation}
    \widehat F_{\ZNE}=\sum_{\ell=1}^q c_\ell\widehat F_\ell.
\end{equation}

\begin{theorem}[ZNE decision-kernel signature]
\label{thm:zne_signature}
Let
\begin{equation}
    \widehat F_\ell=F+b_\ell+\xi_\ell,
    \qquad
    K_{\ell r}^{(B)}=\Cov(\xi_\ell,\xi_r)
\end{equation}
be the actual finite-budget covariance between scale estimators.  Then
\begin{equation}
    a_{\ZNE}=\sum_\ell c_\ell b_\ell,
    \qquad
    K_{\ZNE}^{(B)}=\sum_{\ell,r}c_\ell c_r K_{\ell r}^{(B)},
\end{equation}
so the actual finite-budget decision covariance is
\begin{equation}
    \Sigma_{\ZNE}^{(B)}=L\left(\sum_{\ell,r}c_\ell c_r K_{\ell r}^{(B)}\right)L^\top.
\end{equation}
Define the finite-budget per-unit cross-covariance by $K_{\ell r}^{\mathrm{unit}}(B):=B K_{\ell r}^{(B)}$.  When the shot-scaling limit exists, write $K_{\ell r}:=\lim_{B\to\infty}K_{\ell r}^{\mathrm{unit}}(B)$.  The per-unit decision kernel used in the Gaussian exponent is then
\begin{equation}
    \Sigma_{\ZNE}=\lim_{B\to\infty}B\Sigma_{\ZNE}^{(B)}
    =L\left(\sum_{\ell,r}c_\ell c_r K_{\ell r}\right)L^\top.
\end{equation}
If the scale estimators are shot-independent, have allocations $B_\ell=w_\ell B$ with $w_\ell>0$, and have per-shot gap kernels $\Sigma_\ell$, then
\begin{equation}
    \Sigma_{\ZNE}
    =\sum_{\ell=1}^q \frac{c_\ell^2}{w_\ell}\,\Sigma_\ell,
    \qquad
    \Sigma_{\ZNE}^{(B)}
    =\frac1B\sum_{\ell=1}^q \frac{c_\ell^2}{w_\ell}\,\Sigma_\ell.
\end{equation}
In the homoskedastic matched-allocation case $\Sigma_\ell=\Sigma_0$ and $w_\ell=1/q$,
\begin{equation}
    \Sigma_{\ZNE}=q\norm{c}_2^2\,\Sigma_0,
    \qquad
    \Sigma_{\ZNE}^{(B)}=\frac{q\norm{c}_2^2}{B}\,\Sigma_0.
\end{equation}
\end{theorem}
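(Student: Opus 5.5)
The plan is to treat $\widehat F_{\ZNE}$ as a fixed deterministic linear map of the stacked scale estimators and then invoke the exact branch of the QEM pullback restriction (Theorem~\ref{thm:qem_pullback}). Take as primitive record $Y(B)=(\widehat F_1,\ldots,\widehat F_q)\in\R^{nq}$ and mitigation map $M_{\ZNE}=c^\top\otimes I_n$, so that $\widehat F_{\ZNE}=M_{\ZNE}Y(B)$.

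\textbf{Bias and ambient covariance.} Substituting $\widehat F_\ell=F+b_\ell+\xi_\ell$ gives
\[
\widehat F_{\ZNE}=\Bigl(\sum_\ell c_\ell\Bigr)F+\sum_\ell c_\ell b_\ell+\sum_\ell c_\ell\xi_\ell .
\]
The Richardson normalization $\sum_\ell c_\ell=1$ identifies the deterministic residual as $a_{\ZNE}=\sum_\ell c_\ell b_\ell$ and the centered fluctuation as $\xi_{\ZNE}=\sum_\ell c_\ell\xi_\ell$. Bilinearity of covariance then gives
\[
K_{\ZNE}^{(B)}=\sum_{\ell,r}c_\ell c_r K_{\ell r}^{(B)} .
\]
This is exactly the block form of $M_{\ZNE}C_{\mathrm{dev}}(B)M_{\ZNE}^\top$. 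Applying the contrast map yields $\Sigma_{\ZNE}^{(B)}=LK_{\ZNE}^{(B)}L^\top$, because $L$ is linear and deterministic.

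\textbf{Per-unit limit.} Multiply by $B$ and use that the sum is finite ($q$ fixed) and that the $c_\ell$ do not depend on $B$. Then $B\Sigma_{\ZNE}^{(B)}=L\bigl(\sum c_\ell c_r K^{\mathrm{unit}}_{\ell r}(B)\bigr)L^\top$, and the limit passes through the finite sum and the congruence. Whenever each $K_{\ell r}^{\mathrm{unit}}(B)$ converges, this gives $\Sigma_{\ZNE}=L\bigl(\sum c_\ell c_rK_{\ell r}\bigr)L^\top$.

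\textbf{Independent allocations.} Under shot independence the cross terms vanish, $K_{\ell r}^{(B)}=0$ for $\ell\ne r$. I read ``per-shot gap kernel $\Sigma_\ell$'' as the gap covariance of a single-shot estimate at scale $\ell$. A scale mean built from $B_\ell=w_\ell B$ independent shots then has $LK_{\ell\ell}^{(B)}L^\top=\Sigma_\ell/(w_\ell B)$, and summing gives both displayed formulas. In the homoskedastic matched case, substituting $w_\ell=1/q$ gives $\sum_\ell c_\ell^2 q\,\Sigma_0=q\norm{c}_2^2\Sigma_0$.

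The main obstacle is only one of convention. The displayed expression for $a_{\ZNE}$ is stated under the paper's budget-independent bias convention. If the $b_\ell$ are read as $B$-dependent, the same identity holds verbatim with $a_{\ZNE}^{(B)}$, as the notational remark in the setup allows. The per-shot kernel must likewise be fixed as the $1/B_\ell$-scaling constant. With those conventions in place, every step is a finite linear-algebra identity.
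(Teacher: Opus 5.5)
Your proposal is correct and follows the paper's argument. Linearity of the ZNE estimator gives the bias and the bilinear covariance formulas, $L$ acts by congruence, multiplying by $B$ and taking the limit through the finite sum gives the per-unit kernel, and shot independence with $B_\ell=w_\ell B$ removes the cross terms and produces the $1/w_\ell$ factors. Framing this as the exact branch of Theorem~\ref{thm:qem_pullback} with $M_{\ZNE}=c^\top\otimes I_n$, and stating explicitly that $\sum_\ell c_\ell=1$ is what isolates $a_{\ZNE}$, are harmless clarifications of steps the paper leaves implicit.
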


\begin{proof}
Linearity of the estimator gives the bias and covariance formulas.  Applying $L$ on both sides gives the finite-budget decision covariance.  Multiplication by $B$, followed by the shot-scaling limit when needed, converts the actual covariance into the per-unit-budget kernel used in large-deviation exponents.  Under independent scales, cross-covariances vanish and $B_\ell=w_\ell B$ gives the factors $1/w_\ell$.  The matched-allocation expression follows directly.
\end{proof}

\begin{corollary}[ZNE bias--exponent tradeoff]
ZNE can improve effective margins by reducing smooth bias while reducing the decision exponent through coefficient-norm amplification of gap noise.  For a basic argmin in the Gaussian large-shot regime,
\begin{equation}
    I_{\ZNE}=\min_i\frac{\mu_{\ZNE,i}^2}{2\Sigma_{\ZNE}(i,i)},
\end{equation}
provided the relevant effective margins are positive.  Thus the ZNE decision-help threshold need not coincide with the ZNE MSE-help threshold.
\end{corollary}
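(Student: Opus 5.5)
The plan is to read the corollary off Theorem~\ref{thm:zne_signature} and the argmin specialization of Theorem~\ref{thm:decision_exponent}; no new probabilistic estimate is required. The argument has three steps: identify the ZNE effective margin, identify the ZNE gap kernel, and then observe that the exponent and the MSE combine these two objects differently.

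\textbf{Effective margin.} From Theorem~\ref{thm:zne_signature}, $a_{\ZNE}=\sum_\ell c_\ell b_\ell$. Under the smooth noise-response expansion with Richardson moment conditions, $b_\ell(x)=\sum_{r=1}^p\beta_r(x)\lambda_\ell^r+R_{p+1}(x,\lambda_\ell)$. The constraints $\sum_\ell c_\ell\lambda_\ell^r=0$ for $r=1,\ldots,p$ then cancel all polynomial terms through order $p$, which leaves
\[
a_{\ZNE}=\sum_\ell c_\ell R_{p+1}(\cdot,\lambda_\ell)=O(\|c\|_1\lambda_{\max}^{p+1}).
\]
Hence $\mu_{\ZNE}=\Delta+La_{\ZNE}$ lies closer to $\Delta$ than the raw effective margin $\mu_{\Raw}=\Delta+Lb_1$ whenever the low-order gap bias $L\beta_r$ dominates the remainder. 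This is the "improve effective margins" half of the statement.

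\textbf{Gap kernel.} In the independent, matched-allocation case, Theorem~\ref{thm:zne_signature} gives $\Sigma_{\ZNE}=\sum_\ell (c_\ell^2/w_\ell)\Sigma_\ell$, which reduces to $q\|c\|_2^2\Sigma_0$. The constraint $\sum_\ell c_\ell=1$ forces $q\|c\|_2^2\ge(\sum_\ell c_\ell)^2=1$ by Cauchy--Schwarz. Equality holds only at uniform $c$, and uniform $c$ violates the moment constraints once $p\ge1$ and the $\lambda_\ell$ are distinct. So the diagonal gap variances are inflated by a factor strictly greater than one. This is the "coefficient-norm amplification" half of the statement.

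\textbf{Exponent and separation.} Whenever all $\mu_{\ZNE,i}>0$ and $\Sigma_{\ZNE}(i,i)>0$, I would apply Eq.~\eqref{eq:argmin_exponent} from Theorem~\ref{thm:decision_exponent} with $(\mu,\Sigma)=(\mu_{\ZNE},\Sigma_{\ZNE})$, which gives the displayed formula for $I_{\ZNE}$. For the final sentence, compare the two criteria:
\begin{itemize}
\item Decision help requires $\min_i \mu_{\ZNE,i}^2/\Sigma_{\ZNE}(i,i)>\min_i \mu_{\Raw,i}^2/\Sigma_{\Raw}(i,i)$.
\item MSE help requires $\|a_{\ZNE}\|^2+\tr K^{(B)}_{\ZNE}<\|a_{\Raw}\|^2+\tr K^{(B)}_{\Raw}$.
\end{itemize}
These depend on different functionals: gap-projected bias versus ambient bias, and diagonal gap variance versus ambient trace. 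I would then exhibit a two-point witness along the lines of Proposition~\ref{prop:qem_realizable_nogo}. Take a raw bias that is mostly common-mode, say $b_1=\beta\lambda_1\1$ plus a small differential part, so that ZNE's bias removal cuts MSE substantially but barely changes $\mu$. Meanwhile the factor $q\|c\|_2^2>1$ shrinks $I_{\ZNE}$. With $B$ large, the bias term dominates the MSE, so MSE improves while $I_{\ZNE}<I_{\Raw}$.

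\textbf{Main obstacle.} I expect the hardest part to be stating the positivity proviso carefully. If the bias-corrected margin $\mu_{\ZNE,i}\le0$, Theorem~\ref{thm:decision_exponent} yields no positive exponent, and the formula for $I_{\ZNE}$ should be asserted only under the stated margin condition. The degenerate $\Sigma_{\ZNE}(i,i)=0$ case is handled by the convention in that same theorem.
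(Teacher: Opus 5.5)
Your proposal is correct and follows the paper's implicit argument: the displayed $I_{\ZNE}$ is Eq.~\eqref{eq:argmin_exponent} applied to the pair $(\mu_{\ZNE},\Sigma_{\ZNE})$ from Theorem~\ref{thm:zne_signature}, and your common-mode-bias witness makes concrete the separation that the paper only asserts by contrasting $\min_i\mu_i^2/\Sigma_{ii}$ with $\|a\|_2^2+\tr K$. Two small fixes: uniform $c$ fails the constraint $\sum_\ell c_\ell\lambda_\ell=0$ because the noise scales are positive (so $\sum_\ell\lambda_\ell\neq0$), not because they are distinct; and moving ``closer to $\Delta$'' improves the margin only when the raw gap bias attenuates the margins, as under depolarizing noise.
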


\begin{corollary}[ZNE decision-help threshold]
For a single active critical gap $i$ in the Gaussian large-shot regime, with positive effective margins for both methods, ZNE improves the asymptotic decision reliability over raw estimation if and only if
\begin{equation}
    \frac{\mu_{\ZNE,i}^2}{\Sigma_{\ZNE}(i,i)}
    >
    \frac{\mu_{\Raw,i}^2}{\Sigma_{\Raw}(i,i)}.
\end{equation}
Therefore a genuine MSE--decision separation region is
\begin{equation}
    \MSE_{\ZNE}<\MSE_{\Raw}
    \qquad\text{and}\qquad
    \frac{\mu_{\ZNE,i}^2}{\Sigma_{\ZNE}(i,i)}
    <
    \frac{\mu_{\Raw,i}^2}{\Sigma_{\Raw}(i,i)}.
\end{equation}
In this region ZNE is pointwise more accurate but decision-asymptotically worse.  At finite $B$, prefactors, multiple active gaps, ties, non-Gaussian tails, and singular covariance effects can still change the observed finite-budget ordering.
\end{corollary}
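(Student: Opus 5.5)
The proof reduces the comparison to the single-gap exponent formula of Theorem~\ref{thm:decision_exponent}, applied to each method separately and then compared. Under the single-active-gap hypothesis, for each $m\in\{\ZNE,\Raw\}$ the argmin exponent Eq.~\eqref{eq:argmin_exponent} is attained at the critical index $i$. We therefore have $I_m=\mu_{m,i}^2/(2\Sigma_m(i,i))$. This needs $\mu_{m,i}>0$, which is assumed, and $\Sigma_m(i,i)>0$. If $\Sigma_m(i,i)=0$, the convention of Theorem~\ref{thm:decision_exponent} gives $I_m=+\infty$, and the ratio $\mu^2/\Sigma$ is read as $+\infty$ consistently. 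For ZNE, the kernel $\Sigma_{\ZNE}(i,i)$ is the per-unit diagonal entry supplied by Theorem~\ref{thm:zne_signature}, and $\mu_{\ZNE}=\Delta+La_{\ZNE}$ with $a_{\ZNE}=\sum_\ell c_\ell b_\ell$.

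Next we turn the exponent comparison into a comparison of asymptotic risks. Since $R_m(B)=\exp[-BI_m+o(B)]$, we get
\[
\lim_{B\to\infty}\frac1B\log\frac{R_{\ZNE}(B)}{R_{\Raw}(B)}=I_{\Raw}-I_{\ZNE}.
\]
So ZNE is asymptotically more reliable, meaning its risk is exponentially smaller, exactly when $I_{\ZNE}>I_{\Raw}$. After cancelling the common factor $1/2$, this is the displayed inequality. Here ``improves asymptotic decision reliability'' is interpreted in this exponent sense. When the exponents are equal, the comparison is undecided at leading order and belongs to the prefactor regime of Corollary~\ref{cor:prefactor_dependence}; this is why the statement uses a strict inequality.

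The separation region then follows by intersecting this criterion with the MSE expression Eq.~\eqref{eq:mse_v6}. The two conditions involve different objects: the ambient $\tr K_m$ and $\norm{a_m}^2$ on one side, and the gap-space quantities $\mu_{m,i}$ and $\Sigma_m(i,i)$ on the other. Theorem~\ref{thm:nogo_marginal}(b,c) and Proposition~\ref{prop:qem_realizable_nogo} already show that the two can be chosen independently. In particular, common-mode bias or variance changes MSE without changing any gap quantity. Hence the region is nonempty in general.

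The remark about finite $B$ only needs a pointer to Corollary~\ref{cor:prefactor_dependence} and Proposition~\ref{prop:gap_clt}: exponents control only the $o(B)$-level comparison. The main obstacle is justifying that the critical gap really dominates for \emph{both} methods. ZNE changes both $\mu$ and $\Sigma$, so it could move the minimizing index to a different gap. We handle this by making the single-active-gap hypothesis explicit: we assume the minimum in Eq.~\eqref{eq:argmin_exponent} is uniquely attained at $i$ for both methods. Without that, the criterion must be replaced by the comparison of full minima.
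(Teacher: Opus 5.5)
Your proposal is correct and follows the route the paper intends (the paper gives no separate proof, treating the corollary as immediate): specialize the argmin exponent Eq.~\eqref{eq:argmin_exponent} to the single active gap for each method, with $\mu_{\ZNE}$ and $\Sigma_{\ZNE}$ supplied by Theorem~\ref{thm:zne_signature}, compare $I_{\ZNE}$ with $I_{\Raw}$, and read the separation region as the intersection of the reversed inequality with the MSE condition. Two of your additions are exactly the hypotheses the paper leaves implicit: that $i$ attains the minimum for both methods, and that ``asymptotic reliability'' is meant in the exponent sense. Your nonemptiness remark goes beyond what the statement requires, and the no-go witnesses you cite are generic covariance constructions rather than ZNE-specific ones, so they illustrate nonemptiness for Richardson-constrained estimators without proving it.
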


\subsection{Decision-optimal ZNE coefficients under certified margins}

Richardson moment constraints remove low-order noise bias, but when there are more noise scales than constraints the remaining degrees of freedom should be chosen for the downstream decision, not for ambient MSE.  Assume $q\ge p+1$ and that the Richardson constraint system is feasible.  Let $A c=\rho_{\mathrm R}$ encode the Richardson constraints, e.g.
\begin{equation}
    \sum_\ell c_\ell=1,
    \qquad
    \sum_\ell c_\ell\lambda_\ell^r=0,
    \quad r=1,\ldots,p.
\end{equation}
For a critical band $\calC$, define the per-unit per-scale or cross-scale quadratic form
\begin{equation}
    Q_i=\left[ e_i^\top L K_{\ell r}L^\top e_i\right]_{\ell,r=1}^q,
    \qquad i\in\calC,
\end{equation}
so that the per-unit variance of critical gap $i$ under ZNE coefficients $c$ is $c^\top Q_i c$.

\begin{theorem}[Decision-optimal ZNE under certified margins]
\label{thm:decision_optimal_zne}
For each critical gap, write the ZNE effective margin as
\begin{equation}
    \mu_i(c)=\Delta_i+\sum_\ell c_\ell b_{\ell,i}
\end{equation}
where $b_{\ell,i}:=(Lb_\ell)_i$ is the bias contribution in gap space; for the reference contrast this is $b_\ell(x_i)-b_\ell(x_0)$.  Assume these effective margins are fixed to first order on a critical band, or lower bounded by certified values $\underline\mu_i>0$.  Under these certified-margin conditions, the coefficients that maximize the certified worst critical-gap decision signal-to-noise ratio solve the second-order cone program
\begin{equation}
\label{eq:decision_opt_zne_socp}
\begin{aligned}
    \min_{c,u}\quad & u \\
    \text{subject to}\quad
        & A c=\rho_{\mathrm R},\\
        & \norm{Q_i^{1/2}c}_2\le u\,\underline\mu_i,
        \qquad i\in\calC .
\end{aligned}
\end{equation}
Equivalently, it minimizes
\begin{equation}
    \max_{i\in\calC}\frac{c^\top Q_i c}{\underline\mu_i^2}.
\end{equation}
These coefficients generally do not coincide with coefficients that minimize ambient MSE or the average pointwise variance.
\end{theorem}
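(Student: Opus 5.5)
The plan is to reduce the decision-optimality claim to the single-active-gap exponent of Theorem~\ref{thm:decision_exponent} and Theorem~\ref{thm:diagonal_offdiagonal}, and then recast the resulting min--max as a convex program.

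\textbf{Step 1: variance of each critical gap.} For fixed coefficients $c$ with $Ac=\rho_{\mathrm R}$, the ZNE pullback of Theorem~\ref{thm:zne_signature} gives the per-unit kernel $\Sigma_{\ZNE}(c)=L(\sum_{\ell,r}c_\ell c_rK_{\ell r})L^\top$. Hence
\[
\Sigma_{\ZNE}(c)(i,i)=\sum_{\ell,r}c_\ell c_r\,e_i^\top LK_{\ell r}L^\top e_i=c^\top Q_ic .
\]
Each $Q_i$ is positive semidefinite. To see this, note that $c^\top Q_ic$ is the variance of the linear combination $\sum_\ell c_\ell e_i^\top L\xi_\ell$, scaled by $B$ in the limit. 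So $Q_i^{1/2}$ exists and $c^\top Q_ic=\|Q_i^{1/2}c\|_2^2$.

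\textbf{Step 2: objective under certified margins.} The argmin exponent is $\min_i\mu_i(c)^2/(2c^\top Q_ic)$. If the margins are fixed to first order on the band, then $\mu_i(c)=\underline\mu_i$. If they are only certified from below, then $\mu_i(c)\ge\underline\mu_i>0$. In either case, the certified exponent is
\[
\underline I(c)=\min_{i\in\calC}\frac{\underline\mu_i^2}{2c^\top Q_ic}.
\]
This is a guaranteed lower bound on the true exponent, since $\mu^2/(2\sigma^2)$ is increasing in $\mu>0$. Gaps outside $\calC$ are assumed non-critical, meaning their exponents are larger, so they do not enter the minimum.

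Maximizing $\underline I(c)$ is therefore equivalent to minimizing $\max_{i}c^\top Q_ic/\underline\mu_i^2$, which is the stated equivalent form. A corner case arises if $c^\top Q_ic=0$ for all $i$. Then the certified signal-to-noise ratio is infinite, and the program attains $u=0$; this is consistent.

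\textbf{Step 3: SOCP equivalence.} Set $u\ge\max_i\|Q_i^{1/2}c\|_2/\underline\mu_i$. The epigraph reformulation gives exactly \eqref{eq:decision_opt_zne_socp}. This is a second-order cone program because each constraint $\|Q_i^{1/2}c\|_2\le u\underline\mu_i$ is a Lorentz-cone constraint, and $Ac=\rho_{\mathrm R}$ is affine.

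Feasibility comes from the assumed feasibility of the Richardson system. Existence of a minimizer follows because the objective is a convex, continuous, nonnegative function on a nonempty affine set. If the $Q_i$ are jointly positive definite on the relevant directions, the objective is also coercive.

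Minimizing $u$ equals minimizing $\sqrt{\max_ic^\top Q_ic/\underline\mu_i^2}$. Since the square root is monotone, the argmins agree.

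\textbf{Step 4: non-coincidence with MSE-optimal coefficients.} Variance-optimal or MSE-optimal coefficients minimize $c^\top Q_{\mathrm{amb}}c$ over the same affine set, with $Q_{\mathrm{amb}}=[\tr K_{\ell r}]/n$. I would exhibit a small witness. Take $q=3$ scales, $p=1$, and one critical gap. Choose the $K_{\ell r}$ so that one scale carries a large common-mode variance, of the form $\tau^2\1\1^\top$, which is killed by $L$ but not by the trace. Then the MSE optimum puts small weight on that scale, while the decision optimum puts large weight on it. Because the minimizers of two distinct strictly convex quadratics on a line generically differ, this gives the word ``generally.''

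\textbf{Main obstacle.} The delicate point is Step 2. The reduction to diagonal gap variances relies on each critical gap being a singleton active face in the sense of Theorem~\ref{thm:diagonal_offdiagonal}(i). It also relies on the margins being $c$-independent or certified uniformly in $c$ over the feasible set. I would state explicitly that the theorem optimizes the certified exponent, which lower-bounds the true one, rather than the true exponent itself. Without certification, $\mu_i(c)$ is affine in $c$, and the objective becomes a ratio that is not generally convex.
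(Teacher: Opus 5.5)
Your proposal is correct and follows essentially the paper's argument: monotonicity of the certified exponent in $\min_{i\in\calC}\underline\mu_i^2/(c^\top Q_i c)$, inversion of the max--min to minimizing $\max_{i\in\calC}c^\top Q_i c/\underline\mu_i^2$, and the epigraph SOCP, justified by $Q_i\succeq0$ and the affine Richardson constraints. Your common-mode witness (taken with zero residual bias, so that ambient MSE reduces to the variance quadratic form) also supplies the non-coincidence claim, which the paper's proof does not argue. One small repair: a convex, continuous, nonnegative function on an affine set need not attain its infimum, so justify existence instead by noting that $c\mapsto\max_{i\in\calC}\norm{Q_i^{1/2}c}_2/\underline\mu_i$ is a seminorm. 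A seminorm attains its minimum on any nonempty affine set in finite dimensions, because it is a norm on the quotient by its kernel and the image of the affine set there is closed.
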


\begin{proof}
For fixed certified margins, the certified worst critical-gap exponent is monotone in $\min_i \underline\mu_i^2/(c^\top Q_i c)$.  Maximizing this quantity is equivalent to minimizing the maximum displayed ratio.  Since each $Q_i$ is positive semidefinite as a covariance quadratic form, the constraint $\norm{Q_i^{1/2}c}_2\le u\underline\mu_i$ is a second-order cone constraint, while the Richardson constraints are affine.
\end{proof}

The certified-margin assumption is essential.  In the fully general problem the same coefficients that change the variance also change the effective margins $\mu_i(c)$ through the remaining bias.  Optimizing the exact ratios with $\mu_i(c)$ in the denominator is generally nonconvex or requires additional robust lower bounds.

The joint decision-aware design of extrapolation weights and shot allocations is convex in the continuous relaxation and is recorded in Appendix~\ref{app:gaussian_proofs}.

\subsection{CDR: low-rank training uncertainty and common-mode cancellation}

A linearized CDR model can be written as
\begin{equation}
    \widehat F_{\CDR}(x)=\widehat\theta^\top\phi(x)+\varepsilon_{\mathrm{tar}}(x),
\end{equation}
where $\phi(x)\in\R^p$ is a feature vector, $\widehat\theta$ is learned from training circuits, and $\varepsilon_{\mathrm{tar}}$ is target-shot noise or residual target error.  Assume $\E[\varepsilon_{\mathrm{tar}}(x)]=0$; if the target residual has a nonzero mean, that mean is absorbed into $a_{\CDR}$.  At finite budget $B$, write
\begin{equation}
    \widehat\theta=\theta+\delta\theta^{(B)},
    \qquad
    V_\theta^{(B)}=\Cov(\delta\theta^{(B)}),
\end{equation}
and, when the shot-scaling limit exists,
\begin{equation}
    V_\theta^{\mathrm{unit}}=\lim_{B\to\infty}B V_\theta^{(B)},
    \qquad
    K_{\mathrm{tar}}^{\mathrm{unit}}=\lim_{B\to\infty}B K_{\mathrm{tar}}^{(B)},
\end{equation}
where $K_{\mathrm{tar}}^{(B)}=\Cov(\varepsilon_{\mathrm{tar}})$ is the actual finite-budget target residual covariance.  The formulas below are interpreted either conditionally on a fixed fitted CDR map, or as the leading-order delta-method covariance of the full training-plus-target procedure, including cross terms when the training and target data share randomness.  The deterministic transfer or model bias is
\begin{equation}
    a_{\CDR}(x)=\theta^\top\phi(x)-F(x),
\end{equation}
so the residual admits the decomposition
\begin{equation}
    E_{\CDR}(x)
    =\widehat F_{\CDR}(x)-F(x)
    =a_{\CDR}(x)+(\delta\theta^{(B)})^\top\phi(x)+\varepsilon_{\mathrm{tar}}(x).
\end{equation}
Let $X_{\mathrm{feat}}\in\R^{n\times p}$ be the feature matrix with row $i$ equal to $\phi(x_i)^\top$.  In gap space,
\begin{equation}
    L E_{\CDR}=L a_{\CDR}+LX_{\mathrm{feat}}\,\delta\theta^{(B)}+L\varepsilon_{\mathrm{tar}}.
\end{equation}
This separates CDR transfer bias from training covariance and target-shot covariance.

\begin{theorem}[CDR decision-kernel signature]
\label{thm:cdr_signature}
The actual finite-budget training-uncertainty component of the CDR gap covariance is
\begin{equation}
    \Sigma_{\CDR,\theta}^{(B)}(i,j)=
    (\phi_i-\phi_0)^\top V_\theta^{(B)}(\phi_j-\phi_0),
\end{equation}
where $\phi_i=\phi(x_i)$.  Its per-unit counterpart is
\begin{equation}
    \Sigma_{\CDR,\theta}(i,j)=
    (\phi_i-\phi_0)^\top V_\theta^{\mathrm{unit}}(\phi_j-\phi_0).
\end{equation}
Including target residual covariance and possible cross terms gives the per-unit decision kernel
\begin{equation}
    \Sigma_{\CDR}=\Sigma_{\CDR,\theta}+L K_{\mathrm{tar}}^{\mathrm{unit}}L^\top+\Sigma_{\mathrm{cross}},
\end{equation}
where $\Sigma_{\mathrm{cross}}$ denotes the sum of the two training-target cross-covariance terms and is symmetric.  The analogous finite-budget identity is obtained by replacing each per-unit covariance by its finite-budget version.  In particular, if $\phi_i\approx\phi_0$ on a critical band, training uncertainty can be large pointwise but small in decision space.
\end{theorem}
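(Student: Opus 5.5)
The plan is to treat the CDR estimator as a linear pullback of the primitive randomness and read off the gap covariance by applying $L$ to the residual decomposition already displayed above. The starting point is the gap-space identity
\begin{equation*}
    LE_{\CDR}=La_{\CDR}+LX_{\mathrm{feat}}\,\delta\theta^{(B)}+L\varepsilon_{\mathrm{tar}}.
\end{equation*}
Here $a_{\CDR}$ is deterministic, $\delta\theta^{(B)}$ is centered (conditionally on the fitted map, or at leading order via the delta method), and $\varepsilon_{\mathrm{tar}}$ is centered by assumption. The covariance of $LE_{\CDR}$ is then the covariance of the two random terms. Bilinearity of covariance gives four blocks. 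The training block is $(LX_{\mathrm{feat}})V_\theta^{(B)}(LX_{\mathrm{feat}})^\top$. The target block is $LK_{\mathrm{tar}}^{(B)}L^\top$. The two cross blocks are $LX_{\mathrm{feat}}\Cov(\delta\theta^{(B)},\varepsilon_{\mathrm{tar}})L^\top$ and its transpose, and their sum is $\Sigma_{\mathrm{cross}}^{(B)}$, which is symmetric by construction.

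The entrywise formula comes from computing the rows of $LX_{\mathrm{feat}}$. Since $(Lv)_i=v_i-v_0$, row $i$ of $LX_{\mathrm{feat}}$ is $\phi(x_i)^\top-\phi(x_0)^\top=(\phi_i-\phi_0)^\top$. This yields
\begin{equation*}
    \Sigma_{\CDR,\theta}^{(B)}(i,j)=(\phi_i-\phi_0)^\top V_\theta^{(B)}(\phi_j-\phi_0).
\end{equation*}
This is simply Theorem~\ref{thm:qem_pullback} with the composite map $LM_m$ specialized to the feature-difference matrix. To obtain the per-unit statement, I multiply by $B$ and pass to the limit using the defining limits of $V_\theta^{\mathrm{unit}}$ and $K_{\mathrm{tar}}^{\mathrm{unit}}$. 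The cross term converges as well, provided its scaled limit exists; this should be stated as part of the interpretation, as the paper already does for the training-plus-target procedure. The finite-budget identity is the same computation before scaling.

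The final claim requires a bound. Cauchy--Schwarz gives $\Sigma_{\CDR,\theta}(i,i)\le\|V_\theta^{\mathrm{unit}}\|_{\op}\|\phi_i-\phi_0\|_2^2$. The diagonal entry therefore vanishes as $\phi_i\to\phi_0$, even though the pointwise training variance $\phi_i^\top V_\theta\phi_i$ can remain large; this is the common-mode cancellation.

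The main obstacle is conceptual rather than computational. In a learned procedure, $\widehat\theta$ and $\varepsilon_{\mathrm{tar}}$ may share randomness, and the fit is nonlinear. I would handle this as Theorem~\ref{thm:qem_pullback} does: either condition on the fitted map, which makes $\delta\theta$ fixed and kills the cross terms, or enlarge the primitive vector to include training and target records and invoke the covariance-level delta method with the $L^2$ remainder condition. Under the second option, the identities hold up to $o(B^{-1})$ and hold exactly in the per-unit limit.
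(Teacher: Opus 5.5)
Your proposal is correct and follows essentially the paper's route. You apply bilinearity of covariance to $LE_{\CDR}=La_{\CDR}+LX_{\mathrm{feat}}\,\delta\theta^{(B)}+L\varepsilon_{\mathrm{tar}}$, read off the rows $(\phi_i-\phi_0)^\top$ of $LX_{\mathrm{feat}}$, and multiply by $B$ before taking the shot-scaling limit. Your bound $\Sigma_{\CDR,\theta}(i,i)\le\norm{V_\theta^{\mathrm{unit}}}_{\op}\norm{\phi_i-\phi_0}_2^2$ usefully makes the final clause quantitative.

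One correction to your side remark. Literally conditioning on the fitted map makes $\delta\theta^{(B)}$ deterministic, so it removes the training block $V_\theta^{(B)}$ itself, not only the cross terms; $LX_{\mathrm{feat}}\,\delta\theta^{(B)}$ then becomes a conditional bias. A nonzero $\Sigma_{\CDR,\theta}$ therefore needs the unconditional or delta-method reading, where the cross terms vanish only when the training and target records are independent.
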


\begin{proof}
The training-induced residual at $x_i$ is $(\delta\theta^{(B)})^\top\phi_i$.  Its gap is $(\delta\theta^{(B)})^\top(\phi_i-\phi_0)$.  Taking finite-budget covariance gives $\Sigma_{\CDR,\theta}^{(B)}$; multiplying by $B$ and taking the shot-scaling limit gives the per-unit formula.  Remaining noise terms add by covariance decomposition, including the two symmetric cross-covariance terms when training and target data are statistically coupled, while deterministic transfer error enters through $La_{\CDR}$.
\end{proof}

\begin{definition}[Critical-band CDR cancellation condition]
For a critical band $\calC$, let $L_{\calC}X_{\mathrm{feat}}$ be the matrix with rows $\phi_i-\phi_0$, $i\in\calC$.  CDR training uncertainty is $\varepsilon$-decision-invisible on $\calC$ if
\begin{equation}
\label{eq:cdr_operator_invisible}
    \norm{L_{\calC}X_{\mathrm{feat}} V_\theta^{\mathrm{unit}}X_{\mathrm{feat}}^\top L_{\calC}^\top}_{\op}
    \le
    \varepsilon^2
    \norm{L_{\calC}K_{\mathrm{tar}}^{\mathrm{unit}}L_{\calC}^\top}_{\op}.
\end{equation}
It is pointwise-large but decision-small when $\norm{X_{\mathrm{feat}} V_\theta^{\mathrm{unit}}X_{\mathrm{feat}}^\top}_{\op}$ is large while the left-hand side of Eq.~\eqref{eq:cdr_operator_invisible} is small.  Its transfer bias is $\kappa$-controlled if
\begin{equation}
    |(La_{\CDR})_i|\le \kappa\Delta_i,
    \qquad i\in\calC.
\end{equation}
\end{definition}

\subsection{PEC: unbiasedness and quasiprobability overhead in gap space}

PEC cancels the modeled noise by sampling from a signed quasiprobability decomposition \cite{vandenBerg2023}.  In the ideal scalar-overhead PEC model with correct noise characterization,
\begin{equation}
    a_{\PEC}=0,
\end{equation}
while the variance is multiplied by an overhead related to the quasiprobability one-norm.

\begin{theorem}[PEC decision-kernel signature in the scalar-overhead model]
\label{thm:pec_signature}
Assume an ideal scalar-overhead PEC estimator is unbiased and its residual covariance has the form
\begin{equation}
    K_{\PEC}=\Gamma^2 K_{\mathrm{idealMC}}
\end{equation}
for overhead factor $\Gamma\ge1$ relative to a baseline Monte Carlo covariance.  Then
\begin{equation}
    \mu_{\PEC}=\Delta,
    \qquad
    \Sigma_{\PEC}=\Gamma^2 L K_{\mathrm{idealMC}}L^\top.
\end{equation}
If another unbiased estimator has the same covariance shape but no overhead, then
\begin{equation}
    I_{\PEC}=\Gamma^{-2} I_{\mathrm{unbiased}}.
\end{equation}
\end{theorem}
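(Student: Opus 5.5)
The plan is to follow the template of Theorems~\ref{thm:zne_signature} and~\ref{thm:cdr_signature}: read off the effective margin and the gap kernel directly from the assumed bias and covariance of the PEC residual, and then insert them into the exponent formula of Theorem~\ref{thm:decision_exponent}.

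First, unbiasedness gives $a_{\PEC}=0$. Hence $La_{\PEC}=0$, and the effective margin $\mu_{\PEC}=\Delta+La_{\PEC}$ reduces to $\Delta$. Next, the covariance assumption $K_{\PEC}=\Gamma^2K_{\mathrm{idealMC}}$ is pushed through the contrast map by bilinearity of congruence:
\[
\Sigma_{\PEC}=LK_{\PEC}L^\top=\Gamma^2LK_{\mathrm{idealMC}}L^\top .
\]
If $K_{\mathrm{idealMC}}$ is read as a finite-budget covariance, the same identity holds at each budget $B$. Multiplying by $B$ and passing to the shot-scaling limit then gives the per-unit statement, exactly as in the ZNE proof.

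For the exponent relation, we read ``another unbiased estimator with the same covariance shape but no overhead'' as having margin $\mu=\Delta$ and per-unit kernel $\Sigma_{\mathrm{unb}}=LK_{\mathrm{idealMC}}L^\top$, so that $\Sigma_{\PEC}=\Gamma^2\Sigma_{\mathrm{unb}}$. The key observation is a scaling identity for the candidate exponent~\eqref{eq:decision_exponent}:
\[
I_{\mathrm{cand}}(\mu,\Gamma^2\Sigma;\calS)=\Gamma^{-2}I_{\mathrm{cand}}(\mu,\Sigma;\calS).
\]
This holds for three reasons:
\begin{itemize}
\item $\operatorname{range}(\Gamma^2\Sigma)=\operatorname{range}(\Sigma)$ because $\Gamma\ge1>0$, so the feasible set $\calF_\mu\cap\calR_\Sigma$ is unchanged;
\item $(\Gamma^2\Sigma)^\dagger=\Gamma^{-2}\Sigma^\dagger$;
\item the infimum therefore scales by $\Gamma^{-2}$.
\end{itemize}
The identity covers the case where both infima are $+\infty$ (the convention is preserved) and the case of singular kernels.

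The remaining step is to upgrade this from candidate exponents to actual exponents. Relative closures and interiors in $\calR_\Sigma$ are unchanged by the scaling, so $J$-regularity of $\calF_\Delta$ under $\Sigma$ is equivalent to $J$-regularity under $\Gamma^2\Sigma$. Theorem~\ref{thm:decision_exponent} then identifies both exponents with their candidate values. For the basic argmin this is explicit from~\eqref{eq:argmin_exponent}:
\[
\min_i\frac{\Delta_i^2}{2\Gamma^2\Sigma(i,i)}=\Gamma^{-2}\min_i\frac{\Delta_i^2}{2\Sigma(i,i)}.
\]

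I expect the main obstacle to be interpretive rather than computational: the statement leaves the decision and the regularity hypothesis implicit. I would state the relation for any fixed success set $\calS$ under which the unbiased comparator's exponent exists. Alternatively, without any regularity assumption, the same scaling applies to the liminf and limsup bounds of Theorem~\ref{thm:decision_exponent}, because every bound there scales by the same factor $\Gamma^{-2}$.
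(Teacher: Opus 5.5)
Your proposal is correct and follows the paper's own argument: unbiasedness gives $a_{\PEC}=0$, so $\mu_{\PEC}=\Delta$; congruence by $L$ gives the kernel; and the exponent relation follows because scaling the covariance by $\Gamma^2$ scales the Gaussian rate function by $\Gamma^{-2}$. The paper states this last step in a single line, and your checks that $\calR_\Sigma$, the pseudoinverse, and $J$-regularity are unchanged (or rescale correctly) under $\Sigma\mapsto\Gamma^2\Sigma$ supply exactly the details that line leaves implicit.
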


\begin{proof}
Unbiasedness gives $a_{\PEC}=0$.  Applying $L$ to the covariance gives the decision kernel.  Scaling a covariance by $\Gamma^2$ scales the Gaussian rate function by $\Gamma^{-2}$.
\end{proof}

More generally, PEC should be represented by its actual pullback covariance $K_{\PEC}$ induced by its quasiprobability branches, circuit-dependent weights, observable-dependent variance, clipping or truncation rules, and noise-model mismatch.  The scalar $\Gamma^2$ formula is the special case where PEC preserves the covariance shape up to a global overhead.  In realistic PEC implementations the overhead may be circuit- or observable-dependent, so the decision kernel must be computed as $\Sigma_{\PEC}=L K_{\PEC}L^\top$ rather than assumed to be a scalar multiple of an ideal Monte Carlo kernel.

\begin{corollary}[Unified QEM decision comparison]
In Gaussian large-shot regimes, ZNE, CDR, and PEC can be compared in one gap-space geometry by their triples
\begin{equation}
    (La_m,\; L K_mL^\top,\; I_m).
\end{equation}
A method that is best in ambient MSE need not be best in any of these Gaussian decision-space quantities.
\end{corollary}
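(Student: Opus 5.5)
The statement to prove is the final Corollary (Unified QEM decision comparison).  It has two parts: the three methods can be placed in one gap-space geometry through their triples $(La_m,LK_mL^\top,I_m)$, and the MSE-best method need not be best in any of these Gaussian decision quantities.  I would treat the first part as a bookkeeping consequence of the preceding signature theorems and the second as an explicit-witness claim built on Theorem~\ref{thm:nogo_marginal} and Proposition~\ref{prop:qem_realizable_nogo}.

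\textbf{Step 1: the common triple.}  For each method $m\in\{\ZNE,\CDR,\PEC\}$ I read off the residual bias and covariance from its signature theorem:
\begin{itemize}
\item Theorem~\ref{thm:zne_signature} gives $a_{\ZNE}=\sum_\ell c_\ell b_\ell$ and the Richardson-weighted covariance.
\item Theorem~\ref{thm:cdr_signature} gives $La_{\CDR}$ and $\Sigma_{\CDR}$ as training plus target plus cross terms.
\item Theorem~\ref{thm:pec_signature} gives $a_{\PEC}=0$ and $\Sigma_{\PEC}=\Gamma^2LK_{\mathrm{idealMC}}L^\top$.
\end{itemize}
By Theorem~\ref{thm:qem_pullback}, every one of these is a pullback $(LM_m)C_{\mathrm{dev}}(LM_m)^\top$ of the same primitive covariance (at leading order for the linearized methods).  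So each $\Sigma_m$ lives in the same space of $(n-1)\times(n-1)$ PSD matrices over $\calQ$.  Setting $\mu_m=\Delta+La_m$, Theorem~\ref{thm:decision_exponent} (argmin formula~\eqref{eq:argmin_exponent}, or the active-face formula of Theorem~\ref{thm:diagonal_offdiagonal} for other polyhedral decisions) assigns each method a well-defined exponent $I_m$, provided its margins are positive.  Theorem~\ref{thm:minimal_gap_law}, together with Proposition~\ref{prop:gaussian_risk}, shows that in the Gaussian regime $(\mu_m,\Sigma_m)$ determines every gap-decision risk.  The comparison is therefore well posed and uses no method-specific coordinates.

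\textbf{Step 2: MSE-best need not be decision-best.}  I would exhibit configurations in which the lower-MSE method is strictly worse in each coordinate of the triple.
\begin{itemize}
\item \emph{Kernel and exponent coordinates.}  Proposition~\ref{prop:qem_realizable_nogo} already provides linear pullback maps, with zero bias and shared $C_{\mathrm{dev}}$, such that $\MSE_A<\MSE_B$ but $LK_AL^\top=2\sigma^2>\varepsilon^2=LK_BL^\top$.  The exponent then satisfies $I_A<I_B$ by Theorem~\ref{thm:nogo_marginal}(c).
\item \emph{Margin coordinate.}  Common-mode gauge invariance shows that adding a bias $c\1$ raises MSE without changing $La_m$.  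Conversely, a small differential bias can shrink a critical margin $\mu_{m,i}$ while the ambient $\norm{a_m}^2$ stays below the competitor's large common-mode bias.
\end{itemize}
To tie these to the named methods, I can instantiate the witnesses as follows.  PEC has $\Gamma$ large, hence worse $\Sigma$ and $I$, yet lower MSE than a biased Raw baseline with large common-mode bias.  ZNE is covered by the decision-help separation corollary.

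\textbf{Main obstacle.}  The hard part is the word "any": I need one witness in which the MSE-best method is worse simultaneously in margin, kernel (Loewner or critical diagonal), and exponent.  I would try a two-point landscape combining both mechanisms, in which the MSE-best method has a differential bias reducing $\mu$ and an amplified gap variance, while its competitor carries a large common-mode bias.  I would then verify the three inequalities directly.  Each is a scalar inequality in the two-point case, so feasibility reduces to choosing the common-mode variance large enough.
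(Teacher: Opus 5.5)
Your proposal is correct and follows the same route the paper relies on. The paper states this corollary without a separate proof: the triples are read off from the ZNE/CDR/PEC signature theorems, and the MSE--decision inversion comes from Theorem~\ref{thm:nogo_marginal} and the linear-pullback witness of Proposition~\ref{prop:qem_realizable_nogo}. Your single witness that is worse in all three coordinates at once uses a stronger reading than the paper needs, but it goes through, with one fix. Because the MSE-best method there carries a nonzero differential bias, its MSE has a floor that does not shrink with budget. You should therefore inflate the competitor's MSE with a common-mode bias $c\1$, which leaves $La_m$ unchanged, rather than with shot-scaling common-mode variance: that variance contributes only $O(1/B)$ and cannot maintain the MSE ordering as $B$ grows.
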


\subsection{Decision phase diagrams for QEM method selection}

For a one-gap decision with per-unit variance $\Sigma_m>0$, the large-shot exponent is sign-sensitive:
\begin{equation}
\label{eq:one_gap_signed_exponent}
    I_m^{(1)}=
    \begin{cases}
    \dfrac{\mu_m^2}{2\Sigma_m}, & \mu_m>0,\\[0.8em]
    0, & \mu_m\le 0.
    \end{cases}
\end{equation}
If $\mu_m\le0$, the method is not decision-consistent for that gap in the large-shot limit; squaring the margin would hide the sign error.  Therefore, for two methods $A$ and $B$ with one active critical gap, the usual phase boundary
\begin{equation}
\label{eq:single_gap_phase_boundary}
    \frac{\mu_{A}^2}{\Sigma_A}
    =
    \frac{\mu_{B}^2}{\Sigma_B}
\end{equation}
is valid only in the region $\mu_A>0$ and $\mu_B>0$.  Inside that region, method $A$ has the better decision exponent exactly on the side where $\mu_A^2/\Sigma_A>\mu_B^2/\Sigma_B$.

This boundary is generally different from an ambient finite-budget MSE boundary.  At fixed shot budget $B$, an ambient MSE comparison uses
\begin{equation}
    \frac1n\norm{a_A}_2^2+\frac1n\tr K_A^{(B)}
    =
    \frac1n\norm{a_B}_2^2+\frac1n\tr K_B^{(B)},
\end{equation}
whereas a one-gap decision-space MSE comparison would use
\begin{equation}
    (La_A)_i^2+(LK_A^{(B)}L^\top)_{ii}
    =
    (La_B)_i^2+(LK_B^{(B)}L^\top)_{ii}.
\end{equation}
Neither expression is the same object as the large-shot exponent boundary in Eq.~\eqref{eq:single_gap_phase_boundary}.

For PEC versus biased CDR on one active gap $i$, assume $\Delta>0$.  In the ideal scalar-overhead PEC model, PEC has $\mu_{\PEC}=\Delta$ and $\Sigma_{\PEC}=\Gamma^2\Sigma_{\mathrm{id}}$, where $\Sigma_{\mathrm{id}}:=(L K_{\mathrm{idealMC}}L^\top)_{ii}$ is the ideal Monte Carlo variance of that active gap.  CDR has $\mu_{\CDR}=\Delta+\delta_{\CDR}$ and covariance $\Sigma_{\CDR}$.  If $\Delta+\delta_{\CDR}\le0$, then CDR is not decision-consistent on this gap in the large-shot limit.  In the positive-margin region $\Delta+\delta_{\CDR}>0$, PEC beats CDR in asymptotic decision reliability if and only if
\begin{equation}
\label{eq:pec_cdr_boundary}
    \frac{\Delta^2}{\Gamma^2\Sigma_{\mathrm{id}}}
    >
    \frac{(\Delta+\delta_{\CDR})^2}{\Sigma_{\CDR}}.
\end{equation}
CDR beats PEC when the inequality is reversed.  Thus PEC's unbiasedness is not automatically decision-optimal: quasiprobability overhead can dominate small CDR bias on finite-shot decision tasks, while excessive CDR transfer bias can destroy decision consistency outright.

\begin{figure}[t]
    \centering
    \begin{tikzpicture}[>=Latex, x=1.48cm, y=1.05cm]
        \fill[red!12] (-1.55,0) rectangle (-1,3.05);
        \fill[orange!14] (-1,0) rectangle (2.15,3.05);
        \fill[blue!14] (-1,0) rectangle (-0.42,3.05);
        \fill[blue!14, domain=-0.42:2.05, samples=120]
            (-0.42,0) -- plot (\x,{1/(1+\x)^2}) -- (2.05,0) -- cycle;
        \fill[pattern=north east lines, pattern color=gray!24] (-1,0) rectangle (2.05,1);
        \draw[->] (-1.6,0) -- (2.25,0) node[right] {$b=\delta_{\CDR}/\Delta$};
        \draw[->] (-1.55,0) -- (-1.55,3.18) node[above] {$\Gamma^2$};
        \draw[dashed, red!70!black] (-1,0) -- (-1,3.05);
        \node[red!70!black, rotate=90, font=\tiny] at (-0.93,1.52) {$b=-1$};
        \draw[dashed, gray!55!black, line width=0.85pt] (-1,1) -- (2.05,1)
            node[pos=0.78, above=2pt, font=\scriptsize, text=gray!55!black] {physical floor $\Gamma^2\ge1$};
        \draw[blue!65!black, line width=1.1pt, domain=-0.42:2.05, samples=120]
            plot (\x,{1/(1+\x)^2});
        \node[blue!65!black, font=\scriptsize] at (1.03,0.52)
            {$\Gamma^2=\dfrac{1}{(1+b)^2}$};
        \node[align=left, font=\tiny, red!70!black, anchor=west] at (-1.47,2.68)
            {CDR\\inconsistent\\PEC wins};
        \node[align=center, font=\scriptsize, blue!70!black] at (-0.33,0.36)
            {PEC wins\\overhead small};
        \node[align=center, font=\small, orange!80!black] at (1.24,2.33)
            {CDR wins\\overhead dominates};
        \draw (-1,0.06) -- (-1,-0.06) node[below] {$-1$};
        \draw (0,0.06) -- (0,-0.06) node[below] {$0$};
        \draw (1,0.06) -- (1,-0.06) node[below] {$1$};
        \draw (-1.61,1) -- (-1.49,1) node[left=3pt] {$1$};
        \draw (-1.61,2) -- (-1.49,2) node[left=3pt] {$2$};
        \draw (-1.61,3) -- (-1.49,3) node[left=3pt] {$3$};
    \end{tikzpicture}
    \caption{Analytic schematic phase boundary for Eq.~\eqref{eq:pec_cdr_boundary}, with representative parameter $\rho=\Sigma_{\CDR}/\Sigma_{\rm id}=1$.  PEC wins by default when $b\le -1$ because CDR is decision-inconsistent.  For $b>-1$, the boundary is $\Gamma^2=\rho/(1+b)^2$: PEC is decision-optimal only below the curve, when quasiprobability overhead is small enough or CDR bias is large enough.  Because quasiprobability overhead satisfies $\Gamma^2\ge1$, only the region above the horizontal floor is physically attainable; in particular, for $\rho=1$ and $b>0$, PEC cannot win because its formal winning region lies below $\Gamma^2=1$.  This is an analytic schematic, not a data-driven phase diagram.}
    \label{fig:pec_cdr_phase}
\end{figure}

A restricted Slepian comparison can still be useful as a safe partial order for equal-threshold Gaussian failures, but it is intentionally narrow and is recorded in Appendix~\ref{app:slepian_safe} rather than used as a main structural result.

\section{Fixed-allocation shot-level converse}

The shot-level converse is not the main decision-representation result.  Its role is to show that the gap-space obstruction is not merely an artifact of linear or Gaussian estimators: even arbitrary fixed-allocation post-processing of the shot record faces a local decision lower bound.

The preceding Gaussian exponent analysis was stated for a concrete plug-in rule:
estimate the gaps and choose the apparent argmin.  A natural concern is whether
another decision rule, possibly nonlinear and not expressible as a plug-in gap
estimator, could beat this exponent.  We first prove a converse directly on the
finite shot record for a fixed primitive set and a non-adaptive allocation.  The
result applies to every post-processing of that record.  The proofs and the local
Fisher/LAN form are placed in Appendix~\ref{app:converse_proofs}; the main text keeps the equal-covariance Gaussian two-point converse.  Thus the
claim is a fixed-allocation, local-Gaussian, loss-of-optimality converse, not a
measurement-independent quantum-Fisher converse, not an adaptive fixed-confidence
theory, and not a universal fixed-budget instance-optimality theorem.

\subsection{Shot-level converse for fixed allocation and arbitrary post-processing}

Fix primitive settings $s\in\mathcal T$ and a non-adaptive shot allocation
$\{B_s\}_{s\in\mathcal T}$ with $\sum_{s\in\mathcal T}B_s=B$.  Let $\Theta_{\mathrm{phys}}$ denote
the model class over which the primitive per-shot laws $p_{s,\theta}$ are defined.
For an ideal-landscape instance $\theta\in\Theta_{\mathrm{phys}}$, let
\begin{equation}
    g_i(\theta)=F_\theta(x_i)-F_\theta(x_0),
    \qquad i=1,\ldots,d,
\end{equation}
be the decision-relevant target gap map, and let
\begin{equation}
    P^{(B)}_\theta=\bigotimes_{s\in\mathcal T}p_{s,\theta}^{\otimes B_s}
\end{equation}
be the law of the full shot record.  A deterministic decision rule $\delta_B$ is any
measurable map from this record to $\{0,1,\ldots,d\}$.  A randomized decision
rule is understood as a Markov kernel from the shot record to
$\{0,1,\ldots,d\}$, or equivalently as a measurable map of the record together
with an auxiliary seed $U$ that is independent of the shot data and has the same
law under every instance.

\begin{theorem}[Fixed-allocation shot-level decision converse]
\label{thm:shot_level_converse}
Assume the true instance $\nu\in\Theta_{\mathrm{phys}}$ has unique argmin $x_0$,
so $g_i(\nu)>0$ for every competitor $i$.  For each competitor $i$, define the
loss-of-optimality confuser class
\begin{equation}
    \Theta_i^{\mathrm{loss}}
    =\{\theta'\in\Theta_{\mathrm{phys}}:\, g_i(\theta')\le0\}.
\end{equation}
Thus the phrase ``physically realizable'' means realizable within the primitive
model class $\Theta_{\mathrm{phys}}$, rather than within an unspecified mitigation
protocol.  Competitors with empty confuser class are omitted from the supremum
and minimum below; if all confuser classes are empty, the statement is vacuous.
Then for every decision rule $\delta_B$,
\begin{equation}
\label{eq:shot_converse}
    \sup_i\ \sup_{\theta'\in\Theta_i^{\mathrm{loss}}}
    \max\!\left\{
    \Prob_{\nu}[\delta_B\ne0],
    \Prob_{\theta'}[\delta_B=0]
    \right\}
    \ge
    \frac14\exp\!\left[
    -\min_i\inf_{\theta'\in\Theta_i^{\mathrm{loss}}}
    \KL\!\left(P^{(B)}_\nu\,\middle\|\,P^{(B)}_{\theta'}\right)
    \right].
\end{equation}
If $B_s/B\to w_s$ with $w_s\ge0$ and $\sum_{s\in\mathcal T}w_s=1$, no rule using this
fixed-allocation shot record can have a worst-pair loss-of-optimality exponent
larger than
\begin{equation}
\label{eq:shot_exponent}
    I_{\mathrm{shot}}^\star(w)
    =\min_i\inf_{\theta'\in\Theta_i^{\mathrm{loss}}}
    \sum_{s\in\mathcal T}w_s\,
    \KL\!\left(p_{s,\nu}\,\middle\|\,p_{s,\theta'}\right).
\end{equation}
Settings with $w_s=0$ are unobserved at exponential scale and should be removed
from the identifiable primitive experiment; equivalently, one may restrict
$\mathcal T$ to settings with $w_s>0$.
\end{theorem}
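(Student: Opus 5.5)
The plan is to reduce to a two-point hypothesis test and apply the Bretagnolle--Huber inequality. Fix a competitor $i$ with nonempty confuser class and a confuser $\theta'\in\Theta_i^{\mathrm{loss}}$. Given any (possibly randomized) rule $\delta_B$, define the event $A=\{\delta_B\ne0\}$ on the enlarged space of record plus seed $U$. Since $U$ has the same law under every instance, the joint laws are $P^{(B)}_\nu\otimes P_U$ and $P^{(B)}_{\theta'}\otimes P_U$. Their KL divergence equals $\KL(P^{(B)}_\nu\|P^{(B)}_{\theta'})$ by the chain rule, so randomization costs nothing. Bretagnolle--Huber then gives
\[
\Prob_\nu[A]+\Prob_{\theta'}[A^c]\ \ge\ \tfrac12\exp\!\bigl[-\KL(P^{(B)}_\nu\|P^{(B)}_{\theta'})\bigr],
\]
and bounding the sum by twice the maximum yields
\[
\max\{\Prob_\nu[\delta_B\ne0],\Prob_{\theta'}[\delta_B=0]\}\ \ge\ \tfrac14\exp[-\KL].
\]
If the KL is $+\infty$ the bound is trivially true.

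The second step passes to the supremum. The left side of the inequality for each pair is dominated by $\sup_i\sup_{\theta'}$ of the maximum. The right side can then be optimized over the pair: taking a sequence $\theta'_k$ and a competitor $i$ that drive $\KL$ to $\min_i\inf_{\theta'}\KL$ gives Eq.~\eqref{eq:shot_converse}, using continuity of $\exp$ and the fact that a supremum dominates every term. The event $\delta_B=0$ under $\theta'$ is a genuine loss-of-optimality error, because $g_i(\theta')\le0$ means $x_0$ is not the unique argmin there. This is the only place the confuser class is used, and it justifies the interpretation rather than the inequality itself.

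The exponent statement follows from tensorization. Because the record is a product measure, we have
\[
\KL\bigl(P^{(B)}_\nu\|P^{(B)}_{\theta'}\bigr)=\sum_s B_s\,\KL(p_{s,\nu}\|p_{s,\theta'}),
\]
and hence
\[
-\tfrac1B\log(\text{worst-pair error})\ \le\ \tfrac{\log 4}{B}+\inf_{i,\theta'}\sum_s\tfrac{B_s}{B}\,\KL(p_{s,\nu}\|p_{s,\theta'}).
\]
Taking $\limsup_{B\to\infty}$ requires exchanging the limit with the infimum. For the upper bound this is easy: for any fixed $(i,\theta')$,
\[
\limsup_B\ \inf_{i',\theta''}\sum_s\tfrac{B_s}{B}\,\KL(p_{s,\nu}\|p_{s,\theta''})\ \le\ \sum_s w_s\,\KL(p_{s,\nu}\|p_{s,\theta'}),
\]
provided each per-setting KL is finite, and then we take the infimum over $(i,\theta')$.

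I expect the main obstacle to be this limit exchange together with the bookkeeping for settings with $w_s=0$. A setting whose share vanishes but whose per-shot KL is infinite, or grows, would make $B_s\,\KL$ non-negligible. I would handle this as the statement suggests, by restricting $\mathcal T$ to settings with $w_s>0$, or equivalently by assuming finite per-setting KL for the confusers under consideration. Pairs with infinite divergence only enlarge the infimum and do not affect the bound.
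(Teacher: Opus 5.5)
Your proposal is correct and follows essentially the same route as the paper: a two-point reduction with a common auxiliary seed, Bretagnolle--Huber, and KL tensorization over the product shot record. The only differences are minor. You apply Bretagnolle--Huber directly in its sum-of-errors form, whereas the paper passes through $1-\TV$. Your explicit $\limsup$ exchange and your handling of the $w_s=0$ settings are also somewhat more careful than the paper, which simply says that optimizing over confusers and competitors gives the theorem.
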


\subsection{Gaussian two-point corollary and plug-in matching}

Let the observed estimated margin vector in the local Gaussian location model be
\begin{equation}
\label{eq:fixed_budget_gap_obs}
    \widehat\Delta(B)\sim \calN(\mu,\Sigma/B),
    \qquad
    \mu_i>0,
\end{equation}
where coordinate $i$ is the gap between the reference point $x_0$ and competitor
$x_i$.  Thus $x_0$ is the unique argmin under the instance $\nu=(\mu,\Sigma)$.
This exact form assumes that the two local instances share the same covariance
$\Sigma/B$.  If the primitive covariance also changes under the alternative, the
Gaussian KL contains both mean and covariance terms; the diagonal expression below
is then the leading local location term.  A fixed-budget rule is any measurable
map $\delta_B:\R^d\to\{0,1,\ldots,d\}$, where output $0$ means selecting $x_0$ and
output $i$ means selecting competitor $x_i$.

\begin{corollary}[Two-point Le Cam converse for Gaussian loss of optimality]
\label{cor:twopoint_argmin_converse}
Assume first that $\Sigma\succ0$.  Fix a competitor $i$ and consider the
two-instance class $\{\nu,\nu^{(i)}\}$, where $\nu$ has mean $\mu$ and unique
argmin $x_0$, while $\nu^{(i)}$ has the same covariance $\Sigma/B$ and mean
$m^{(i)}$ whose $i$th margin crosses the decision boundary, so that $x_0$ is no
longer a strictly certified optimum.  Then every fixed-budget rule $\delta_B$
satisfies
\begin{equation}
\label{eq:twopoint_pair_bound}
    \max\left\{
    \Prob_\nu[\delta_B(\widehat\Delta(B))\ne 0],
    \Prob_{\nu^{(i)}}[\delta_B(\widehat\Delta(B))=0]
    \right\}
    \ge
    \frac14
    \exp\left[-\frac{B}{2}
    (\mu-m^{(i)})^\top\Sigma^{-1}(\mu-m^{(i)})\right].
\end{equation}
Optimizing over all such alternatives for which competitor $i$ ties or beats
$x_0$ gives
\begin{equation}
\label{eq:closest_single_competitor}
    \inf_{m^{(i)}:\ e_i^\top m^{(i)}\le0}
    (\mu-m^{(i)})^\top\Sigma^{-1}(\mu-m^{(i)})
    =
    \frac{\mu_i^2}{\Sigma_{ii}}.
\end{equation}
Consequently, for every fixed-budget rule,
\begin{equation}
\label{eq:twopoint_minimax_argmin}
    \sup_i
    \max\left\{
    \Prob_\nu[\delta_B\ne 0],
    \Prob_{\nu^{(i)}}[\delta_B=0]
    \right\}
    \ge
    \frac14
    \exp\left[-B\min_i\frac{\mu_i^2}{2\Sigma_{ii}}\right],
\end{equation}
where each $\nu^{(i)}$ uses the closest single-competitor loss-of-optimality
confuser for coordinate $i$.  Equivalently, no fixed-budget rule can have a
local two-point loss-of-optimality exponent larger than
\begin{equation}
\label{eq:fixed_budget_argmin_star}
    I_{\argmin}^\star(\mu,\Sigma)
    =
    \min_i\frac{\mu_i^2}{2\Sigma_{ii}}.
\end{equation}
If $\Sigma$ is singular, the same statement holds on $\operatorname{range}(\Sigma)$
with $\Sigma^\dagger$.  For a positive semidefinite $\Sigma$, $\Sigma_{ii}=0$
implies $\Sigma e_i=0$, so coordinate $i$ is noiseless and cannot be moved by an
absolutely continuous Gaussian location shift within $\operatorname{range}(\Sigma)$;
it therefore has infinite information distance and does not define the limiting
confuser.
\end{corollary}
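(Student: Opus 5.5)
The proof reduces everything to a two-point testing problem and the Bretagnolle--Huber inequality, exactly as in the shot-level converse of Theorem~\ref{thm:shot_level_converse}, specialised to a Gaussian location experiment.

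\emph{Step 1 (two-point bound).} Fix $i$ and any alternative mean $m^{(i)}$. Let $P=\calN(\mu,\Sigma/B)$ and $Q=\calN(m^{(i)},\Sigma/B)$. Take the event $A=\{\delta_B(\widehat\Delta(B))\ne0\}$. The Bretagnolle--Huber inequality gives
\[
P(A)+Q(A^c)\ge\tfrac12\exp[-\KL(P\|Q)].
\]
Since the maximum of two numbers is at least half their sum,
\[
\max\{P(A),Q(A^c)\}\ge\tfrac14\exp[-\KL(P\|Q)].
\]
For equal covariances, $\KL(P\|Q)=\tfrac{B}{2}(\mu-m^{(i)})^\top\Sigma^{-1}(\mu-m^{(i)})$, which yields Eq.~\eqref{eq:twopoint_pair_bound}. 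Randomized rules are handled in the same way: apply the inequality to the joint law of the data and an independent seed, which leaves the KL divergence unchanged, or apply Bretagnolle--Huber to the total-variation form with acceptance probability $\phi\in[0,1]$.

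\emph{Step 2 (closest confuser).} We minimise $(\mu-m)^\top\Sigma^{-1}(\mu-m)$ subject to $e_i^\top m\le0$. Write $y=m-\mu$, so the constraint becomes $e_i^\top y\le-\mu_i$ with $\mu_i>0$. This is the projection of the origin onto a halfspace in the $\Sigma^{-1}$ metric. The Lagrange conditions give $y=-t\Sigma e_i$ with $t=\mu_i/\Sigma_{ii}$. The minimum value is therefore $\mu_i^2/\Sigma_{ii}$, which is Eq.~\eqref{eq:closest_single_competitor}. Equivalently, Cauchy--Schwarz gives
\[
\mu_i^2\le (e_i^\top y)^2\le (e_i^\top\Sigma e_i)\,(y^\top\Sigma^{-1}y),
\]
with equality at the stated $y$. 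The infimum is attained at $e_i^\top m=0$, a tie, which is admissible because ties count against $x_0$.

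\emph{Step 3 (supremum and exponent).} Take $\nu^{(i)}$ to be this minimiser for each $i$. Then the supremum over $i$ of the pair maxima is at least the pair maximum for the index $i^\ast$ that attains $\min_i\mu_i^2/(2\Sigma_{ii})$, which gives Eq.~\eqref{eq:twopoint_minimax_argmin}. Taking $-\frac1B\log$ and $\liminf$ shows that no rule can have a worst-pair exponent exceeding Eq.~\eqref{eq:fixed_budget_argmin_star}; the constant $1/4$ disappears at exponential scale.

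\emph{Step 4 (singular case).} This is the main obstacle. If $\mu-m\notin\operatorname{range}(\Sigma)$, then $P$ and $Q$ are mutually singular, the KL divergence is $+\infty$, and the bound is vacuous. So the confusers must be restricted to $m=\mu+\Sigma v$, and the KL divergence becomes $\tfrac B2 y^\top\Sigma^\dagger y$. The Step~2 minimiser $y=-t\Sigma e_i$ already lies in $\operatorname{range}(\Sigma)$, and Cauchy--Schwarz in the $\Sigma^\dagger$ geometry gives $(e_i^\top y)^2\le\Sigma_{ii}\,y^\top\Sigma^\dagger y$, using $\Sigma\Sigma^\dagger\Sigma=\Sigma$. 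The same formula therefore holds whenever $\Sigma_{ii}>0$.

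If $\Sigma_{ii}=0$, positive semidefiniteness gives $\Sigma e_i=0$: for any $v$, $(e_i+sv)^\top\Sigma(e_i+sv)\ge0$ for all $s$ forces $e_i^\top\Sigma v=0$. Then $e_i^\top y=0$ for every $y$ in the range, so no absolutely continuous confuser can move the $i$th coordinate. That coordinate has infinite information distance and is dropped from the minimum, as the statement says.
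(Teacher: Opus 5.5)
Your proposal is correct and follows essentially the same route as the paper. Both arguments use a two-point Le Cam/Bretagnolle--Huber bound for the equal-covariance Gaussian pair, a Lagrange-multiplier computation of the closest confuser $m^{(i),\star}=\mu-(\mu_i/\Sigma_{ii})\Sigma e_i$ with value $\mu_i^2/\Sigma_{ii}$, and restriction to $\operatorname{range}(\Sigma)$ with $\Sigma^\dagger$ in the singular case. The differences are cosmetic: you apply Bretagnolle--Huber directly to the event instead of passing through $1-\TV$, you add a Cauchy--Schwarz optimality certificate and an explicit proof that $\Sigma_{ii}=0$ forces $\Sigma e_i=0$, and the paper additionally notes that the boundary tie can be replaced by a strict $-\epsilon$ crossing with the same exponent.
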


\begin{corollary}[Plug-in argmin matches the local exponent at the original instance]
\label{cor:plugin_argmin_optimal}
In the local equal-covariance Gaussian gap model \eqref{eq:fixed_budget_gap_obs} with nondegenerate positive margins, the plug-in rule
\begin{equation}
    \delta_B^{\mathrm{plug}}(\widehat\Delta(B))=0
    \quad\Longleftrightarrow\quad
    \widehat\Delta_i(B)>0\ \text{ for every }i
\end{equation}
matches the local loss-of-optimality exponent at the original instance:
\begin{equation}
\label{eq:plugin_matches_converse}
    \lim_{B\to\infty}
    -\frac1B\log
    \Prob_\nu[\delta_B^{\mathrm{plug}}\ne0]
    =
    I_{\argmin}^\star(\mu,\Sigma)
    =
    \min_i\frac{\mu_i^2}{2\Sigma_{ii}}.
\end{equation}
Thus, for simple argmin identification, the diagonal gap exponent is not only the large-deviation rate of one estimator; it matches the fixed-budget local-minimax two-point loss-of-optimality barrier at the original instance within the local equal-covariance Gaussian model.  This is not a pointwise universal optimality statement at the single instance $\nu$.
\end{corollary}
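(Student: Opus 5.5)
The claim is that the plug-in failure probability has exponent $\min_i\mu_i^2/(2\Sigma_{ii})$. The plan is to compute this exponent directly from the explicit Gaussian risk of Proposition~\ref{prop:gaussian_risk}, using a union bound from above and a single-coordinate bound from below. Equality with $I_{\argmin}^\star(\mu,\Sigma)$ is then immediate from Eq.~\eqref{eq:fixed_budget_argmin_star} in Corollary~\ref{cor:twopoint_argmin_converse}.

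Write $\widehat\Delta(B)=\mu+B^{-1/2}Z$ with $Z\sim\calN(0,\Sigma)$. By the definition of the plug-in rule,
\[
\{\delta_B^{\mathrm{plug}}\ne0\}=\bigcup_{i=1}^d\{Z_i\le-\sqrt B\,\mu_i\}.
\]
Nondegenerate margins mean $\mu_i>0$ and $\Sigma_{ii}>0$.

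\textbf{Lower bound on the failure probability.} For any fixed $i$, the failure probability is at least $\Prob[Z_i\le-\sqrt B\mu_i]=\Phi(-\sqrt B\,\mu_i/\sqrt{\Sigma_{ii}})$. The Mills-ratio bound $\Phi(-t)\ge \frac{t}{1+t^2}\varphi(t)$ gives
\[
-\tfrac1B\log\Phi(-\sqrt B\mu_i/\sqrt{\Sigma_{ii}})\to\mu_i^2/(2\Sigma_{ii}).
\]
Choosing $i$ to minimize this ratio yields $\limsup_B -\frac1B\log\Prob_\nu[\delta_B^{\mathrm{plug}}\ne0]\le\min_i\mu_i^2/(2\Sigma_{ii})$.

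\textbf{Upper bound on the failure probability.} The union bound gives
\[
\Prob_\nu[\delta_B^{\mathrm{plug}}\ne0]\le\sum_i\Phi(-\sqrt B\mu_i/\sqrt{\Sigma_{ii}})\le d\max_i\exp(-B\mu_i^2/(2\Sigma_{ii})),
\]
using $\Phi(-t)\le e^{-t^2/2}$. Since $d$ is fixed, $\frac1B\log d\to0$, so the $\liminf$ is at least the same minimum and the limit exists.

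Alternatively, one could cite the argmin case of Theorem~\ref{thm:decision_exponent}, Eq.~\eqref{eq:argmin_exponent}, which gives the same limit. I would still include the direct argument because it avoids any $J$-regularity checks.

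If $\Sigma$ is singular, coordinates with $\Sigma_{ii}=0$ are noiseless and, since $\mu_i>0$, never fail; they drop out of both bounds. This matches the singular-case convention in Corollary~\ref{cor:twopoint_argmin_converse}.

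The only real subtlety is interpretive, not computational. The matching is between the plug-in exponent at $\nu$ and the two-point barrier $I_{\argmin}^\star$. The barrier only controls the maximum over the pair $\{\nu,\nu^{(i)}\}$, so we conclude \emph{matching}, not pointwise optimality at $\nu$. That caveat is exactly what the final sentence of the statement records.
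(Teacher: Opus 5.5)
Your proof is correct. It has the same two-step skeleton as the paper's: first establish the plug-in failure exponent at $\nu$, then identify it with $I_{\argmin}^\star(\mu,\Sigma)$ via Eq.~\eqref{eq:fixed_budget_argmin_star} of Corollary~\ref{cor:twopoint_argmin_converse}. You also attach the same caveat, namely matching against a two-point barrier rather than pointwise optimality.

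The difference lies in the first step. The paper invokes its Gaussian exponent analysis. It cites Proposition~\ref{prop:gaussian_risk}, although that proposition only gives the exact orthant formula. The exponent itself is Eq.~\eqref{eq:argmin_exponent} of Theorem~\ref{thm:decision_exponent}, which rests on the Gaussian large-deviation principle and on the regularity supplied by Lemma~\ref{lem:jregularity}. You instead prove the exponent by hand, using the worst single coordinate for the lower bound and the union bound for the upper bound.

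Your route is elementary and in fact non-asymptotic. With $t_\star=\sqrt B\,\min_i\mu_i/\sqrt{\Sigma_{ii}}$, it gives
\[
\frac{t_\star}{1+t_\star^2}\,\varphi(t_\star)\;\le\;\Prob_\nu[\delta_B^{\mathrm{plug}}\ne0]\;\le\; d\,e^{-B\,I_{\argmin}^\star(\mu,\Sigma)},
\]
where $\varphi(t_\star)=(2\pi)^{-1/2}e^{-B I_{\argmin}^\star(\mu,\Sigma)}$. This yields an explicit $O(B^{-1}\log B)$ rate of convergence in Eq.~\eqref{eq:plugin_matches_converse}.

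The paper's route is shorter. It also reuses machinery that handles general polyhedral failure sets with multi-constraint active faces, where a single-coordinate lower bound would no longer be sharp. For the simple argmin union of coordinate halfspaces treated here, your direct argument is all that is needed.
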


The converse is deliberately local-minimax, not pointwise: no rule is lower-bounded at the single instance where a constant oracle-like rule would be correct.  The Gaussian corollaries also keep the covariance equal across the local alternatives; covariance-changing alternatives add additional KL terms.  The plug-in matching statement is not a paired achievability theorem and does not optimize over measurements or adaptive allocations.  Proofs, the local Fisher/LAN form, and extended remarks are in Appendix~\ref{app:converse_proofs}.

\section{Operational decision scores (estimation guarantees in Appendix~\ref{app:estimation_stability})}
\label{sec:estimating_decision_kernel}

A decision-aware benchmark is useful only if the relevant quantities can be estimated.  In practice one should not estimate the full landscape kernel unless necessary.  The theory points to a critical-band estimator.

Throughout this section, unless explicitly marked as finite-budget, $\Sigma_m$ denotes the per-unit decision kernel in the shot-scaling convention
\begin{equation}
    \Sigma_m^{(B)}=\Cov(L\xi_m^{(B)}),
    \qquad
    \Sigma_m=B\Sigma_m^{(B)}.
\end{equation}
Thus the budget-$B$ Gaussian gap model on a critical band is
\begin{equation}
    \widehat\Delta_m(B)=\mu_m+B^{-1/2}Z,
    \qquad
    Z\sim\calN(0,\Sigma_{m,\calC}).
\end{equation}
Scores based on $\Sigma_m$ are therefore per-unit scores; the real signal-to-noise at budget $B$ gains a factor $\sqrt B$.

\begin{definition}[Critical band]
For tolerance $\tau>0$, define the ideal critical band
\begin{equation}
    \calC_\tau=\{i:\ 0<\Delta_i\le\tau\}.
\end{equation}
For an estimated method-dependent band, using per-unit standard deviations, define
\begin{equation}
    \widehat{\calC}_m(t)=\left\{i:\ \frac{\widehat\mu_{m,i}}{\widehat s^{\mathrm{unit}}_{m,i}}\le t\right\},
    \qquad
    (\widehat s^{\mathrm{unit}}_{m,i})^2=\widehat\Sigma_m(i,i).
\end{equation}
Equivalently, at total budget $B$, the corresponding standardized ratio is $\sqrt B\,\widehat\mu_{m,i}/\widehat s^{\mathrm{unit}}_{m,i}$.  Coordinates with zero or numerically negligible estimated variance should be treated separately as deterministic gap constraints, or the denominator should be given a small pre-specified regularization when the score is used only as a numerical screening diagnostic.
\end{definition}

The critical band should not be chosen from the same data used to report the covariance and risk diagnostics unless selection is explicitly accounted for.  To avoid selection bias, one should fix $\calC$ from an independent pilot run, from the ideal/noiseless landscape, or by sample splitting before estimating $\Sigma_{m,\calC}$ and reporting decision-risk diagnostics.

The concrete construction of centered gap samples and their empirical covariance, together with finite-sample guarantees, is given in Appendix~\ref{app:estimation_stability}.

\begin{definition}[Operational decision scores]
On a critical band $\calC$, define the per-unit decision reliability index
\begin{equation}
    \DRI_m^{\mathrm{unit}}(\calC)=\min_{i\in\calC}\frac{\mu_{m,i}}{\sqrt{\Sigma_m(i,i)}}.
\end{equation}
At budget $B$, the corresponding Gaussian standardized margin is
\begin{equation}
    \DRI_m(B,\calC)=\sqrt B\,\DRI_m^{\mathrm{unit}}(\calC).
\end{equation}
The index is a signed reliability score.  If all margins on $\calC$ are positive, larger values indicate a larger worst-gap signal-to-noise ratio; if some $\mu_{m,i}\le0$, the score is nonpositive and should be read as failure of asymptotic consistency for strict argmin certification rather than as a positive reliability measure.
On a positive-margin critical band with nonzero gap variances, the corresponding diagonal Gaussian exponent restricted to $\calC$ is
\begin{equation}
    \min_{i\in\calC}\frac{\mu_{m,i}^2}{2\Sigma_m(i,i)}
    =
    \frac12\left(\DRI_m^{\mathrm{unit}}(\calC)\right)^2.
\end{equation}
Define the budget-aware Gaussian risk index
\begin{equation}
    \GRI_m(B,\calC)=
    \Prob\left[\exists i\in\calC:\ \mu_{m,i}+B^{-1/2}Z_i\le0\right],
    \qquad
    Z\sim\calN(0,\Sigma_{m,\calC}).
\end{equation}
Finally, define the gap-variance survival diagnostic
\begin{equation}
    \GVS_m(\calC)=\frac{\tr(L_{\calC}K_mL_{\calC}^\top)}{\tr(K_{m,\calC})+\epsilon_0}.
\end{equation}
Here $K_{m,\calC}$ denotes the ambient per-unit covariance restricted to the landscape coordinates that appear in the contrasts $L_{\calC}$; equivalently, if $P_{\calC}$ selects the nonzero columns touched by $L_{\calC}$, then $K_{m,\calC}=P_{\calC}K_mP_{\calC}^\top$.  The constant $\epsilon_0>0$ is only a small numerical regularizer preventing division by zero; it is not part of the decision theory.  Here $\DRI$ is a worst-critical-gap signal-to-noise score and $\GRI$ is the Gaussian plug-in decision risk at the stated budget.  The ratio $\GVS$ is only a diagnostic of how much ambient covariance survives projection into the chosen gap contrasts; it is not a complete decision-risk score because it depends on the normalization and duplication of contrasts and does not include the margins $\mu_m$.
\end{definition}

\begin{algorithm}[!ht]
\caption{Decision-aware QEM selection}
\label{alg:decision_aware_selection}
\begin{algorithmic}[1]
\Require Candidate methods $m\in\{\Raw,\ZNE,\CDR,\PEC\}$; budget $B$; estimated residual bias $\widehat a_m$; estimated covariance $\widehat K_m$; contrast matrix $L$; estimated landscape $\widehat f$; decision rule $\delta$.
\State Compute $\widehat\Delta=L\widehat f$.
\For{each method $m$}
    \State Compute $\widehat\mu_m=\widehat\Delta+L\widehat a_m=L(\widehat f+\widehat a_m)$.
    \State Compute $\widehat\Sigma_m=L\widehat K_mL^\top$.
    \State Compute $\widehat R_m(B)$ from $(\widehat\mu_m,\widehat\Sigma_m)$ using the Gaussian risk of Proposition~\ref{prop:gaussian_risk}.
\EndFor
\State Select $m^\star\in\argmin_m \widehat R_m(B)$.
\State \Return selected method $m^\star$, its decision risk $\widehat R_{m^\star}(B)$, and a stability/confidence diagnostic.
\end{algorithmic}
\end{algorithm}

The finite-sample concentration and plug-in stability guarantees for these diagnostics are technical rather than structural.  They are stated in Appendix~\ref{app:estimation_stability}; the body uses the scores above only as operational summaries of the decision kernel.

\section{Boundary cases and scope}

\subsection{Nonunique and \texorpdfstring{$\epsilon$}{epsilon}-optimal decisions}

Variational landscapes often have near-degenerate minima, plateaus, or symmetries, so the target decision may be membership in the $\epsilon$-optimal set $\calX^\star_\epsilon=\{x\in\calX:\ F(x)\le F^\star+\epsilon\}$ rather than a unique minimizer.  A conservative tolerant event requires the empirical minimizer set $\widehat{\calM}_m$ to lie inside $\calX^\star_\epsilon$, counting outside--inside ties as failures.  This is again a finite gap decision: outside candidates are compared against inside candidates by a contrast matrix $L_\epsilon$, and the relevant decision kernel is $L_\epsilon K_mL_\epsilon^\top$.  The formal statement, proof, and boundary-band estimation reduction are given in Appendix~\ref{app:topk_rules}.

\subsection{When MSE is sufficient}

The theory should not be read as saying that MSE is useless.  MSE is expectation-value complete for the task of estimating a fixed list of observables under squared loss.  It can also be decision-relevant under restrictive geometries:
\begin{enumerate}
    \item binary decisions with a known fixed covariance family in which the MSE score determines the relevant gap variance or margin;
    \item independent homoskedastic gap noise shared by all methods;
    \item methods with identical correlation and gap-variance structure, differing only in scalar MSE scale;
    \item large-margin regimes where every method has negligible decision risk.
\end{enumerate}
MSE can be misleading under positive-affine or common-mode residual distortions, since these can change ambient squared error without changing argmins, rankings, or top-$k$ decisions.  Outside the restrictive cases above, MSE is not wrong; it is incomplete.  The missing information is the residual gap law.

\section{Numerical experiments}
\label{sec:numerical_proof_of_mechanism}

\paragraph{Theory-derived testable consequences.}
The central hypothesis yields five consequences that organize the numerical experiments below.  They are consequences derived from the theory, not retrospectively claimed pre-registered predictions; the static, dynamic, and directional endpoints (E1--E3), together with the falsification criteria, were the quantities fixed in advance.
\begin{enumerate}[label=\textbf{P\arabic*.}]
    \item \textbf{Constructible accuracy--decision separation.}  Some mitigation maps strictly reduce MSE while leaving the decision unchanged, exactly for positive-affine CDR, or worsening it through PEC sampling overhead.
    \item \textbf{Physical prediction of the decision kernel.}  Because $\Sigma_m$ is the pullback of device covariance through the mitigation map, the channel predicts directional method ordering in gap space.
    \item \textbf{Operational Gaussian-risk calibration.}  Gaussian risk evaluated at the sample mean should track empirical decision failure within its pre-set calibration tolerance, while an analytic-mean diagnostic may deviate more.
    \item \textbf{No intrinsic benefit under ranking-preserving noise.}  In the evaluated ranking-preserving regimes, mitigation need not improve decisions; risk-based selection can favor Raw and reduce failure relative to accuracy-based selection, with the magnitude assessed against the fixed practical threshold and across margin strata.
    \item \textbf{Cost--quality separation.}  Aggregate success and conditional shot cost are distinct endpoints: failure to reach the aggregate target is compatible with per-instance shot savings where both selectors reach that target, but does not establish an aggregate cost advantage.
\end{enumerate}
These consequences are tested directly in Section~\ref{sec:numerical_proof_of_mechanism} under the declared simulation regimes.

The experiments are not intended to prove the quotient-space theorems; they test the operational implication that improving expectation-value accuracy need not improve finite-shot decisions.  The preceding sections establish the structural basis for the central hypothesis; the next step is to test its five observable consequences.  P1 couples the marginal no-go result in Theorem~\ref{thm:nogo_marginal} to constructible CDR and PEC signatures.  P2 tests the physical pullback restriction of Theorem~\ref{thm:qem_pullback}.  P3 tests the Gaussian risk in Proposition~\ref{prop:gaussian_risk} and the argmin exponent in Eq.~\eqref{eq:argmin_exponent}.  P4 tests the ranking-preserving consequence of the marginal no-go result together with the diagonal/off-diagonal classification in Theorem~\ref{thm:diagonal_offdiagonal}.  P5 examines cost--quality separation under the finite-shot limitation formalized by Theorem~\ref{thm:shot_level_converse}.  We evaluate P1--P5 in simulation without invoking a hardware backend.  All circuits in this section are QAOA-MaxCut \cite{Farhi2014} instances evaluated with Qiskit Aer \cite{Qiskit2024} density-matrix simulation under declared noise models, with multinomial shot sampling used only to realize finite-shot estimators.  No hardware data enter the P1--P5 evaluation; a separate pre-registered hardware micro-cell stress test is reported at the end of this section.  The held-out instance set and the static, dynamic, and directional endpoints were fixed before their results were inspected \cite{Nosek2018}, and all instances in each declared set are reported.

\begin{table}[!ht]
    \centering
    \small
    \caption{Relation between theoretical results and numerical tests.}
    \label{tab:proof_prediction_evidence}
    \setlength{\tabcolsep}{4pt}
    \renewcommand{\arraystretch}{1.12}
    \begin{tabular}{>{\raggedright\arraybackslash}p{0.25\linewidth}
                    >{\raggedright\arraybackslash}p{0.34\linewidth}
                    >{\raggedright\arraybackslash}p{0.31\linewidth}}
    \toprule
    Status & Content & Role in this section \\
    \midrule
    Proven (theorems)
      & Quotient factorization; gap-law decision-completeness; MSE and marginals are insufficient; QEM pullback geometry (Theorems~\ref{thm:quotient_factorization}, \ref{thm:minimal_gap_law}, \ref{thm:nogo_marginal}, and~\ref{thm:qem_pullback})
      & Structural facts used to define what must be tested \\
    Operational predictions (theory-derived, tested)
      & MSE-ranking and decision-ranking can differ; CDR improves MSE without changing the decision; PEC improves accuracy while worsening decision risk; Raw can be decision-optimal at finite budget
      & Consequences P1--P5 evaluated below \\
    Empirical evidence (simulation; hardware only as a separate micro-cell stress test)
      & Simulated QAOA-MaxCut; held-out static selection (E1); dynamic shots-to-success (E2); directional pullback (E3)
      & Numerical evidence for the operational predictions; the hardware stress test probes the pullback only \\
    \bottomrule
    \end{tabular}
\end{table}

\paragraph{Design and budget convention.}
The primitive observable is the MaxCut objective over a fixed candidate set.  For each method $m$ and budget $B$ we record the estimator MSE, the empirical decision failure rate, the Gaussian risk $\GRI_m(B)$ from Proposition~\ref{prop:gaussian_risk}, and the plug-in exponent from~\eqref{eq:argmin_exponent}.  The decision-aware selector implements Algorithm~\ref{alg:decision_aware_selection}: it compares methods through the estimated gap margin and kernel, rather than through ambient MSE alone.  The covariance used by the risk is the gap covariance $LK_mL^\top$ from Theorem~\ref{thm:qem_pullback}; hence all diagnostics are invariant to common-mode shifts by construction.  Budgets are matched by total shot cost: three-scale ZNE pays three primitive scales, CDR training shots are charged to the estimator, and PEC pays its quasiprobability overhead.  This is the finite-shot accounting assumed by the converse in Theorem~\ref{thm:shot_level_converse}.

\paragraph{Covariance and pullback checks.}
Consequences P2 and P3 concern the physical covariance pullback and the operational calibration of Gaussian risk.  The pullback validation compared the analytic covariance prediction $LM C_{\rm dev}M^\top L^\top$ against empirical raw gaps.  On the 48 held-out instances, the predicted and empirical order agreed on all 502 active method pairs; 487 active pairs satisfied the directional sensitivity criterion and 15 did not.  Figure~\ref{fig:pullback_directional} displays this directional agreement.  The sample-mean Gaussian risk matched the empirical decision-failure rate within the pre-set tolerance $\tau_R=0.05$ in 70 of the 72 preliminary method-budget cells, the only two exceptions being the low-budget ($B=256$) ZNE cells with maximum deviation $0.0585$, whereas the ideal/analytic-mean diagnostic deviated more, up to $0.173$.  Together, these results support P2 and the qualified, sample-mean form of P3.

\begin{figure}[t]
    \centering
    \includegraphics[width=0.82\linewidth]{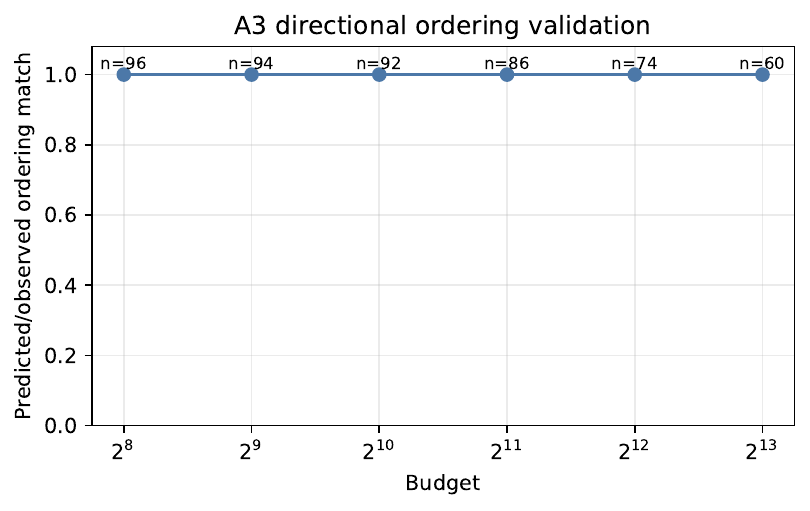}
    \caption{Pullback validation on held-out instances.  The analytic pullback predicts the ordering of active method pairs, and the empirical ordering from held-out raw samples matches on all 502 active pairs.  This validates the direction of the covariance pullback, not a literal crossing-budget claim.}
    \label{fig:pullback_directional}
\end{figure}

\paragraph{Mechanism check: better estimates need not be better decisions.}
P1 asks whether accuracy--decision separation is constructible rather than merely possible in an abstract covariance model.  A direct mechanism check is the CDR control in Fig.~\ref{fig:accuracy_decision_separation}.  The locked CDR model is a positive affine correction $x\mapsto ax+b$, with slopes $a\in[1.068,1.155]$ in the CDR validation.  It sharply improves estimator MSE, by factors from $43\times$ to $180\times$ over the reported budgets, but it leaves the decision statistic identical to Raw: the offset cancels in gaps and the positive scale cancels in the standardized Gaussian risk, so $\mu_{\CDR}=a\mu_{\Raw}$ and $\Sigma_{\CDR}=a^2\Sigma_{\Raw}$ imply $\GRI_{\CDR}=\GRI_{\Raw}$.  This is a direct numerical witness for the quotient statement and the marginal no-go theorem: accuracy can change while the decision kernel does not.

PEC supplies the complementary coexistence witness.  Under the scalar-overhead model corresponding to Theorem~\ref{thm:pec_signature}, PEC reduces MSE relative to Raw in the aggregate at all four evaluated budgets, but its variance inflation worsens decision risk.  The aggregate coexistence pattern $\MSE_{\PEC}<\MSE_{\Raw}$ and $\GRI_{\PEC}>\GRI_{\Raw}$ holds at budgets $768,1536,3072,6144$, and the per-instance coexistence counts are $5/8,7/8,7/8,7/8$.  The exception pattern is also informative: POM\_004 has too little Raw MSE for PEC to win the MSE leg at low budget, while at high budget PEC becomes decision-better, so both legs of coexistence never hold simultaneously.  Thus the phenomenon is not asserted as universal; it appears exactly where the gap geometry and the overhead term predict it.  The affine control and the overhead witness together support the constructible separation stated in P1.

\begin{figure}[t]
    \centering
    \includegraphics[width=0.92\linewidth]{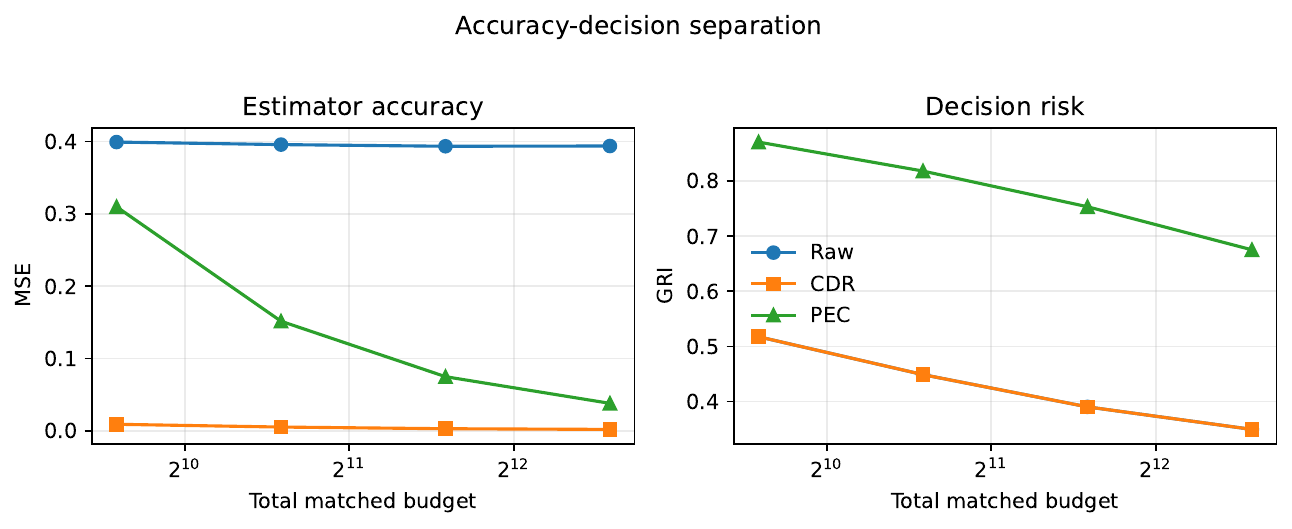}
    \caption{Accuracy--decision separation.  CDR reduces MSE by large factors while its GRI curve is exactly Raw.  In the GRI panel, Raw is hidden under CDR because their GRI curves coincide exactly.  PEC gives the complementary coexistence regime: lower MSE than Raw but worse decision risk over the reported budgets.}
    \label{fig:accuracy_decision_separation}
\end{figure}

\paragraph{Held-out static and dynamic tests.}
P4 and P5 separate static decision reliability from aggregate dynamic success and conditional shot cost.  At the held-out operating budget $B^\star=512$, the static selector experiment used 48 fresh instances and 40 paired replicas per instance, for 1920 paired decisions.  The MSE selector chose a ZNE estimator on all 48 instances, whereas the decision-aware selector chose Raw on 39 instances and a ZNE method on 9.  The paired McNemar statistic was significant ($b=165$, $c=63$, $z=6.755$, one-sided rejection), and the empirical failure rate decreased from $0.4635$ to $0.4104$.  The relative reduction, however, was $0.1146$, below the practical threshold $0.15$ fixed in advance.  The preliminary power calculation, based on only three in-band preliminary instances at $B^\star$, had projected a larger pooled reduction of $0.211$ and motivated the design; the held-out effect was smaller.  Binning by ideal decision margin (terciles) in Fig.~\ref{fig:e1_terciles} shows where the effect lives: reductions of $30.5\%$, $12.6\%$, and $3.3\%$ in the easy, medium, and hard terciles, respectively.  The easy and medium terciles reject one-sided McNemar tests, whereas the hard tercile does not ($z=1.59$, no rejection).  This supports the qualified form of P4 statistically, but it does not meet the practical threshold fixed in advance.

The dynamic shots-to-success experiment is stricter.  Over C001--C032, neither selector reached the target success probability $0.80$ in aggregate.  The maximum success rates were $0.7758$ for the decision-aware selector and $0.7883$ for the MSE selector at the largest budget.  Some individual instances reached the target, with median shot-cost ratio 6.0 where both selectors reached it, but the aggregate endpoint did not.  Figure~\ref{fig:dynamic_success} shows this non-crossing result.  This supports only the conditional cost--quality separation in P5: no aggregate success or cost advantage is established.  Under our pre-set claim ordering, the static and pullback results are therefore supporting evidence rather than headline claims.

\begin{figure}[t]
    \centering
    \includegraphics[width=0.82\linewidth]{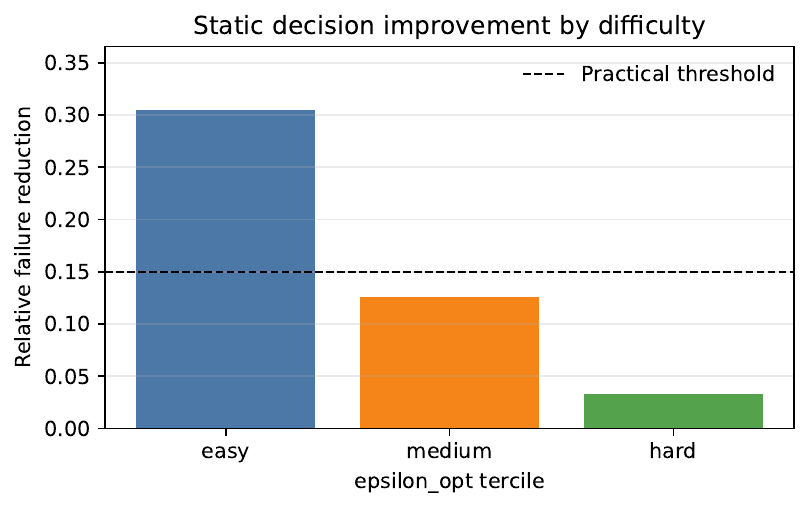}
    \caption{Static held-out effect binned by ideal decision margin (terciles).  The decision-aware selector reduces paired decision failure most in the easy tercile and least in the hard tercile, consistent with a margin-controlled finite-shot effect.}
    \label{fig:e1_terciles}
\end{figure}

\begin{figure}[t]
    \centering
    \includegraphics[width=0.82\linewidth]{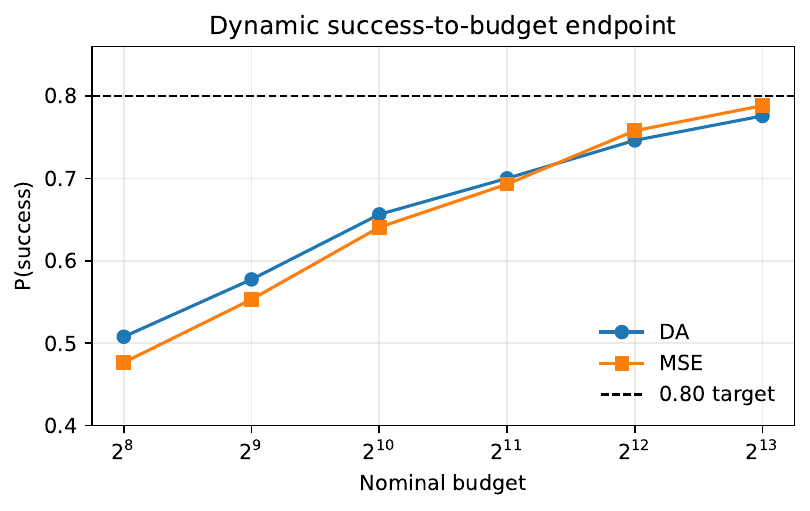}
    \caption{Dynamic shots-to-success evaluation.  Neither selector reaches aggregate success probability $0.80$ over C001--C032, so the dynamic headline claim is not established in this depolarizing setting.}
    \label{fig:dynamic_success}
\end{figure}

\begin{table}[t]
    \centering
    \small
    \caption{Numerical experiments and held-out outcomes.  The table separates the mechanism check from the practical thresholds.}
    \label{tab:numerical_summary}
    \setlength{\tabcolsep}{3pt}
    \renewcommand{\arraystretch}{1.12}
    \begin{tabular}{p{0.14\linewidth}p{0.17\linewidth}p{0.31\linewidth}p{0.23\linewidth}}
    \toprule
    Check & Estimand & Numerical outcome & Interpretation \\
    \midrule
    CDR control & $\MSE$ vs.\ $\GRI$ & MSE improves $43$--$180\times$; $\GRI_{\CDR}=\GRI_{\Raw}$ & MSE is not decision risk \\
    PEC witness & coexistence & $5/8,7/8,7/8,7/8$ instances & gap geometry controls where it appears \\
    Static selection & paired failure at $B^\star=512$ & $0.4635\to0.4104$, reduction $0.1146$ & statistically positive, practically below $0.15$ \\
    Dynamic search & $P_{\rm succ}\ge0.80$ & max $0.7758$ (DA), $0.7883$ (MSE) & target not reached in aggregate \\
    Pullback & directional ordering & $502/502$ ordering matches & directional pullback validated \\
    \bottomrule
    \end{tabular}
\end{table}

\begin{table}[t]
    \centering
    \small
    \caption{Status of the five theory-derived numerical consequences.  The not-pass rows are part of the claim: accuracy improvements and conditional savings do not automatically establish aggregate decision benefit.}
    \label{tab:p_predictions_status}
    \setlength{\tabcolsep}{3pt}
    \renewcommand{\arraystretch}{1.12}
    \begin{tabular}{>{\raggedright\arraybackslash}p{0.14\linewidth}
                    >{\raggedright\arraybackslash}p{0.22\linewidth}
                    >{\raggedright\arraybackslash}p{0.29\linewidth}
                    >{\raggedright\arraybackslash}p{0.25\linewidth}}
    \toprule
    Prediction & Metric & Pre-specified criterion & Result \\
    \midrule
    P1 & Accuracy--decision separation & Existence of a method improving MSE without improving the decision & Pass: CDR is decision-flat while improving MSE; PEC gives coexistence over the evaluated budgets \\
    P2 & Pullback ordering & Directional ordering of active method pairs & Pass: $502/502$ active pairs match \\
    P3 & Gaussian risk calibration & Within $\tau_R=0.05$ on preliminary cells & Mostly pass: $70/72$ cells within tolerance; two low-budget ZNE cells outside \\
    P4 & Static decision benefit & Practical threshold $0.15$ for relative failure reduction & Not pass at the practical threshold: statistically significant reduction $0.1146$, with a margin-tercile gradient \\
    P5 & Dynamic cost--quality claim & Aggregate success target $0.80$ and shot-cost advantage & Not pass in aggregate: target not reached; conditional per-instance shot ratio $6.0\times$ where both selectors reach the target \\
    \bottomrule
    \end{tabular}
\end{table}

\paragraph{Negative controls.}
The negative controls test the conditional scope of P4 beyond the uniform depolarizing setting.  A heterogeneous-PEC calculation used the known heterogeneous channel in the same pullback construction and found no ranking flip; PEC never beat Raw in GRI over the declared budgets because the overhead $\gamma^2=88.94$ dominated the uniform-style debiasing benefit.  A structured-noise robustness test with amplitude damping plus phase damping over 16 hash-generated instances found no predicted benefit from proceeding to shot-level sampling: in the analytic prediction, the decision-aware method beat Raw on $0/16$ instances at every budget.  These are not treated as missing positives.  They satisfy the falsification criteria and support P4 in the evaluated regimes: when the noise does not distort the active ranking in the required way, mitigation can improve value accuracy without improving the downstream decision.

\paragraph{Robustness under a calibrated device-noise model.}
As an additive robustness check, we separately pre-registered a calibrated device-noise stress test before execution.  The stress test used \texttt{FakeKolkataV2} with the full C001--C048 pool, deterministic noise-aware transpilation, budget-matched method costs, and the same decision-aware and MSE selectors as above; it is calibrated fake-backend evidence, not hardware evidence.  The ordering gate passed on all 48 instances, the determinism check passed, and no instance was saturated at $B^\star=512$ ($0/48$ had Raw failure at least $0.95$).

At $B^\star=512$, pooling all 48 instances gave $n=1920$ paired replicates.  The decision-aware selector reduced failure from $0.4896$ to $0.4516$ relative to the MSE selector, with $b=225$, $c=152$, McNemar $z=3.77$, one-sided $p=8.1\times10^{-5}$, and relative reduction $0.0777$ with bootstrap interval $[0.038,0.116]$.  Thus the device-noise result is statistically significant but remains below the threshold fixed in advance: even the top of the interval is below $0.15$.  The effect is also not universal; the per-instance panel includes instances where the decision-aware selector loses, so the result is a pooled stress-test signal rather than a per-instance guarantee.

The mechanism is consistent with the kernel picture.  At $B^\star=512$, the MSE selector chose CDR on all 48 instances; because the fitted CDR correction is positive-affine, it improves MSE while remaining decision-equivalent to Raw.  The decision-aware selector instead selected Raw plus readout-error mitigation \cite{Bravyi2021} in $85/288$ instance--budget cells, precisely where calibrated readout and coherent device errors distort the active ranking.  Full-covariance versus diagonal-covariance selection flipped in $59/288$ cells ($20.5\%$), compared with $0/72$ in the uniform depolarizing pre-check.  Together with the negative controls, this gives a regime-aware reading: ranking-preserving noise can make mitigation decision-neutral or harmful, while calibrated device-like noise can make readout mitigation useful for the decision, albeit modestly in this stress test \cite{vandenBerg2023}.

\paragraph{A pre-registered hardware micro-cell stress test.}
A small pre-registered, exploratory hardware micro-cell stress test on a 156-qubit device probed the pullback covariance pipeline end to end.  The learned-channel prediction matched the directly estimated decision covariance within 13--21\% relative Frobenius error, while the stricter pre-registered decision-direction criterion was not met at the executed repetition budget; per the pre-registered language rule we therefore make no device-level claim.  The hardware run is not used to support the primary simulation claims.  The full protocol, the blind qubit-selection and re-selection record, and the raw-count analysis are reported in Appendix~\ref{app:hardware_microcell}.

\paragraph{Reproducibility.}
The full numerical artifact is public, with deterministic regeneration of every reported number and figure from frozen code and configurations.  The repository, release tag, manifest hash, and archival details are given in the Code and data availability statement.

\paragraph{Operational reading.}
Across the evaluated ranking-preserving uniform and synthetic heterogeneous regimes, accuracy-oriented mitigation did not establish an aggregate decision benefit.  In the calibrated device-noise stress test, by contrast, selecting through the residual gap kernel produced a statistically significant but still sub-practical-threshold pooled benefit by adopting readout mitigation in part of the panel.  The operational implication in the evaluated regimes is positive but bounded: estimate the gap law, compare methods in decision space, and decline mitigation when its bias reduction does not compensate for its induced decision variance.

Additional covariance pre-checks, channel-learning diagnostics, and secondary analyses are reported in Appendix~\ref{app:extended_experiments}; they are robustness and transparency analyses rather than independent primary claims.

\section{Limitations and open directions}

The converse and dominance results above are intentionally fixed-budget and
model-specific.  Three extensions are natural but are not claimed here.

\begin{enumerate}[label=(\roman*)]
    \item \textbf{Measurement-independent quantum-Fisher converse.}  The present converse fixes the primitive measurement family and allocation.  A stronger theory would optimize over allowed measurements, circuit variants, and mitigation primitives, producing a genuinely measurement-independent information bound.
    \item \textbf{Adaptive allocation and fixed-confidence theory.}  The shot-level converse assumes fixed non-adaptive allocation fractions.  Sequential allocation, stopping rules, and best-arm-identification style fixed-confidence guarantees require a separate adaptive theory.
    \item \textbf{Composite top-$k$ and ranking converses.}  The local Gaussian argmin converse treats single loss-of-optimality confusers.  Full minimax converses for composite top-$k$, ranking, and phase-label events require multi-facet alternatives and may involve different least-favorable geometry.
\end{enumerate}

\section{Conclusion}

The message of the paper is that QEM accuracy and QEM decision reliability are different objects.  Accuracy is an ambient-space property of expectation values; decision reliability is a quotient-space property of residual gaps.  Once this distinction is made, the right benchmark is not MSE alone but the residual gap law, and in Gaussian finite-shot regimes the effective margin--kernel pair $(\mu_m,LK_mL^\top)$.

This distinction is not only formal.  Because QEM methods transform shared physical device noise through mitigation maps, their decision kernels have restricted pullback geometry.  A method can therefore reduce expectation-value error while preserving, degrading, or only weakly improving the downstream decision.  The simulations show this for Raw, ZNE, CDR, and PEC in finite-shot QAOA-MaxCut settings, across the evaluated regimes.  The calibrated device-noise stress test sharpens the regime-aware message: the same kernel diagnostic can recommend retaining Raw under ranking-preserving noise and adopting readout mitigation when device-like errors distort the decision ranking.

The practical consequence is simple: before applying mitigation, estimate how the method changes the residual gap geometry relevant to the downstream task.  If the induced bias reduction does not compensate for the induced gap variance and covariance, the decision-aware choice may be to keep Raw.  QEM should therefore be benchmarked not only by how accurately it estimates values, but by whether it improves the decisions those values are used to make.

\section*{Author contributions}

Vicenzo Scavino conceived the project, developed the theory, implemented and validated the numerical and hardware workflows, analyzed the data, and wrote the manuscript.  Large language model tools were used as interactive assistance for code editing, documentation, literature checks, and language polishing; the author reviewed and validated all text, code, calculations, citations, and conclusions and is fully responsible for the work.

\section*{Acknowledgments}

I thank my parents for their unconditional personal and material support; their help is what makes it possible to carry out this work as an independent researcher.

This work was carried out entirely with free and openly available tools.  I acknowledge the open-source scientific software community, in particular Qiskit and Qiskit Aer, NumPy, SciPy, pandas, NetworkX, and the LaTeX ecosystem.  I thank IBM Quantum for promotional access to quantum computing services, used for the hardware micro-cell experiment.  The views expressed are those of the author and do not reflect the official policy or position of IBM or the IBM Quantum team.  No external funding was received for this work.

\section*{Code and data availability}

\begin{sloppypar}
The public reproduction package is available as \texttt{decision-aware-qem} at \url{https://github.com/vicenzoscavino1999/decision-aware-qem}, with release tag \texttt{v1.1.0}.  The current reproduction freeze covers 52 files with freeze hash \texttt{sha256:\allowbreak 2AF91BE182897D98\allowbreak 85C1F09816B3342B\allowbreak B64EB499B15B1212\allowbreak CB4ACD6D110E4026}; the raw hardware counts for the micro-cell stress test are archived in the repository.  The canonical reproduction command is \texttt{python scripts/reproduce\_all.py} inside the hash-locked Linux/amd64 Docker image; the same command is wired into continuous integration and regenerates the Level-1 reference, checks determinism, runs the test suite, checks the integrity of the frozen set, reproduces the held-out and extension archives, verifies the archived hardware counts and recomputes the micro-cell analysis directly from them, and regenerates the five data-driven figures.  All simulations reported here are Aer density-matrix simulations with deterministic seed streams; the mechanism and robustness calculations use separate instance IDs and do not reuse the held-out C001--C048 outcomes.  The additive robustness arms, including the calibrated device-noise stress test and hardware micro-cell, are pre-registered in sealed hash-locked documents in the same repository.  The \texttt{v1.1.0} snapshot is archived on Zenodo under DOI \href{https://doi.org/10.5281/zenodo.21146260}{10.5281/zenodo.21146260}.
\end{sloppypar}

\appendix

\section{Estimation and stability guarantees}
\label{app:estimation_stability}

The following results justify the operational diagnostics in Section~\ref{sec:estimating_decision_kernel}.  They are standard concentration and perturbation statements applied to critical-band gap samples; they are included for completeness, not as part of the main seven-result spine.

Suppose we obtain $R$ independent repetitions on a fixed critical band $\calC$ of size $d$, each at total budget $B$.  In simulation or controlled small-instance experiments, one may form
\begin{equation}
    G_r=\sqrt B\,\bigl(\widehat\Delta_{m,r}(B)-\mu_m\bigr),
\end{equation}
using a high-accuracy reference for $\mu_m$.  This construction assumes that the reference error is $o(B^{-1/2})$, or else that any pilot error is independent of the repetitions and explicitly included in the reported covariance.  Empirical centering removes an unknown mean, but it does not by itself remove a common drift or shared pilot error across repetitions.  In hardware practice, $G_r$ may instead be obtained by subtracting a high-accuracy pilot estimate or by empirical centering across repetitions.  If empirical centering is used, define
\begin{equation}
    \overline G=\frac1R\sum_{r=1}^R G_r,
    \qquad
    \widehat\Sigma_{m,\calC}=\frac1{R-1}\sum_{r=1}^R (G_r-\overline G)(G_r-\overline G)^\top.
\end{equation}
If the samples are centered around the true mean, the unbiased-centering correction may be replaced by $R^{-1}\sum_r G_rG_r^\top$.  Empirical centering changes constants but not the covariance rate.

\begin{theorem}[Finite-sample decision-kernel estimation]
\label{thm:sample_covariance}
Assume $R\ge2$ and the centered per-unit gap samples $G_r$ are independent, mean-zero, sub-Gaussian random vectors in $\R^d$ with sub-Gaussian norm bounded by $\sigma$ uniformly over all unit directions, i.e., $\sup_{\norm{u}_2=1}\norm{\langle u,G_r\rangle}_{\psi_2}\le\sigma$.  Thus $\sigma$ is a directional sub-Gaussian norm, not merely a coordinatewise standard deviation.  There exists a universal constant $C>0$ such that, with probability at least $1-\delta$,
\begin{equation}
\label{eq:sample_cov_bound}
    \norm{\widehat\Sigma_{m,\calC}-\Sigma_{m,\calC}}_{\op}
    \le
    C\sigma^2\left(
    \sqrt{\frac{d+\log(2/\delta)}{R}}
    +
    \frac{d+\log(2/\delta)}{R}
    \right).
\end{equation}
For Gaussian samples, the same form follows from standard sample covariance concentration with $\sigma^2$ comparable to $\norm{\Sigma_{m,\calC}}_{\op}$ up to effective-rank refinements.
\end{theorem}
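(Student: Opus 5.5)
The plan is the standard covering argument for sample covariance matrices of independent sub-Gaussian vectors (as in Vershynin, Thm.~4.6.1/4.7.1), followed by a separate correction for empirical centering. I first treat the oracle-centered estimator $\widetilde\Sigma=R^{-1}\sum_r G_rG_r^\top$ and then bound the gap between it and $\widehat\Sigma_{m,\calC}$.

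\textbf{Step 1: reduction to a net.} Let $\calN$ be a $1/4$-net of the unit sphere $S^{d-1}$ with $|\calN|\le 9^d$. For a symmetric matrix $A$ we have $\norm{A}_{\op}\le 2\max_{u\in\calN}|u^\top Au|$. Applying this with $A=\widetilde\Sigma-\Sigma_{m,\calC}$ reduces the problem to scalar quantities:
\[
u^\top A u=\frac1R\sum_r\bigl(\langle u,G_r\rangle^2-\E\langle u,G_r\rangle^2\bigr).
\]

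\textbf{Step 2: Bernstein on a fixed direction.} By the directional hypothesis, $\norm{\langle u,G_r\rangle}_{\psi_2}\le\sigma$. Hence $\langle u,G_r\rangle^2$ is sub-exponential with $\psi_1$-norm at most $\sigma^2$, and its centered version has $\psi_1$-norm at most $C\sigma^2$. Bernstein's inequality for independent sub-exponential variables then gives
\[
\Prob\bigl[|u^\top Au|\ge \sigma^2\varepsilon/2\bigr]\le 2\exp\bigl(-cR\min(\varepsilon^2,\varepsilon)\bigr).
\]
I will choose
\[
\varepsilon=C\left(\sqrt{\frac{d+\log(2/\delta)}{R}}+\frac{d+\log(2/\delta)}{R}\right).
\]
With this choice $\min(\varepsilon^2,\varepsilon)\ge C^2(d+\log(2/\delta))/R$. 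A union bound over the $9^d$ net points leaves total failure probability $9^d\cdot 2e^{-cC^2(d+\log(2/\delta))}\le\delta$ once $C$ is large enough. This proves \eqref{eq:sample_cov_bound} for $\widetilde\Sigma$.

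\textbf{Step 3: centering correction.} This is the step needing care, since the statement is about the empirically centered $\widehat\Sigma_{m,\calC}$. Write
\[
\widehat\Sigma_{m,\calC}=\frac{R}{R-1}\bigl(\widetilde\Sigma-\overline G\,\overline G^\top\bigr).
\]
The vector $\sqrt R\,\overline G$ is mean-zero with directional $\psi_2$-norm at most $C\sigma$, by Hoeffding-type sub-Gaussian summation. A net argument on $\norm{\overline G}_2=\sup_u\langle u,\overline G\rangle$ therefore gives $\norm{\overline G}_2^2\le C\sigma^2(d+\log(2/\delta))/R$ with probability at least $1-\delta/2$; this is absorbed by the linear term of the bound.

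The factor $R/(R-1)\le 2$ changes $\widetilde\Sigma$ by $(R-1)^{-1}\widetilde\Sigma$. Its operator norm is at most $(R-1)^{-1}\bigl(\norm{\Sigma}_{\op}+\norm{\widetilde\Sigma-\Sigma}_{\op}\bigr)$. Since $\norm{\Sigma}_{\op}\le C\sigma^2$, this is $O(\sigma^2/R)$, again within the linear term because $d\ge1$.

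Splitting $\delta$ between the two events and enlarging $C$ completes the argument.

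For the Gaussian remark, $\langle u,G_r\rangle\sim\calN(0,u^\top\Sigma u)$, so one may take $\sigma^2\asymp\norm{\Sigma_{m,\calC}}_{\op}$. Effective-rank refinements (Koltchinskii--Lounici) are only cited, not proved.
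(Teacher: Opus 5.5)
Your proposal is correct and follows the paper's route: the paper's proof simply cites the standard sub-Gaussian sample-covariance concentration bound \cite{Vershynin2018}, your net-plus-Bernstein argument reproduces that bound, and your explicit empirical-centering step makes rigorous what the paper only asserts in the surrounding text (that centering changes constants but not the rate). One harmless slip: with your choice of $\varepsilon$ you only get $\min(\varepsilon^2,\varepsilon)\ge C(d+\log(2/\delta))/R$ for $C\ge 1$, not $C^2(d+\log(2/\delta))/R$, because $\min(\varepsilon^2,\varepsilon)=\varepsilon$ is only linear in $C$ when $\varepsilon>1$; this weaker bound still suffices for the union bound over the $9^d$ net points.
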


\begin{proof}
This is the standard nonasymptotic sample-covariance concentration bound for sub-Gaussian vectors, applied to the critical-band residual gap samples.  See, for example, high-dimensional probability treatments of sub-Gaussian covariance estimation and Gaussian operator-norm refinements \cite{Vershynin2018,KoltchinskiiLounici2017}.
\end{proof}

\begin{theorem}[Stability of method ranking from estimated kernels]
\label{thm:ranking_stability}
Let two methods $A,B$ be compared on the same fixed critical band $\calC$.  Suppose their effective margins are estimated within $\eta_\mu$ uniformly and their critical-band per-unit covariance matrices within $\eta_\Sigma$ in operator norm.  If all true gap variances are bounded below by $s_{\min}^2>0$ and above by $s_{\max}^2$, and if
\begin{equation}
    \eta_\Sigma<\frac{s_{\min}^2}{2},
\end{equation}
then, on the event where the stated margin and covariance bounds hold, the estimated per-unit $\DRI$ error satisfies the explicit bound
\begin{equation}
    |\widehat\DRI_m^{\mathrm{unit}}-\DRI_m^{\mathrm{unit}}|
    \le
    \frac{\sqrt2\,\eta_\mu}{s_{\min}}
    +
    \frac{\sqrt2\,\norm{\mu_m}_\infty\eta_\Sigma}{s_{\min}^3}.
\end{equation}
Consequently, if
\begin{equation}
    \DRI_A^{\mathrm{unit}}-\DRI_B^{\mathrm{unit}}
    >2\left(
    \frac{\sqrt2\,\eta_\mu}{s_{\min}}
    +
    \frac{\sqrt2\,\max_m\norm{\mu_m}_\infty\eta_\Sigma}{s_{\min}^3}
    \right),
\end{equation}
then the empirical $\DRI$ ranking is stable on the event where both methods' margin and covariance estimates satisfy the stated bounds.  If the two methods' estimation events are controlled separately, the advertised probability is obtained by the usual union bound.
\end{theorem}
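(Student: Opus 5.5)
The plan is to reduce everything to a coordinatewise perturbation bound for the standardized ratios $\mu_{m,i}/\sqrt{\Sigma_m(i,i)}$, and then pass to the minimum over the band $\calC$.

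\emph{Reduction to coordinates.} The map $v\mapsto\min_{i\in\calC}v_i$ is $1$-Lipschitz in $\norm{\cdot}_\infty$. Hence
\[
|\widehat\DRI_m^{\mathrm{unit}}-\DRI_m^{\mathrm{unit}}|
\le\max_{i\in\calC}\left|\frac{\widehat\mu_{m,i}}{\widehat s_{m,i}}-\frac{\mu_{m,i}}{s_{m,i}}\right|,
\qquad s_{m,i}^2=\Sigma_m(i,i),\quad \widehat s_{m,i}^2=\widehat\Sigma_m(i,i).
\]
It therefore suffices to bound one coordinate $i$ uniformly, working on the event where the stated estimation bounds hold.

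\emph{Diagonal control from the operator norm.} For any matrix $A$ we have $|A_{ii}|=|e_i^\top A e_i|\le\norm{A}_{\op}$. So $|\widehat s_{m,i}^2-s_{m,i}^2|\le\eta_\Sigma$. The hypothesis $\eta_\Sigma<s_{\min}^2/2$ then gives $\widehat s_{m,i}^2\ge s_{\min}^2/2$, that is $\widehat s_{m,i}\ge s_{\min}/\sqrt2>0$. This is the only place the smallness condition enters: it keeps the estimated denominator away from zero.

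\emph{Coordinate bound.} Split the ratio error as
\[
\left|\frac{\widehat\mu_i}{\widehat s_i}-\frac{\mu_i}{s_i}\right|
\le\frac{|\widehat\mu_i-\mu_i|}{\widehat s_i}+|\mu_i|\left|\frac1{\widehat s_i}-\frac1{s_i}\right|.
\]
The first term is at most $\sqrt2\,\eta_\mu/s_{\min}$. For the second term, use
\[
\frac1{\widehat s_i}-\frac1{s_i}=\frac{s_i^2-\widehat s_i^2}{s_i\widehat s_i(s_i+\widehat s_i)}.
\]
Its numerator is at most $\eta_\Sigma$. Its denominator is at least
\[
s_{\min}\cdot\frac{s_{\min}}{\sqrt2}\cdot s_{\min}\Bigl(1+\tfrac1{\sqrt2}\Bigr)\ \ge\ \frac{s_{\min}^3}{\sqrt2}.
\]
With $|\mu_i|\le\norm{\mu_m}_\infty$ (the true margin), this yields the claimed $\sqrt2\,\norm{\mu_m}_\infty\eta_\Sigma/s_{\min}^3$. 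The upper bound $s_{\max}$ is not needed for this direction; it only enters if one instead expands around the estimated quantities.

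\emph{Ranking stability.} Write $\epsilon_m$ for the bound just proved, and note $\epsilon_m\le\epsilon:=\frac{\sqrt2\,\eta_\mu}{s_{\min}}+\frac{\sqrt2\,\max_m\norm{\mu_m}_\infty\eta_\Sigma}{s_{\min}^3}$. On the joint event,
\[
\widehat\DRI_A^{\mathrm{unit}}-\widehat\DRI_B^{\mathrm{unit}}\ \ge\ (\DRI_A^{\mathrm{unit}}-\DRI_B^{\mathrm{unit}})-2\epsilon\ >\ 0,
\]
so the empirical order agrees with the true order. If the two methods' estimation events are controlled separately, each with failure probability $\delta_m$ (for instance via Theorem~\ref{thm:sample_covariance} together with a margin concentration bound), the union bound gives probability at least $1-\delta_A-\delta_B$.

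The only step needing care is the denominator lower bound in the reciprocal difference. The bound must use $\widehat s_i\ge s_{\min}/\sqrt2$, which comes from the smallness hypothesis, rather than any bound on the unknown estimate. Everything else is the triangle inequality.
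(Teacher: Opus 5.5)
Your proof is correct and follows essentially the same route as the paper. You lower-bound $\widehat s_{m,i}^2$ by $s_{\min}^2/2$ using operator-norm control of the diagonal entries, split the ratio error into a margin term and a reciprocal-standard-deviation term, take the extreme over $\calC$, and combine the two methods' bounds. Your explicit identity $1/\widehat s_i-1/s_i=(s_i^2-\widehat s_i^2)/\bigl(s_i\widehat s_i(s_i+\widehat s_i)\bigr)$ fills in the step the paper leaves implicit, and it actually gives a slightly sharper constant than $\sqrt2$.
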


\begin{proof}
For each coordinate, write $v=s^2$ and $\widehat v=\widehat s^2$.  Since $v\ge s_{\min}^2$ and $|\widehat v-v|\le\eta_\Sigma<s_{\min}^2/2$, we have $\widehat v\ge s_{\min}^2/2$.  Hence
\begin{equation}
    \left|\frac{\widehat\mu}{\widehat s}-\frac{\mu}{s}\right|
    \le
    \frac{|\widehat\mu-\mu|}{\sqrt{\widehat v}}
    +|\mu|\left|\frac1{\sqrt{\widehat v}}-\frac1{\sqrt v}\right|
    \le
    \frac{\sqrt2\,|\widehat\mu-\mu|}{s_{\min}}
    +
    \frac{\sqrt2\,|\mu|\,|\widehat v-v|}{s_{\min}^3}.
\end{equation}
The diagonal variance error is bounded by the operator-norm covariance error.  Taking maxima over methods and minima over coordinates gives the result; the final probability statement follows by intersecting the estimation events for both methods, or by a union bound when those events are reported separately.
\end{proof}

\begin{theorem}[Plug-in Gaussian decision-risk stability]
\label{thm:plugin_risk_stability}
Fix a finite critical band of dimension $d$ and a total budget $B$.  Let
\begin{equation}
    R_B(\mu,\Sigma)=\Prob[\exists i:\ \mu_i+B^{-1/2}Z_i\le0],
    \qquad Z\sim\calN(0,\Sigma),
\end{equation}
with $\Sigma$ positive definite and eigenvalues in $[\lambda_{\min},\lambda_{\max}]$.  If
\begin{equation}
    \norm{\widehat\mu-\mu}_\infty\le\eta_\mu,
    \qquad
    \norm{\widehat\Sigma-\Sigma}_{\op}\le\eta_\Sigma<\lambda_{\min}/2,
\end{equation}
then there is a finite constant $C_{d,B}$, depending only on $d$, $B$, $\lambda_{\min}$, $\lambda_{\max}$, and $\norm{\mu}_\infty$, such that
\begin{equation}
    |R_B(\widehat\mu,\widehat\Sigma)-R_B(\mu,\Sigma)|
    \le C_{d,B}(\eta_\mu+\eta_\Sigma).
\end{equation}
Sharper dimension-dependent constants can be obtained from Gaussian-max comparison and anti-concentration bounds.  This is a fixed-budget perturbation bound; the constant $C_{d,B}$ is not asserted to be uniform as $B\to\infty$.
\end{theorem}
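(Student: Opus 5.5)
The plan is to write the risk as a Gaussian integral and bound its variation along the straight path between the two parameter pairs. Standardize first: $R_B(\mu,\Sigma)=\Prob[\exists i:\ Z_i\le -\sqrt B\mu_i]=1-\Prob[Z>-\sqrt B\mu]$. The Gaussian law is symmetric, so by Proposition~\ref{prop:gaussian_risk} this equals $1-\Phi_d(\sqrt B\mu;\Sigma)$. The function
\[
  \Psi(t,\Sigma)=\Phi_d(t;\Sigma)=\int_{(-\infty,t]}\varphi_\Sigma(z)\,dz
\]
is smooth on $\R^d\times\{\Sigma\succ0\}$. We therefore bound $|\Psi(\sqrt B\widehat\mu,\widehat\Sigma)-\Psi(\sqrt B\mu,\Sigma)|$ by the mean value theorem along the segment $(\mu_s,\Sigma_s)=(\mu+s(\widehat\mu-\mu),\,\Sigma+s(\widehat\Sigma-\Sigma))$, $s\in[0,1]$.

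The first step is to check that the segment stays in a compact set of well-conditioned matrices. Since $\|\widehat\Sigma-\Sigma\|_{\op}\le\eta_\Sigma<\lambda_{\min}/2$, Weyl's inequality shows that every $\Sigma_s$ has eigenvalues in $[\lambda_{\min}/2,\,\lambda_{\max}+\lambda_{\min}/2]$. This uniform lower bound is exactly where the hypothesis $\eta_\Sigma<\lambda_{\min}/2$ is used. The mean-shift part is then straightforward. We have $\partial\Phi_d/\partial t_i=\varphi_1(t_i;\Sigma_{ii})\,\Prob[Z_{-i}\le t_{-i}\mid Z_i=t_i]\le (2\pi\lambda_{\min}/2)^{-1/2}$, because $\Sigma_{s,ii}\ge\lambda_{\min}/2$. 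The chain rule gives $\|\nabla_\mu R_B\|_1\le d\sqrt B\,(\pi\lambda_{\min})^{-1/2}$, so this part contributes at most $d\sqrt B(\pi\lambda_{\min})^{-1/2}\eta_\mu$.

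The covariance part is the main obstacle. We need a bound on the Frobenius gradient $\nabla_\Sigma\Phi_d$ that is uniform over the well-conditioned set. I would first try Plackett's identity, as in Proposition~\ref{prop:first_order_decision_kernel_comparison}. It gives $\partial\Phi_d/\partial\sigma_{ij}=\partial^2\Phi_d/\partial t_i\partial t_j$ for $i\neq j$, and $\tfrac12\partial^2\Phi_d/\partial t_i^2$ on the diagonal. Each mixed derivative is a bivariate density $\varphi_2(t_i,t_j;\Sigma_{\{ij\}})$ times a conditional orthant probability. The bivariate covariance block has smallest eigenvalue at least $\lambda_{\min}/2$, so this density is bounded by $(2\pi)^{-1}(\lambda_{\min}/2)^{-1}$.

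The diagonal second derivative is less clean. It splits into a term $\partial_t\varphi_1$, bounded by $C\lambda_{\min}^{-1}$, and a term involving the derivative of a conditional probability. The conditional mean is linear in $t_i$ with slopes $\Sigma_{ki}/\Sigma_{ii}\le 2\lambda_{\max}/\lambda_{\min}$, so that derivative is again a sum of bounded densities. This is why $C_{d,B}$ depends on $\lambda_{\max}$.

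A more robust fallback is to differentiate $\varphi_\Sigma$ directly inside the integral:
\[
  \partial_\Sigma\varphi_\Sigma=\tfrac12\varphi_\Sigma\bigl(\Sigma^{-1}zz^\top\Sigma^{-1}-\Sigma^{-1}\bigr).
\]
Its total $L^1$ mass is bounded by $\tfrac12\E\|\Sigma^{-1}ZZ^\top\Sigma^{-1}-\Sigma^{-1}\|_F\le C d/\lambda_{\min}$. This bound is uniform in $t$, so the risk's dependence on $B$ and $\mu$ disappears from this term. The covariance part then contributes at most $C d^{3/2}\lambda_{\min}^{-1}\eta_\Sigma$, after converting the operator-norm error to Frobenius norm via $\|E\|_F\le\sqrt d\,\|E\|_{\op}$. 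I would use this fallback, since it avoids all conditional-probability bookkeeping.

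Adding the two contributions gives $|R_B(\widehat\mu,\widehat\Sigma)-R_B(\mu,\Sigma)|\le C_{d,B}(\eta_\mu+\eta_\Sigma)$ with
\[
  C_{d,B}=\max\bigl\{d\sqrt B(\pi\lambda_{\min})^{-1/2},\ C d^{3/2}\lambda_{\min}^{-1}\bigr\}.
\]
This constant depends only on the allowed quantities, and it grows like $\sqrt B$. That growth matches the statement's caveat that $C_{d,B}$ need not be uniform as $B\to\infty$. Sharper dimension dependence via Gaussian-max anti-concentration \cite{Chernozhukov2013} is optional and not needed for the stated claim.
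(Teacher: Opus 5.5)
Your proposal is correct and takes the same route as the paper's proof: Lipschitz continuity of the Gaussian orthant probability, obtained from bounded first derivatives in the thresholds and covariance entries on a well-conditioned spectral set (the paper uses $[\lambda_{\min}/2,2\lambda_{\max}]$, which contains your Weyl interval). You make the paper's compactness sketch quantitative, and because your threshold-derivative and score-identity covariance-derivative bounds are uniform in $t$, your explicit $C_{d,B}$ does not depend on $\lambda_{\max}$ or $\norm{\mu}_\infty$ (a mild bonus).
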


\begin{proof}
On the compact set of covariance matrices with spectrum in $[\lambda_{\min}/2,2\lambda_{\max}]$ and margins in a bounded neighborhood of $\mu$, the multivariate Gaussian density and its first derivatives with respect to thresholds and covariance entries are bounded in every fixed finite dimension.  Orthant probabilities are therefore locally Lipschitz in the margin vector and covariance matrix.  The operator-norm perturbation controls all covariance-entry perturbations on the fixed critical band.  Gaussian comparison and anti-concentration results give sharper versions of the same continuity statement \cite{Chernozhukov2013}.
\end{proof}

\section{Deferred proofs for the Gaussian exponent analysis}
\label{app:gaussian_proofs}

\paragraph{Proof of Theorem~\ref{thm:decision_exponent}.}
\begin{proof}
The family $B^{-1/2}Z$ satisfies a finite-dimensional Gaussian large-deviation principle with rate function $J$ on the support subspace $\calR_\Sigma$ \cite{DemboZeitouni1998}.  The displayed bounds are the standard LDP upper and lower bounds applied to the relative closure and relative interior of the failure set.  If the two infima coincide, the exponent exists and equals the candidate value.  For argmin, the failure set is the union of halfspaces $\{y_i\le -\mu_i\}$.  Its relative interior replaces $\le$ with $<$, and because $\mu_i>0$ and $\Sigma_{ii}>0$, the infimum over each open halfspace equals the infimum over its closure.  Minimizing the quadratic form over the coordinate halfspace gives $\mu_i^2/(2\Sigma_{ii})$; the union takes the minimum over noisy competitors.  Deterministic gap coordinates with $\Sigma_{ii}=0$ are handled separately as stated in the theorem.
\end{proof}

\begin{lemma}[Verifiable regularity for argmin and polyhedral exponents]
\label{lem:jregularity}
Let a failure set be a finite union of halfspaces
$\{y:h_k^\top y\le -c_k\}$, $k=1,\ldots,K$.  If $c_k>0$ and
$h_k^\top\Sigma h_k>0$ for every listed halfspace, then each halfspace is
$J$-regular and the finite union is $J$-regular.  In particular, the basic
argmin failure set is $J$-regular whenever $\mu_i>0$ and $\Sigma_{ii}>0$ for all
competitors with nonzero noise.
\end{lemma}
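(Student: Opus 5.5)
The plan is to reduce $J$-regularity of each halfspace to the continuity of a one-dimensional quadratic minimization. Then we pass to the finite union using the fact that infima over unions are minima of infima.

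Work on $\calR_\Sigma$. For the halfspace $H_k=\{y:h_k^\top y\le -c_k\}$, $h_k^\top\Sigma h_k>0$ means the linear functional $y\mapsto h_k^\top y$ is not identically zero on $\calR_\Sigma$. Otherwise $h_k\perp\operatorname{range}(\Sigma)$ would force $\Sigma h_k=0$.

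\textbf{Step 1: compute both infima.} First compute the closed infimum of $J$ over $H_k\cap\calR_\Sigma$. Write $y=\Sigma^{1/2}z$ and minimize $\tfrac12\norm{z}^2$ subject to $(\Sigma^{1/2}h_k)^\top z\le -c_k$. This gives the value $c_k^2/(2h_k^\top\Sigma h_k)$, attained at the finite point $y^\star=-c_k\Sigma h_k/(h_k^\top\Sigma h_k)$.

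Next, the relative interior is $\{y\in\calR_\Sigma:h_k^\top y<-c_k\}$. It is nonempty and open in $\calR_\Sigma$ because the functional is nonconstant there. Its relative closure is the closed halfspace intersected with $\calR_\Sigma$.

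\textbf{Step 2: show the two infima agree.} Take $y_t=(1+t)y^\star$ with $t\downarrow0$. Since $c_k>0$, we have $h_k^\top y_t=-(1+t)c_k<-c_k$, so $y_t$ lies in the relative interior. Moreover $J(y_t)=(1+t)^2J(y^\star)\to J(y^\star)$. So the interior infimum is at most the closed infimum, and the reverse inequality is trivial.

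The role of $c_k>0$ is exactly that the scaling $y\mapsto(1+t)y$ pushes a boundary minimizer into the open halfspace. If $c_k\le0$, the origin would sit in the closed set with $J=0$, and one would have to check the degenerate case separately. It is not needed here.

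\textbf{Step 3: pass to the union.} For $F=\bigcup_k H_k$, the relative closure of the union is the union of the relative closures, since the union is finite. The relative interior of the union contains the union of the relative interiors. Hence
\[
\inf_{\interior F}J\le\min_k\inf_{\interior H_k}J=\min_k\inf_{\cl H_k}J=\inf_{\cl F}J\le\inf_{\interior F}J,
\]
which gives $J$-regularity of $F$.

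\textbf{Step 4: specialize to argmin.} Take $h_k=-e_i$ and $c_k=\mu_i$, since failure means $y_i\le-\mu_i$. Then $h_k^\top\Sigma h_k=\Sigma_{ii}>0$ for the noisy competitors. Coordinates with $\Sigma_{ii}=0$ and $\mu_i>0$ have $H_k\cap\calR_\Sigma=\emptyset$, because $\Sigma e_i=0$ forces $y_i=0$ on $\calR_\Sigma$. Such coordinates contribute $+\infty$ to both infima and can be dropped.

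The only delicate point is keeping every closure and interior relative to $\calR_\Sigma$ when $\Sigma$ is singular. That is handled by the nonvanishing of $h_k$ restricted to $\calR_\Sigma$, as noted at the start.
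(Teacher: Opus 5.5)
Your proposal is correct and follows the paper's proof. Both use the same boundary minimizer $y^\star=-c_k\Sigma h_k/(h_k^\top\Sigma h_k)$ with value $c_k^2/(2h_k^\top\Sigma h_k)$, approximation from the relative open halfspace using $c_k>0$, and passage to the finite union; your scaling $y_t=(1+t)y^\star$ and the inequality chain via $\cl(\bigcup_k H_k)=\bigcup_k\cl H_k$ and $\interior(\bigcup_k H_k)\supseteq\bigcup_k\interior H_k$ just spell out what the paper states in one line.
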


\begin{proof}
For one halfspace, minimizing $\tfrac12y^\top\Sigma^\dagger y$ subject to
$h^\top y\le -c$ gives the boundary minimizer
$y^\star=-c\,\Sigma h/(h^\top\Sigma h)$ on $\calR_\Sigma$, with value
$c^2/(2h^\top\Sigma h)$.  Because $c>0$ and $h^\top\Sigma h>0$, this boundary
point is approached by points in the relative open halfspace, so the closure and interior
infima coincide.  The exponent of a finite union is the minimum of the component
exponents, so regularity passes to the union.
\end{proof}

\paragraph{Proof of Theorem~\ref{thm:diagonal_offdiagonal}.}
\begin{proof}
For a fixed active set, the Gaussian LDP minimization is the quadratic program
\begin{equation}
    \min_y \frac12 y^\top\Sigma^{-1}y
    \quad\text{subject to}\quad
    H_Sy\le -c_S,
\end{equation}
with the stated regularity assumption that these constraints bind and the remaining constraints are inactive.  The KKT equations give
\begin{equation}
    y^\star=-\Sigma H_S^\top\alpha_S,
    \qquad
    H_Sy^\star=-c_S,
    \qquad
    \alpha_S=(H_S\Sigma H_S^\top)^{-1}c_S.
\end{equation}
Strict positivity of $\alpha_S$ ensures that the selected inequalities are genuinely active rather than dominated by a subface.  Substituting $y^\star$ into the quadratic form gives \eqref{eq:general_active_face_exponent}.  A finite union of regular faces contributes the minimum exponent among its components.  The coordinate-selector formula follows by taking $H_S$ to be the row selector, so $H_S\Sigma H_S^\top=\Sigma_{SS}$.
\end{proof}

\paragraph{Joint coefficient and allocation design.}
If shot allocations $B_\ell$ are also variable with $\sum_\ell B_\ell=B$ and independent scale estimates, the finite-budget and per-unit conventions should be kept separate.  If $s_{\ell,i}^2=e_i^\top L K_{\ell\ell}^{\mathrm{shot}}L^\top e_i$ denotes the per-shot variance contribution of scale $\ell$ to critical gap $i$, then the actual finite-budget variance contains
\begin{equation}
    \sum_\ell \frac{c_\ell^2 s_{\ell,i}^2}{B_\ell},
\end{equation}
whereas the corresponding per-unit variance, with $w_\ell=B_\ell/B$, is
\begin{equation}
    \sum_\ell \frac{c_\ell^2 s_{\ell,i}^2}{w_\ell}.
\end{equation}
Thus, in the continuous allocation relaxation, the finite-budget objective contains terms $c_\ell^2/B_\ell$, and the per-unit kernel contains terms $c_\ell^2/w_\ell$; both are convex in the positive allocation variables.  Integer shot allocations can be obtained by rounding with a small finite-budget correction.  This gives a joint decision-aware design problem for extrapolation weights and shot allocation.

\section{Proofs and local Fisher form for the shot-level converse}
\label{app:converse_proofs}

\paragraph{Proof of Theorem~\ref{thm:shot_level_converse}.}
\begin{proof}
Fix $i$ and $\theta'\in\Theta_i^{\mathrm{loss}}$.  If $\delta_B$ is randomized,
realize it as $\delta_B(X,U)$ with a common auxiliary seed law $\lambda$ under
both instances.  Equivalently, work with the product experiments
$P_\nu^{(B)}\otimes\lambda$ and $P_{\theta'}^{(B)}\otimes\lambda$; total
variation is unchanged and the KL divergence is the same because the extra seed
law is common.  Under $\nu$, selecting any competitor is an error; under
$\theta'$, selecting $x_0$ is an error for the loss-of-optimality decision.
Therefore every realization of the same record--seed rule satisfies
\begin{equation}
    \ind\{\delta_B\ne0\}+\ind\{\delta_B=0\}=1.
\end{equation}
Taking the first expectation under $P^{(B)}_\nu$ and the second under
$P^{(B)}_{\theta'}$ gives
\begin{equation}
    \Prob_\nu[\delta_B\ne0]+\Prob_{\theta'}[\delta_B=0]
    \ge 1-\TV\!\left(P^{(B)}_\nu,P^{(B)}_{\theta'}\right).
\end{equation}
Thus the larger of the two errors is at least one half of the right-hand side.
The Bretagnolle--Huber consequence \cite{BretagnolleHuber1979,Tsybakov2009}
$1-\TV(P,Q)\ge \tfrac12\exp[-\KL(P\|Q)]$ gives the factor $1/4$.  Here this form
follows from the usual inequality
$\TV(P,Q)\le\sqrt{1-\exp[-\KL(P\|Q)]}$ and
$1-\sqrt{1-a}\ge a/2$.  Finally,
\begin{equation}
    \KL\!\left(P^{(B)}_\nu\,\middle\|\,P^{(B)}_{\theta'}\right)
    =\sum_{s\in\mathcal T}B_s\,
    \KL\!\left(p_{s,\nu}\,\middle\|\,p_{s,\theta'}\right),
\end{equation}
because the record is a product under the fixed allocation.  Optimizing over
realizable confusers and competitors gives the theorem.  The rule $\delta_B$ was
arbitrary, so the bound applies to all post-processing of the fixed shot record,
not only to linear or plug-in QEM estimators.
\end{proof}

\begin{corollary}[Local Fisher form and Gaussian reduction]
\label{cor:shot_to_gaussian}
Assume the primitive laws $p_{s,\theta}$ are smooth, and define the total Fisher
information under allocation $B_s$ by
\begin{equation}
    \overline{\mathcal I}_B(\theta)=\sum_{s\in\mathcal T}B_s\,\mathcal I_s(\theta),
    \qquad
    \overline{\mathcal I}(w,\theta)=\sum_{s\in\mathcal T}w_s\,\mathcal I_s(\theta).
\end{equation}
Let $g(\theta)=(g_i(\theta))_{i=1}^d$ be the decision-relevant target gap map,
not a method-specific mitigated gap, and let $G=D_\theta g(\theta)$ be its
Jacobian.  In a locally efficient linear-estimator representation, the induced
first-order gap map can be represented by the corresponding pullback $LM_m$ after
choosing primitive coordinates.

Assume first the identifiability condition
\begin{equation}
\label{eq:fisher_gap_identifiable}
    Gh=0\qquad\text{for every }h\in\ker \overline{\mathcal I}(w,\theta).
\end{equation}
Then the target gaps are identifiable under the chosen primitive measurements.
For local alternatives,
\begin{equation}
    \KL\!\left(P^{(B)}_\theta\,\middle\|\,P^{(B)}_{\theta'}\right)
    =\frac12(\theta-\theta')^\top\overline{\mathcal I}_B(\theta)(\theta-\theta')
    +o\!\left(B\norm{\theta-\theta'}^2\right).
\end{equation}
The efficient Gaussian gap covariance is therefore
\begin{equation}
\label{eq:fisher_gap_covariance}
    \Sigma_{\mathrm{eff}}(w)=G\,\overline{\mathcal I}(w,\theta)^\dagger G^\top,
\end{equation}
with the Moore--Penrose inverse on the identifiable tangent space.  Write
$\mu=g(\theta)$ for the positive target gap vector at the true local instance
(or, equivalently, for the local LAN margin vector after centering).  Thus
$\mu_i$ is the target gap $g_i(\theta)$ or its local effective margin, not an
arbitrary method-specific mitigated margin.  In the local small-margin/LAN
regime, if the realizable family contains the least-favorable local direction
that flips margin $i$, the Fisher converse gives the second-order diagonal cost
\begin{equation}
    \frac{\mu_i^2}{2\,\Sigma_{\mathrm{eff},ii}(w)}+o(\norm{\mu}^2).
\end{equation}
Consequently, in the locally efficient, identifiable, small-margin, and realizable
case, the local version of the shot-level exponent, obtained by restricting
confusers to the LAN neighborhood, satisfies
\begin{equation}
    I_{\mathrm{shot,loc}}^\star(w)=
    \min_i\frac{\mu_i^2}{2\,\Sigma_{\mathrm{eff},ii}(w)}+o(\norm{\mu}^2),
\end{equation}
which is the local equal-covariance Gaussian two-point barrier of Corollary~\ref{cor:twopoint_argmin_converse}.  If no
nonlocal physically realizable confuser has smaller information distance, this
local value also gives the corresponding global value of $I_{\mathrm{shot}}^\star(w)$.

If condition~\eqref{eq:fisher_gap_identifiable} fails, the covariance formula is
not a finite efficient covariance for the affected gap coordinates.  In
particular, if a loss-of-optimality direction lies in the Fisher null space while
changing a target gap, then the corresponding information cost is zero at first
order and no fixed-allocation rule can certify that decision exponentially fast.
If the mitigation estimator has larger covariance $\Sigma\succeq\Sigma_{\mathrm{eff}}$
because of constraints such as fixed Richardson moments, the plug-in exponent
$\min_i\mu_i^2/(2\Sigma_{ii})$ may lie strictly below the information barrier;
this gap is a decision-efficiency loss rather than a contradiction.  If the
least-favorable direction is not contained in $\Theta_{\mathrm{phys}}$, the
infimum over $\Theta_i^{\mathrm{loss}}$ can only increase the information cost,
so the Fisher formula remains a safe local benchmark but need not be globally
exact.
\end{corollary}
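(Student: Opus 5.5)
The plan has four steps: expand the KL divergence locally, compute the gap-space Gaussian barrier by a constrained quadratic minimization, pass to rates through Theorem~\ref{thm:shot_level_converse}, and then treat the non-identifiable case separately.

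\textbf{KL expansion.} Because the primitive laws $p_{s,\theta}$ are smooth, each per-setting divergence admits the standard second-order expansion
\[
\KL(p_{s,\theta}\|p_{s,\theta'})=\tfrac12(\theta-\theta')^\top\mathcal I_s(\theta)(\theta-\theta')+o(\norm{\theta-\theta'}^2).
\]
The expansion uses the vanishing score mean and Fisher information as the Hessian of KL at the diagonal. Summing with the product decomposition $\KL(P^{(B)}_\theta\|P^{(B)}_{\theta'})=\sum_s B_s\KL(p_{s,\theta}\|p_{s,\theta'})$, already used in the proof of Theorem~\ref{thm:shot_level_converse}, gives the displayed local KL formula with total Fisher matrix $\overline{\mathcal I}_B(\theta)$. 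After dividing by $B$, this becomes $\tfrac12 h^\top\overline{\mathcal I}(w,\theta)h$ with $h=\theta-\theta'$.

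\textbf{Gap-space reduction.} For competitor $i$, a local confuser must move the target gap across zero. To first order this means $e_i^\top G h\le-\mu_i$, with $\mu=g(\theta)$. I would therefore minimize $\tfrac12 h^\top\overline{\mathcal I}h$ subject to $e_i^\top Gh=-\mu_i$.

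Under the identifiability condition~\eqref{eq:fisher_gap_identifiable}, the row $G^\top e_i$ lies in $(\ker\overline{\mathcal I})^\perp=\operatorname{range}\overline{\mathcal I}$. A Lagrange/KKT computation, the same as in Lemma~\ref{lem:jregularity} with $\Sigma$ replaced by $\overline{\mathcal I}^\dagger$, then gives the optimizer
\[
h^\star=-\mu_i\,\overline{\mathcal I}^\dagger G^\top e_i/\bigl(e_i^\top G\overline{\mathcal I}^\dagger G^\top e_i\bigr)
\]
and the value $\mu_i^2/(2\Sigma_{\mathrm{eff},ii}(w))$. This identifies $\Sigma_{\mathrm{eff}}=G\overline{\mathcal I}^\dagger G^\top$ in~\eqref{eq:fisher_gap_covariance}. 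The remainder terms are $o(\norm{h^\star}^2)$, and the gap map is linearized with error $O(\norm{h}^2)$. Both are absorbed into $o(\norm\mu^2)$ because $\norm{h^\star}=O(\norm\mu)$ in the small-margin regime.

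\textbf{Passing to the rate.} The barrier is obtained by inserting this realizable least-favorable $h^\star$ into Theorem~\ref{thm:shot_level_converse} and minimizing over $i$. This recovers the local form of $I^\star_{\mathrm{shot}}$, which coincides with Corollary~\ref{cor:twopoint_argmin_converse} when $\Sigma=\Sigma_{\mathrm{eff}}$. The statement about the global value is immediate: the global infimum is the minimum of the local and nonlocal infima. If $h^\star\notin\Theta_{\mathrm{phys}}$, restricting the infimum only increases its value, which gives the safe-benchmark remark.

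\textbf{Main obstacle: the non-identifiable case.} This is where I expect the difficulty to lie, and I would handle it by an explicit construction. Pick $h\in\ker\overline{\mathcal I}$ with $e_i^\top Gh\neq0$ and scale it so that the gap flips. The quadratic KL term is then zero, so the KL is $o(B\norm{h}^2)$ and the converse bound gives a zero exponent at first order. The estimator-dependent remark follows from the Loewner order $\Sigma\succeq\Sigma_{\mathrm{eff}}$: it implies $\Sigma_{ii}\ge\Sigma_{\mathrm{eff},ii}$, hence the plug-in exponent is at most the barrier. The care needed is in making the uniformity of the $o(\cdot)$ remainders explicit over a LAN neighborhood, which I would assume via a local $C^2$ bound on $\theta\mapsto p_{s,\theta}$.
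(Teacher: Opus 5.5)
Your proposal is correct and takes essentially the paper's route. The paper gives no separate proof: the corollary's own text sketches exactly your chain. That chain is the per-setting Fisher expansion summed via the product KL decomposition of Theorem~\ref{thm:shot_level_converse}, the least-favorable flipping direction with cost $\mu_i^2/(2\Sigma_{\mathrm{eff},ii}(w))$, identification with the two-point barrier of Corollary~\ref{cor:twopoint_argmin_converse}, and then the global/local, Loewner and restriction-monotonicity remarks. Your parameter-space Lagrange computation with $\overline{\mathcal I}^\dagger$ supplies the step the paper leaves implicit.

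Two small points. With your convention $h=\theta-\theta'$, the flipping constraint should read $e_i^\top Gh\ge\mu_i$ rather than $e_i^\top Gh\le-\mu_i$; this is harmless because the quadratic objective is even. A Fisher-null flipping direction only makes the exponent vanish at the quadratic order of the expansion, which is the sense in which both your argument and the statement's ``no exponential certification'' claim hold.
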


\begin{remark}[Relation to QEM estimation-cost lower bounds]
Existing universal QEM limits bound the sampling cost of estimating expectation
values, usually through distinguishability or Fisher-information geometry of the
underlying noisy quantum experiment \cite{Takagi2022,Takagi2023,Tsubouchi2022,Quek2024}.
Theorem~\ref{thm:shot_level_converse} is the fixed-allocation decision analogue:
the cost is the information distance to the nearest confuser inside $\Theta_{\mathrm{phys}}$ that
makes the incumbent decision lose strict optimality.  The present theorem uses the classical Fisher
information of the chosen primitive measurements.  A measurement-independent
quantum-Fisher or quantum-Chernoff decision converse, optimizing also over
measurement choices and channel confusers, is a stronger sequel rather than a
claim of this paper.
\end{remark}

\paragraph{Proof of Corollary~\ref{cor:twopoint_argmin_converse}.}
\begin{proof}
This is the Gaussian localization specialization of Theorem~\ref{thm:shot_level_converse}; we give the explicit two-point computation.
For a fixed alternative $\nu^{(i)}$, write
\begin{equation}
    P=\calN(\mu,\Sigma/B),
    \qquad
    P'=\calN(m^{(i)},\Sigma/B).
\end{equation}
Under $P$, selecting a competitor is an error.  Under $P'$, selecting $x_0$ is
an error for the loss-of-optimality decision.  Hence, for every output of the
same rule,
\begin{equation}
    \ind\{\delta_B\ne0\}+\ind\{\delta_B=0\}=1.
\end{equation}
Taking the first expectation under $P$ and the second under $P'$ gives
\begin{equation}
    \Prob_P[\delta_B\ne0]+\Prob_{P'}[\delta_B=0]
    \ge 1-\TV(P,P').
\end{equation}
Therefore the larger of the two errors is at least $(1-\TV(P,P'))/2$.  We use
the following consequence of the Bretagnolle--Huber inequality
\cite{BretagnolleHuber1979,Tsybakov2009}:
\begin{equation}
    1-\TV(P,P')\ge \frac12\exp[-\KL(P\|P')].
\end{equation}
Indeed, the common form $\TV(P,P')\le \sqrt{1-\exp[-\KL(P\|P')]}$ implies this
bound by $1-\sqrt{1-a}\ge a/2$ for $a\in[0,1]$.  Hence the larger error is at
least $\frac14\exp[-\KL(P\|P')]$.  Since the covariances are equal,
\begin{equation}
    \KL(P\|P')
    =
    \frac{B}{2}(\mu-m^{(i)})^\top\Sigma^{-1}(\mu-m^{(i)}),
\end{equation}
which proves Eq.~\eqref{eq:twopoint_pair_bound}.

It remains to find the closest loss-of-optimality confuser.  Let
$u=\mu-m^{(i)}$.  The constraint $e_i^\top m^{(i)}\le0$ is equivalent to
$e_i^\top u\ge\mu_i$, and the optimum binds at $e_i^\top u=\mu_i$.  Minimizing
$u^\top\Sigma^{-1}u$ subject to $e_i^\top u=\mu_i$ gives, by Lagrange
multipliers,
\begin{equation}
    u^\star=\frac{\mu_i}{\Sigma_{ii}}\Sigma e_i,
\end{equation}
with value $\mu_i^2/\Sigma_{ii}$.  The closest alternative is therefore
\begin{equation}
    m^{(i),\star}
    =
    \mu-\frac{\mu_i}{\Sigma_{ii}}\Sigma e_i,
\end{equation}
which puts the $i$th margin exactly on the boundary.  Since the decision target is
strict certification of $x_0$, the boundary case $e_i^\top m^{(i)}=0$ is already
a loss-of-optimality confuser.  Replacing the boundary tie by
$e_i^\top m^{(i)}=-\epsilon$ and taking $\epsilon\downarrow0$ gives a strict
competitor with the same exponent.  Minimizing over competitors gives
Eq.~\eqref{eq:fixed_budget_argmin_star}.  The singular case follows by
restricting the quadratic form to $\operatorname{range}(\Sigma)$ and replacing
$\Sigma^{-1}$ by $\Sigma^\dagger$.  If $\Sigma_{ii}=0$, positive semidefiniteness
implies $\Sigma e_i=0$, so no absolutely continuous Gaussian location shift within
the support can move coordinate $i$ across the boundary at finite information cost.
\end{proof}

\begin{remark}[If a named competitor must be the global minimizer]
The corollary above concerns the instability event relevant to certifying the
incumbent decision: $x_0$ ceases to be strictly optimal.  If instead one asks for
the stronger event that a specified competitor $x_i$ becomes the global
minimizer, the feasible cone is
\begin{equation}
    \calC_i
    =
    \{m:\ m_i\le0,\ m_j-m_i\ge0\ \text{for all }j\ne i\},
\end{equation}
and the corresponding Gaussian information cost is
\begin{equation}
    I_i^{\mathrm{argmin}}
    =
    \frac12\inf_{m\in\calC_i}
    (\mu-m)^\top\Sigma^{-1}(\mu-m).
\end{equation}
The diagonal value $\mu_i^2/(2\Sigma_{ii})$ holds for this stronger cone only
when the boundary minimizer
\begin{equation}
    m^\star=\mu-\frac{\mu_i}{\Sigma_{ii}}\Sigma e_i
\end{equation}
also satisfies $m^\star_j-m^\star_i\ge0$ for all $j\ne i$.  Otherwise additional
constraints bind and the cost involves the relevant inverse covariance subblock,
as in Theorem~\ref{thm:diagonal_offdiagonal}.  This stronger geometry is not
needed for the main loss-of-optimality claim.
\end{remark}

\paragraph{Proof of Corollary~\ref{cor:plugin_argmin_optimal}.}
\begin{proof}
The equality on the left is Proposition~\ref{prop:gaussian_risk} applied to the argmin failure set, which gives the union-of-halfspaces exponent $\min_i\mu_i^2/(2\Sigma_{ii})$.  Corollary~\ref{cor:twopoint_argmin_converse} shows that no fixed-budget rule can uniformly beat this exponent over the local two-point pair consisting of $\nu$ and the closest loss-of-optimality confuser.  Therefore the plug-in rule matches the local two-point loss-of-optimality exponent at the original instance.  This is a local-minimax/two-point statement, not pointwise universal optimality at $\nu$: a constant rule that always outputs the true $x_0$ would have zero error at that single instance but would fail on the confuser.  The claim is also intentionally weaker than a full paired achievability theorem, which would control $\Prob_{\nu^{(i)}}[\delta_B^{\mathrm{plug}}=0]$ under the boundary confuser and would require specifying a symmetric testing rule or a full local-minimax procedure; such a sequential or instance-optimal construction is outside the scope of the present paper.
\end{proof}

\begin{remark}[What this converse does and does not say]
The result is deliberately local-minimax, not pointwise.  One cannot lower-bound the error of every rule at a single instance $\nu$, since the constant rule that always outputs the true $x_0$ has zero error at that instance.  The lower bound says instead that no rule can be simultaneously reliable on $\nu$ and on the closest confuser under which $x_0$ loses strict optimality.  The Gaussian converse also assumes equal covariance across the two local instances.  This is the appropriate first-order location model when the alternative perturbs the ideal landscape margins while keeping the physical device-noise covariance and mitigation map fixed.  If the primitive covariance changes with the alternative, the Gaussian KL includes both mean and covariance contributions, and the diagonal formula is only the leading local location term.  A channel-level decision converse would instead place the two alternatives at the level of noisy quantum states, primitive measurements, or channel parameters and use data processing from quantum Fisher, Chernoff, or relative-entropy information to gap decisions.  A full adaptive fixed-confidence theory over such channel alternatives would require a separate transport analysis over primitive channel laws and a tracking rule; that is a natural sequel, not an assumption in this paper.  When $\Sigma$ is singular, both the bound and the plug-in achievability live on $\operatorname{range}(\Sigma)$: the optimal confuser direction $u^\star=(\mu_i/\Sigma_{ii})\Sigma e_i$ lies in $\operatorname{range}(\Sigma)$ when $\Sigma_{ii}>0$, while $\Sigma_{ii}=0$ implies $\Sigma e_i=0$, so that competitor is noiseless, has infinite information distance, and is excluded as a limiting confuser.
\end{remark}

\begin{remark}[Diagonal converse versus composite decisions]
For a single competitor crossing the argmin boundary, the optimal confuser moves along $\Sigma e_i$ but the information cost is diagonal: $\mu_i^2/\Sigma_{ii}$.  This matches the diagonal/off-diagonal classification in Theorem~\ref{thm:diagonal_offdiagonal}.  Composite alternatives that force a set $S$ of simultaneous inequalities to bind lead formally to the quadratic value $\mu_S^\top\Sigma_{SS}^{-1}\mu_S$ and are the right geometry for top-$k$, full-ranking, or conjunctive failure events.  Proving a complete converse/achievability theory for those composite decisions, especially under adaptive shot allocation, is a natural sequel rather than an assumption of the present paper.
\end{remark}

\section{Hardware micro-cell stress test: protocol and full results}
\label{app:hardware_microcell}

A small pre-registered, exploratory stress test at the hardware level ran a four-qubit micro-cell (path-graph MaxCut, QAOA depth 1, four candidates, three active gaps) on the 156-qubit device \texttt{ibm\_marrakesh}, with blind calibration-based qubit selection, interleaved method scheduling, and cluster bootstrap by epoch; one initially selected path was discarded when its two-qubit gate failed the pre-registered characterization quality gate (held-out generalization error $0.74$, consistent with a faulty edge at runtime), the path was re-selected by the same blind rule under a sealed amendment before any decision-covariance data were collected, and the experiment supports the pullback prediction $\Sigma^{\rm pred}$ versus the directly estimated covariance $\widehat\Sigma$, not decision advantage.

On the final path, the learned sparse Pauli--Lindblad channel \cite{vandenBerg2023} passed the characterization gate (held-out generalization error $0.095$, support stability $1.0$; rates on the two edges shared across runs reproduced to within calibration noise).  With $R=60$ independent repetitions over 10 epochs, the pulled-back prediction $L M C_{\rm dev} M^\top L^\top$ matched the directly estimated decision covariance within relative Frobenius error $0.126$ (Accuracy-ZNE), $0.173$ (Raw), and $0.209$ (DA-ZNE) against the conservative $0.20$ band, and the predicted ZNE-to-Raw variance amplification ($\sim 7.4\times$) appeared in the measured kernels.

The stricter pre-registered decision-direction criterion (projected ranking error below $0.25$ times the predicted risk margin, at an upper 95\% cluster-bootstrap bound) was not met: one method pair passed at the point estimate but not at the bootstrap bound at $R=60$, and one pair has a near-zero predicted margin that makes its threshold unattainably strict.  The residual mismatch between the learned-channel prediction and the directly estimated kernels is consistent with the caveat in the pullback anchoring analysis that non-shot-scaling calibration drift is excluded from the shot-scaling $C_{\mathrm{dev}}$ and can contribute device-level covariance beyond the learned-channel prediction (Section~\ref{subsec:anchoring_cdev}).  Per the pre-registered language rule we therefore make no device-level claim; the hardware outcome is reported as a stress test.  Larger repetition budgets and a pre-registered margin floor for active pairs would sharpen this test and are left for future work.

\section{Extended experimental details}
\label{app:extended_experiments}

\paragraph{Covariance pre-checks and channel-learning diagnostics.}
A preliminary numerical check asked whether using the full gap covariance can matter.  In the depolarizing covariance pre-check, the full covariance changed the risk level appreciably at low budget---the largest observed full-minus-diagonal shift was $-0.2865$ for P002 Accuracy-ZNE at $B=256$---and the shifts became negligible at high budget.  Nevertheless, \texttt{selector\_changed} was false and \texttt{selector\_full} equaled \texttt{selector\_diagonal} in all 72 method-budget rows.  Figure~\ref{fig:gate0_full_diag} shows this marginal-control behavior: off-diagonal covariance can move the level of the risk without flipping the decision, exactly the common-mode/marginal regime isolated by Theorem~\ref{thm:nogo_marginal}.

\begin{figure}[t]
    \centering
    \includegraphics[width=0.82\linewidth]{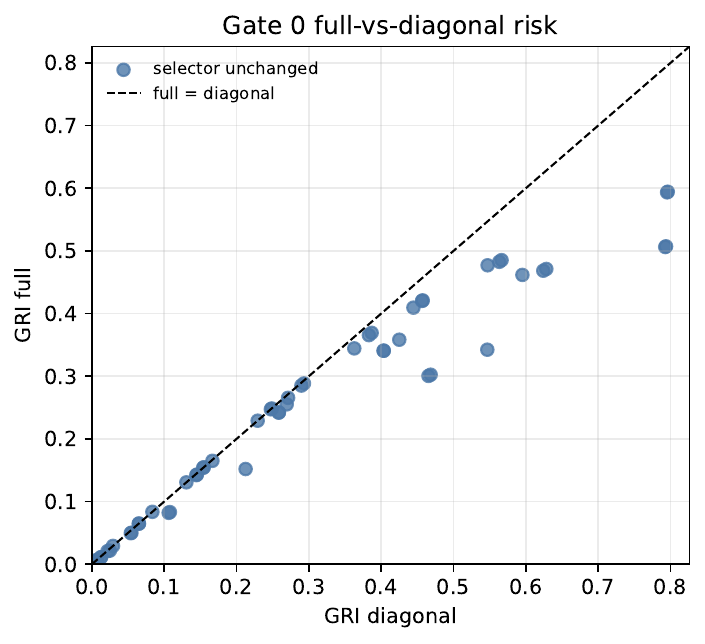}
    \caption{Preliminary full-vs-diagonal covariance check.  The full covariance changes the magnitude of Gaussian risk at low budget, by up to about $0.29$ in absolute full-minus-diagonal shift, and this effect becomes negligible at high budget.  It does not change the selected method: \texttt{selector\_changed} is false, and the full and diagonal selectors agree, in all 72 rows.  Thus off-diagonal covariance moves the level of risk without flipping the decision, consistent with the marginal/common-mode no-go regime of Theorem~\ref{thm:nogo_marginal}.}
    \label{fig:gate0_full_diag}
\end{figure}

Additional preliminary checks showed that the analytic primitive means matched the shot data at $B_{\max}=8192$ with maximum $z$-scores below 2.76 across P001--P004; and that the learned uniform two-parameter depolarizing channel recovered the true global channel with relative weighted error $e_C=0.0096$.  That channel-learning result is not a claim of spatially resolved sparse Pauli-Lindblad recovery \cite{vandenBerg2023}; heterogeneous or device-resolved channel learning is left outside this numerical study.

\paragraph{Secondary calibrated-device analyses.}
The Holm-corrected secondary analyses are descriptive rather than headline: the uniform-frozen in-band subset, the hard ideal-margin tercile, and AUC were significant, whereas the easy and medium terciles were not.

\section{Slepian comparison under safe conditions}
\label{app:slepian_safe}

Slepian-type reasoning is useful but should not be overextended.  The following restricted form is the safe version used in this paper.

\begin{proposition}[Safe equal-threshold Gaussian comparison]
Let $Z_A,Z_B$ be centered Gaussian vectors satisfying
\begin{equation}
    \Var(Z_{A,i})=\Var(Z_{B,i}),\qquad i=1,\dots,d,
\end{equation}
and
\begin{equation}
    \Cov(Z_{A,i},Z_{A,j})\le \Cov(Z_{B,i},Z_{B,j}),\qquad i\ne j.
\end{equation}
Then, by Slepian's comparison inequality \cite{Slepian1962}, for every common threshold $t$,
\begin{equation}
    \Prob\left[\max_i Z_{B,i}\le t\right]
    \ge
    \Prob\left[\max_i Z_{A,i}\le t\right].
\end{equation}
\end{proposition}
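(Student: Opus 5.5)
The plan is to deduce the statement from the classical Slepian inequality via a Gaussian interpolation argument, which is self-contained and makes the role of the hypotheses visible. First I reduce to the nondegenerate case. Replace $Z_A,Z_B$ by $Z_A+\epsilon W$ and $Z_B+\epsilon W$, where $W\sim\calN(0,I_d)$ is independent. This preserves both the equal-variance hypothesis and the off-diagonal ordering, and it makes both covariances positive definite. Since $t\mapsto\Prob[\max_i Z_i\le t]$ is a distribution function and $Z+\epsilon W\to Z$ in law, the inequality passes to the limit $\epsilon\downarrow0$ at continuity points $t$. It then extends to every $t$ by right-continuity of the two distribution functions.

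For positive-definite covariances $\Sigma_A,\Sigma_B$, I define the interpolation $\Sigma(s)=(1-s)\Sigma_A+s\Sigma_B$ for $s\in[0,1]$; each $\Sigma(s)$ is positive definite. Set $\phi(s)=\Prob[Z(s)\le t\1]$ with $Z(s)\sim\calN(0,\Sigma(s))$. By the heat-equation identity for Gaussian densities, $\partial p_\Sigma/\partial\sigma_{ij}=\partial^2 p_\Sigma/\partial z_i\partial z_j$ for $i\ne j$, and the diagonal derivatives do not arise because the diagonal of $\Sigma(s)$ is constant in $s$. Hence
\[
  \phi'(s)=\sum_{i<j}(\Sigma_B-\Sigma_A)_{ij}\int_{(-\infty,t]^d}\frac{\partial^2 p_{\Sigma(s)}}{\partial z_i\,\partial z_j}(z)\,dz .
\]
Integrating out $z_i$ and $z_j$ turns each integral into the density of $(Z_i(s),Z_j(s))$ at $(t,t)$ times the conditional probability that the remaining coordinates are $\le t$. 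That quantity is nonnegative. Differentiation under the integral sign is justified by dominated convergence, using smoothness and Gaussian decay of $p_\Sigma$ on the compact segment of positive-definite matrices, as in the proof of Proposition~\ref{prop:first_order_decision_kernel_comparison}.

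Each coefficient $(\Sigma_B-\Sigma_A)_{ij}$ is $\ge0$ by hypothesis, so $\phi'\ge0$ and therefore $\phi(1)\ge\phi(0)$. This is exactly $\Prob[\max_i Z_{B,i}\le t]\ge\Prob[\max_i Z_{A,i}\le t]$. Alternatively, one can cite Slepian's lemma \cite{Slepian1962} directly after checking that its hypotheses (equal variances, ordered covariances) match ours.

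The main obstacle is the derivative computation. I must verify the sign of the boundary term obtained after the two integrations, and confirm that equal diagonals remove the $\partial^2/\partial z_i^2$ terms, which would otherwise have no sign. Under the stated hypotheses this is routine, so the remaining care is only in the degenerate-limit step, which the $\epsilon$-regularization handles. Finally, I will note in the text that the conclusion concerns only a common threshold $t$. For argmin failure with unequal margins one must first standardize, and equal variances are then needed after standardization; this is why the proposition is labelled safe.
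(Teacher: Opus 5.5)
Your argument is correct, but it takes a different route from the paper. The paper gives no argument beyond invoking Slepian's inequality \cite{Slepian1962}, which you mention only as a one-line alternative. You instead reprove the inequality by Gaussian interpolation:
\begin{itemize}
\item you regularize with $\epsilon W$ so that both covariances become positive definite while the diagonals stay equal and the off-diagonal order is unchanged;
\item you differentiate $\phi(s)=\Prob[Z(s)\le t\1]$ along $\Sigma(s)=(1-s)\Sigma_A+s\Sigma_B$;
\item you use the heat identity, and because the diagonals coincide the sign-indefinite terms $\tfrac12\partial_{z_i}^2 p$ drop out;
\item you identify each surviving off-diagonal term as $p_{ij}(t,t)$ times a conditional orthant probability, so $\phi'(s)\ge 0$.
\end{itemize}

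The citation buys brevity. Your route buys three things.
\begin{itemize}
\item An explicit approximation step for singular covariances, which the statement permits and the standard interpolation proof does not cover directly. Your passage to all $t$ through common continuity points and right-continuity is genuinely needed: the maximum can have an atom when some coordinate is degenerate, as with $Z=(0,Y)$ at $t=0$.
\item A precise location for where equal variances are used. This is exactly why the paper labels the comparison \emph{safe} and restricts it to a common threshold.
\item A connection to tools already in the paper. Your $\phi'(s)$ is the Plackett covariance derivative of an orthant probability contracted against $\Sigma_B-\Sigma_A$, the same object that underlies Proposition~\ref{prop:first_order_decision_kernel_comparison}. In this light, Slepian's comparison is the integrated, sign-definite counterpart of that first-order kernel comparison, valid along the whole segment rather than only to first order.
\end{itemize}
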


In decision language, after standardizing equal margins and equal marginal gap variances, stronger positive correlation among gap errors can reduce the probability that at least one coordinate crosses the failure threshold.  This comparison does not imply $\Sigma_B\preceq\Sigma_A$, nor does it order unequal-threshold or arbitrary polyhedral decisions.  For those cases the paper uses exact multivariate Gaussian probabilities, Gaussian-max bounds, or direct plug-in estimates instead of claiming a total covariance ordering.

\section{Positive-affine and structured residual controls}

\begin{proposition}[Positive-affine distortions preserve decisions]
If
\begin{equation}
    \widehat F_m(x)=\alpha F(x)+c,
    \qquad \alpha>0,
\end{equation}
then the induced weak ordering is preserved exactly, including all exact ties.  Consequently, every tie-aware ranking, argmin, argmax, and top-$k$ decision is preserved, despite nonzero MSE unless $\alpha=1$ and $c=0$.  If an external tie-breaking rule is not fixed, the weak ordering is preserved but the selected representative among tied points may be rule-dependent.
\end{proposition}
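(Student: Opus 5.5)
The plan is to reduce everything to one strictly increasing map. Put $g(t)=\alpha t+c$. Since $\alpha>0$, $g$ is strictly increasing on $\R$. That gives, for every pair of candidates $x_i,x_j$,
\begin{equation}
    F(x_i)<F(x_j)\iff \widehat F_m(x_i)<\widehat F_m(x_j),
    \qquad
    F(x_i)=F(x_j)\iff \widehat F_m(x_i)=\widehat F_m(x_j).
\end{equation}
The second equivalence uses injectivity of $g$. So the weak order $\preceq_F$ induced on $\calX$ coincides with $\preceq_{\widehat F_m}$, including the exact tie classes.

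An equivalent route stays inside the quotient formalism. For every contrast $e_i-e_j$ we have $(\widehat F_m)(x_i)-(\widehat F_m)(x_j)=\alpha\,(F(x_i)-F(x_j))$, because the offset $c\1$ is killed by $L$ (common-mode gauge invariance). So $L\widehat F_m=\alpha\,LF$, and every sign pattern of gap vectors is preserved since $\alpha>0$.

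Next I would note that a tie-aware argmin set, argmax set, ranking (as a weak order), or top-$k$ set is a function only of the weak order. The top-$k$ set is taken tie-aware, for example as the set of points with fewer than $k$ strictly better competitors. Hence each of these is preserved. Equivalently, each is a finite gap decision whose success set is determined by the signs of pairwise contrasts, and scaling all contrasts by $\alpha>0$ leaves those signs unchanged.

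For the MSE clause, compute the deterministic residual $E_m=(\alpha-1)F+c\1$. Then $\MSE=\frac1n\norm{(\alpha-1)F+c\1}_2^2$, which vanishes only when $(\alpha-1)F=-c\1$.

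Here I expect the one real obstacle. The statement's phrase ``unless $\alpha=1$ and $c=0$'' is literally too strong. If $F$ is itself constant, $F=f_0\1$, then the choice $\alpha\ne1$, $c=(1-\alpha)f_0$ also gives zero MSE. However, the unique-optimum assumption with $n\ge2$ forces $F$ to be non-constant. Under that standing assumption, $(\alpha-1)F\in\linspan\{\1\}$ forces $\alpha=1$, and then $c=0$. I would state this reliance explicitly.

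Finally, on tie-breaking: when ties exist, the weak order fixes the set of tied optimal (or boundary) points. Which representative gets selected depends on the external rule applied to that set. So the set-valued decision is preserved, while a single returned representative is preserved only if the tie-breaking rule is fixed independently of the values.
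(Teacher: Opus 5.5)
Your proposal is correct and follows the paper's argument: the paper's proof is your second route, noting that $\widehat F_m(x_i)-\widehat F_m(x_j)=\alpha(F(x_i)-F(x_j))$ preserves all signs and all zero gaps, and hence preserves the weak order and every tie-consistent decision. Your treatment of the MSE clause is a correct refinement that the paper's proof leaves implicit. With $\alpha\ne1$, the condition $(\alpha-1)F+c\1=0$ would force $F$ to be constant, and the unique-optimum assumption with $n\ge2$ rules that out.
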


\begin{proof}
Pairwise gaps transform as
\begin{equation}
    \widehat F_m(x_i)-\widehat F_m(x_j)=\alpha(F(x_i)-F(x_j)),
\end{equation}
which preserves all signs and all zero gaps.  Hence it preserves the weak ordering and every decision rule that treats ties consistently.
\end{proof}

After removing positive-affine and common-mode components, decision-relevant structured residuals have the form
\begin{equation}
    \widehat F_m(x)=\alpha F(x)+c+r_m(x)+\xi_m(x),
    \qquad \alpha>0,
\end{equation}
with nonuniform gap distortion.  For nonzero reference margins, a simple diagnostic for nonuniform distortion relative to the reference optimum is
\begin{equation}
    \exists i,j\ \text{with}\ \Delta_i\Delta_j\ne0:\quad
    \frac{(Lr_m)_i}{\Delta_i}\ne \frac{(Lr_m)_j}{\Delta_j}.
\end{equation}
This diagnostic is not a necessary or sufficient condition for a particular decision change: nonuniform distortion may remain below all relevant margins, and decisions not referenced only to $x_0$ may involve other contrast rows.  It marks the regime where coherent errors, nonuniform readout bias, crosstalk, drift, or transfer mismatch can create accuracy--decision separation.

\section{Sub-Gaussian fallback bound}

If the finite-budget residual gap $G_m^{(B)}=L\xi_m^{(B)}$ is not Gaussian but each coordinate is sub-Gaussian with finite-budget proxy variance $v_i^{(B)}$, then for positive effective margins $\mu_{m,i}>0$,
\begin{equation}
    R_m\le \sum_{i=1}^d \exp\left(-\frac{\mu_{m,i}^2}{2v_i^{(B)}}\right).
\end{equation}
Equivalently, if $G_m^{(B)}=B^{-1/2}H_m$ and the per-unit coordinates of $H_m$ have proxy variances $s_i^2$, then
\begin{equation}
    R_m\le \sum_{i=1}^d \exp\left(-\frac{B\mu_{m,i}^2}{2s_i^2}\right).
\end{equation}
This union bound is conservative and ignores correlation, but it shows that the gap-space formulation does not rely on exact normality.  If some $\mu_{m,i}\le0$, that gap is not asymptotically certified by this method and should be treated separately rather than inserted into the positive-margin tail bound.

\section{Top-k, ranking, optimizer-step, and epsilon-tolerant rules}
\label{app:topk_rules}

For tolerant decisions, variational landscapes often have near-degenerate minima, plateaus, or symmetries.  Define the $\epsilon$-optimal set
\begin{equation}
    \calX^\star_\epsilon=\{x\in\calX:\ F(x)\le F^\star+\epsilon\},
    \qquad F^\star=\min_x F(x).
\end{equation}
Let the empirical minimizer set be
\begin{equation}
    \widehat{\calM}_m=\argmin_x \widehat F_m(x).
\end{equation}
To keep the gap characterization independent of arbitrary tie-breaking rules, we use the conservative tolerant success event
\begin{equation}
    \widehat{\calM}_m\subseteq\calX_\epsilon^\star,
\end{equation}
so ties between an outside point and an inside empirical minimizer are counted as failures.  If instead one evaluates a selected minimizer $\widehat x_m$ under a specified tie-breaking rule, the risk is tie-rule dependent and the boundary inequalities below must be modified accordingly.

\begin{proposition}[Conservative tolerant empirical argmin as gap decision]
The conservative tolerant risk
\begin{equation}
    R_m^\epsilon=\Prob[\widehat{\calM}_m\not\subseteq\calX_\epsilon^\star]
\end{equation}
can be written as a finite gap-decision risk over contrasts comparing outside points $i\notin\calX_\epsilon^\star$ against inside points $j\in\calX_\epsilon^\star$.  Specifically, conservative failure occurs exactly when
\begin{equation}
    \min_{i\notin\calX_\epsilon^\star}\widehat F_m(i)
    \le
    \min_{j\in\calX_\epsilon^\star}\widehat F_m(j),
\end{equation}
that is, when
\begin{equation}
    \exists i\notin\calX_\epsilon^\star\quad \forall j\in\calX_\epsilon^\star:
    \widehat F_m(i)-\widehat F_m(j)\le0.
\end{equation}
Therefore the relevant covariance is again $L_\epsilon K_mL_\epsilon^\top$ for the corresponding pairwise contrast matrix $L_\epsilon$.  With a tie-breaking rule that always favors inside candidates when possible, the outside-inside equality cases would no longer be failures; for the simple characterization above, outside ties are deliberately counted against tolerant success.
\end{proposition}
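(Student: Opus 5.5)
The plan is to write the conservative failure event as a set-inclusion failure for the empirical minimizer set, rewrite it as a min-versus-min comparison, and then as a finite union over outside points of intersections over inside points of half-space events in contrast coordinates. Write $\calX^{\mathrm{in}}=\calX_\epsilon^\star$ and $\calX^{\mathrm{out}}$ for its complement. If $\calX^{\mathrm{out}}$ is empty the risk is zero, and the statement holds trivially with an empty contrast matrix. Otherwise both sets are nonempty, since $\calX^{\mathrm{in}}$ always contains a true minimizer.

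\textbf{Step 1: the min-comparison equivalence.} Let $m_{\mathrm{in}}=\min_{j\in\calX^{\mathrm{in}}}\widehat F_m(j)$ and $m_{\mathrm{out}}=\min_{i\in\calX^{\mathrm{out}}}\widehat F_m(i)$. The global empirical minimum is $\min(m_{\mathrm{in}},m_{\mathrm{out}})$, and $\widehat{\calM}_m$ is the set of points attaining it.

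If $m_{\mathrm{out}}\le m_{\mathrm{in}}$, the outside point attaining $m_{\mathrm{out}}$ attains the global minimum. Hence $\widehat{\calM}_m\not\subseteq\calX^{\mathrm{in}}$. This is exactly where ties between outside and inside points are counted as failures.

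Conversely, if $m_{\mathrm{out}}>m_{\mathrm{in}}$, every outside value strictly exceeds the global minimum $m_{\mathrm{in}}$. Hence $\widehat{\calM}_m\subseteq\calX^{\mathrm{in}}$.

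\textbf{Step 2: quantifier form.} The condition $m_{\mathrm{out}}\le m_{\mathrm{in}}$ holds iff some $i\in\calX^{\mathrm{out}}$ satisfies $\widehat F_m(i)\le m_{\mathrm{in}}$. That in turn holds iff $\widehat F_m(i)-\widehat F_m(j)\le0$ for all $j\in\calX^{\mathrm{in}}$, because a value is at most a finite minimum iff it is at most each term. This gives the displayed $\exists\forall$ characterization.

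\textbf{Step 3: casting as a finite gap decision.} Let $L_\epsilon$ have one row $e_i-e_j$ for each pair $(i,j)\in\calX^{\mathrm{out}}\times\calX^{\mathrm{in}}$. Then $L_\epsilon\1=0$. Set $b_D=L_\epsilon F$, so that
\begin{equation}
    L_\epsilon\widehat F_m=b_D+L_\epsilon E_m .
\end{equation}
Take the failure region to be
\begin{equation}
    \calS^c=\bigcup_{i\in\calX^{\mathrm{out}}}\ \bigcap_{j\in\calX^{\mathrm{in}}}\{u:\ u_{(i,j)}\le0\}.
\end{equation}
This is a finite union of closed polyhedra, hence Borel. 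By Steps 1 and 2,
\begin{equation}
    R_m^\epsilon=\Prob[b_D+L_\epsilon E_m\notin\calS],
\end{equation}
which is a finite gap decision in the sense of the definition.

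By Theorem~\ref{thm:minimal_gap_law}(a), this risk is determined by $\mathcal L(L_\epsilon E_m)$. Under the Gaussian or second-order reduction, the relevant covariance is $\Cov(L_\epsilon\xi_m)=L_\epsilon K_mL_\epsilon^\top$ by bilinearity of covariance.

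\textbf{Main obstacle.} The only delicate point is the boundary bookkeeping in Step 1: checking that non-strict inequalities encode exactly the conservative convention. I would verify this explicitly, and also note the alternative case. Under a tie-breaking rule that favours inside candidates, the pairwise inequalities $\le0$ become $<0$. The success set then changes only on the boundary, and the same contrast matrix and kernel still apply.
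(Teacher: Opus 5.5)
Your proposal is correct and takes the same route as the paper, which also identifies conservative failure with some outside candidate attaining a value no larger than every inside candidate, and then reads this event as a finite union over outside points of finite intersections over inside points of pairwise gap inequalities. Your version makes the bookkeeping explicit: the min-versus-min equivalence with the tie convention, the contrast rows $e_i-e_j$ with $b_D=L_\epsilon F$, Borel measurability of the failure set, and the covariance $L_\epsilon K_mL_\epsilon^\top$ by bilinearity.
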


\begin{proof}
The conservative tolerant target fails exactly when some outside candidate belongs to the empirical minimizer set, equivalently when some outside candidate attains a value no larger than every inside candidate.  This event is a finite union over outside points and a finite intersection over inside points of pairwise gap inequalities, hence a finite gap decision.
\end{proof}

In practice, $L_\epsilon$ may have $|\calX\setminus\calX_\epsilon^\star|\,|\calX_\epsilon^\star|$ rows.  For estimation, one may restrict to the boundary band: outside points with $F(i)-F^\star$ near $\epsilon$ and inside points near the upper edge of $\calX_\epsilon^\star$.  This targets the decision-critical geometry while reducing the kernel dimension; it preserves the dominant finite-shot geometry only when the omitted contrasts have been checked or certified to have safely larger margins.

For a top-$k$ rule, let $S_k^\star$ be the ideal top-$k$ set.  With ties counted as failures unless a tie-breaking convention is specified, failure occurs if some $i\notin S_k^\star$ beats or ties some $j\in S_k^\star$ after mitigation:
\begin{equation}
    \widehat F_m(i)\le \widehat F_m(j).
\end{equation}
Thus the relevant contrasts are pairwise rows $e_i-e_j$, and the decision kernel is $L_{\mathrm{top}k}K_mL_{\mathrm{top}k}^\top$.

For ranking, with strict inversions and no ties, sum over ideal ordered pairs.  Equivalently, for pairs $(i,j)$ with $F(i)<F(j)$, the expected number of inversions is
\begin{equation}
    \E[I_m]=\sum_{(i,j):\,F(i)<F(j)}
    \Prob\left[\widehat F_m(i)-\widehat F_m(j)>0\right].
\end{equation}
If ties are counted as ranking failures, replace $>0$ by $\ge0$ in the corresponding pairwise event.  Again, only residual pairwise gaps matter.

For optimizer-step acceptance, a step is accepted if
\begin{equation}
    \widehat F_m(x_{\mathrm{new}})-\widehat F_m(x_{\mathrm{old}})<-\eta,
\end{equation}
with equality treated according to the declared tie or acceptance convention.  This is a one-dimensional gap decision.

\section{Sharpness witnesses}

\paragraph{Common-mode MSE inflation.}
Adding $C\1$ changes MSE but leaves all gap decisions unchanged.  This makes MSE arbitrarily pessimistic.

\paragraph{Same marginals, different risk.}
The two-point Gaussian construction in Theorem~\ref{thm:nogo_marginal} shows that identical pointwise distributions do not determine gap risk.

\paragraph{Same ambient MSE and one-dimensional gap probabilities, different finite-shot orthant risk.}
For three points, let the two gap errors be Gaussian
\begin{equation}
    G=(G_1,G_2)\sim\calN(0,\Sigma_\rho),
    \qquad
    \Sigma_\rho=s^2
    \begin{pmatrix}
        1 & \rho\\
        \rho & 1
    \end{pmatrix},
    \qquad -1<\rho<1.
\end{equation}
Lift this gap covariance to an ambient residual field by taking, for example,
\begin{equation}
    E_0=C,
    \qquad
    E_1=C+G_1,
    \qquad
    E_2=C+G_2,
\end{equation}
where $C$ is an independent common-mode Gaussian variable.  The one-dimensional gap laws $E_1-E_0=G_1$ and $E_2-E_0=G_2$ are independent of $\rho$, and the normalized ambient MSE is also independent of $\rho$ up to the freely adjustable common-mode contribution from $C$.  Nevertheless, the finite-shot orthant failure probability
\begin{equation}
    \Prob\left[G_1\le -\Delta\ \text{or}\ G_2\le -\Delta\right]
\end{equation}
changes with $\rho$ by the Gaussian comparison mechanism.  Thus ambient MSE and one-dimensional gap probabilities do not determine joint decision risk.

\section{Reference table: ambient space versus decision space}

\begin{center}
\small
\setlength{\tabcolsep}{4pt}
\begin{tabular}{>{\raggedright\arraybackslash}p{0.23\linewidth}
                >{\raggedright\arraybackslash}p{0.20\linewidth}
                >{\raggedright\arraybackslash}p{0.19\linewidth}
                >{\raggedright\arraybackslash}p{0.28\linewidth}}
\toprule
Object & Space & Uses covariance? & Decision-complete?\\
\midrule
MSE/RMSE & ambient $\R^n$ & diagonal only & no\\
Pointwise CI & ambient $\R^n$ & diagonal only & no\\
Gap variance & quotient/gaps & partial & partial\\
$\Sigma=LKL^\top$ & quotient/gaps & yes & second-order Gaussian\\
$(\mu,\Sigma)$ & decision space & yes & complete within Gaussian model\\
$\mathcal L(LE)$ & decision space & full law & yes\\
Gaussian decision exponent $I$ & failure geometry & yes & asymptotic rate\\
Two-point barrier $I_{\argmin}^\star$ & Gaussian gaps & diagonal gap variance & local converse\\
\bottomrule
\end{tabular}
\end{center}

\end{document}